\def\DRAFT{}
\newtheorem{theorem}{Theorem}
\newtheorem{lemma}{Lemma}
\newtheorem*{lemma*}{Lemma}
\newtheorem*{axiom*}{Axiom}
\newtheorem{proposition}{Proposition}
\newtheorem{claim}{Claim}
\newtheorem{corollary}{Corollary}
\newtheorem*{theorem*}{Theorem}
\newtheorem{definition}{Definition}
\newtheorem*{definition*}{Definition}
\theoremstyle{remark}
\newtheorem{remark}{Remark}
\DeclareMathOperator\supp{supp}
\DeclareMathOperator\g{>_{FOSD}}
\newcommand{\ra}{\Rightarrow}
\newcommand{\R}{\mathbb{R}}
\newcommand{\Q}{\mathbb{Q}}
\newcommand{\N}{\mathbb{N}}
\newcommand{\E}{\mathbb{E}}
\newcommand{\ubar}[1]{\underaccent{\bar}{#1}}
\def\dd{\mathrm{d}}
\def\fin{\mathrm{finite}}
\def\ee{\mathrm{e}}
\def\LQRE{\mathrm{LQRE}}
\def\LP{\mathrm{LQRE_{\lambda \Phi}}}
\def\Nash{\mathrm{Nash}}
\newcommand\norm[1]{\left\Vert#1\right\Vert}
\def\cA{\mathcal{A}}
\def\cO{\mathcal{O}}
\DeclarePairedDelimiter\ceil{\lceil}{\rceil}
\DeclarePairedDelimiter\floor{\lfloor}{\rfloor}
\DeclareMathOperator*{\argmax}{\arg\max}
\DeclareMathOperator*{\argmin}{\arg\min}
\DeclareMathOperator*{\linf}{lim\!\,inf}
\DeclareMathOperator*{\lsup}{lim\!\,sup}
\definecolor{ForestGreen}{rgb}{.13,.54,.13}
\definecolor{violet}{cmyk}{0.79,0.88,0,0}
\definecolor{darkmagenta}{rgb}{0.55, 0.0, 0.55}
 \newcommand{\fed}[1]{{\color{red}{(\textbf{Fedor:} #1)}}}
 \newcommand{\omer}[1]{{\color{blue}{(\textbf{Omer:} #1)}}}
 \newcommand{\benpo}[1]{{\color{darkmagenta}{(\textbf{Ben \& Po:} #1)}}}
\definecolor{darkBrickRed}{rgb}{.70,.13,.16}
\newcommand{\fed}[1]{}
\newcommand{\omer}[1]{}
\newcommand{\benpo}[1]{}
\definecolor{redPart}{rgb}{1.0, 0.0, 0.0}
\definecolor{greenPart}{rgb}{0.0, 1, 0.0}
\definecolor{bluePart}{rgb}{0.0, 0.0, 1.0}
\definecolor{yellowPart}{rgb}{1, 1, 0}
\definecolor{magentaPart}{rgb}{1, 0, 1}
\definecolor{cyanPart}{rgb}{0.0, 1, 1}
\title{Monotonicity and Bracketing in Games\footnote{
The paper benefited from 
our colleagues'
suggestions and comments. We are grateful
to the audience at the Cowles Foundation conference and Stony Brook, as well as Marina Agranov, James Bland, Daniel Chen, Drew Fudenberg, Elliot Lipnowski, Fabio Maccheroni, George Mailath, Yusufcan Masatlioglu, Ellen Muir, Roger Myerson, Tom Palfrey, Luciano Pomatto, Andrew Postlewaite, Doron Ravid, Joseph Root, Larry Samuelson, Eran Shmaya, Ran Shorrer, Charles Sprenger, Chloe Tergiman, William Thomson, and Leeat Yariv.
}}
\author{Fedor Sandomirskiy\thanks{Princeton University. Email: fsandomi@princeton.edu} \and
        Po Hyun Sung\thanks{Caltech. Email: psung@caltech.edu} \and
        Omer Tamuz\thanks{Caltech. Email: tamuz@caltech.edu. Omer Tamuz was supported by a BSF award (\#2018397), and a National Science Foundation CAREER award (DMS-1944153).} \and
        Ben Wincelberg\thanks{Caltech. Email: bwincelb@caltech.edu}}
\date{\today}
\begin{document}

\maketitle
\begin{abstract}

We study solution concepts for normal-form games. 
We obtain a characterization of Nash equilibria and logit quantal response equilibria, as well as generalizations capturing non-expected utility. Our axioms reflect that players are responsive to payoffs induced by the play of others and, whenever several games are played simultaneously, 
players may consider each separately.

\end{abstract}
\vskip 1cm

{\small \noindent\textbf{Keywords:} Solution Concepts, Axiomatic Characterization, Bracketing,  Nash Equilibrium, Logit Quantal Response Equilibrium, Statistic Response Equilibrium.}

\section{Introduction}

Consider a player who plays both poker and rock-paper-scissors at a game night. Since these are separate games, a simple modeling choice for describing her decision-making is to assume that she chooses her strategies independently in the two games. If we use Nash equilibrium as our solution concept in analyzing this composite game, then we are guaranteed to have a solution that is of this form  (alongside other equilibria, in which strategies are correlated). In this paper, we explore solution concepts that likewise do not rule out independent play in composite games.

We say that such solution concepts \emph{permit bracketing}. This allows players to treat each game separately, simplifying their decision-making. Bracketing has been extensively documented in decision problems \citep*{read1999choice}, and also in strategic settings: in \cite*{bland2019many}, most subjects bracket when playing a composite game even when doing so is suboptimal. In our case, however, bracketing is consistent with rationality, as we only apply it in separable settings. It also excludes no behavior, because we do not require players to bracket in every solution. Accordingly,  bracketing can be viewed as a richness assumption on the set of solutions in separable environments: it allows the analyst to ignore irrelevant factors when modeling the situation of interest, without requiring every solution to take a separable form.

By itself, our bracketing assumption is not very strong 
as it does not rule out other behavior and only applies to composite games.
We augment bracketing with  monotonicity/rationality assumptions that capture the idea that players anticipate the behavior of others and respond to the strategies of others, playing actions that yield higher payoffs more often. Our main monotonicity axiom is \emph{distribution-monotonicity}, which posits that players are less likely to play actions that yield stochastically-dominated payoff distributions.

A stronger axiom, which we study to build intuition, is \emph{expectation-monotonicity}, which posits that players are less likely to play actions that yield lower expected payoffs. In our first result, we consider solution concepts that permit bracketing and satisfy expectation-monotonicity (Theorem~\ref{th_Nash QRE}). We demonstrate that these axioms characterize Nash equilibrium and logit quantal response equilibrium (LQRE), providing a novel unifying perspective on these established solution concepts. 
The broader family of quantal response equilibria (QRE) has been criticized for being overly permissive \citep*{haile2008empirical}.\footnote{This critique applies to QRE with unrestricted quantal response functions \citep*{mckelvey1995quantal}. Restrictions have thus been imposed on quantal response functions to obtain 
falsifiable restrictions on behavior; see \cite*{goeree2005regular}.}
Our axioms single out the one-parameter logit subfamily, offering a
justification for the use of LQRE and elevating it beyond its typical role of a computationally convenient choice.

In our main result, we consider the more general case of solution concepts that permit bracketing and satisfy distribution-monotonicity. This leads to a characterization of new solution concepts that incorporate non-expected-utility in equilibrium behavior, and yet retain much of the conceptual appeal of Nash equilibria and LQRE (Theorem~\ref{th_sre}). In these solution concepts, which we call \emph{statistic response equilibria}, players respond to a statistic: a single number assigned to a distribution of payoffs. Players use the same statistic in every game, and so they behave as if they have a preference over lotteries that applies across various strategic situations. When the statistic is the expectation, the solution concept boils down to Nash or LQRE. More generally, the statistic is a monotone additive statistic, \`a la~\cite*{mu2024monotone}.

The class of statistic response equilibria is broad, and we characterize a parametric subclass that maintains flexibility. Using a scale-invariance axiom, we characterize a three-parameter family of statistic response equilibria, which we expect to be useful in estimating empirical models of games (Theorem~\ref{th_complex}). In these \emph{Min-Max-Mean response equilibria}, players logit best respond to a convex combination of the minimum, maximum, and expectation of the payoff distributions. 
\medskip

In conclusion, our paper develops a conceptual framework that grounds commonly used solution concepts and motivates new generalizations of them.
For theorists, we provide a flexible approach to defining non-expected-utility equilibrium concepts that preserve key properties of classical equilibrium notions. In particular, our axioms imply that behavior is driven by a single preference over lotteries that is stable across games. For empiricists, we offer guidance in selecting models that are based on the established principles of rationality and bracketing. These models are parsimonious, yet flexible enough to accommodate a range of behavior.

\subsection{Related literature}
We contribute to a large body of literature on the axiomatic approach in economic theory.
This methodology has been used extensively to characterize solutions with desirable properties for cooperative games, bargaining,
and mechanism design problems; see surveys by \cite*{moulin1995cooperative}, \cite*{roth2012axiomatic}, and \cite*{thomson2023axiomatics}. In non-cooperative game theory, the axiomatic approach has been
applied primarily toward choosing equilibrium refinements \citep*{harsanyi1988general,norde1996equilibrium, govindan2012axiomatic}.

Our work contributes to a smaller strand of axiomatic literature that aims to characterize solution concepts for non-cooperative games. \cite*{peleg1996consistency} characterize Nash equilibrium and some of its refinements by assuming that players only care about expected payoffs, pick the highest-utility actions in one-player games, and, if some players are replaced with dummies playing their predicted strategies, the behavior of the remaining players remains consistent with the original prediction. This characterization highlights the population-consistency aspect of Nash equilibrium; see also \cite[Chapter 8]{thomson2024consistent} and \cite*{norde1996equilibrium} for further discussion. 

More recently,  \cite*{brandl2024axiomatic} provided a novel axiomatic characterization of Nash equilibria. Their axioms focus on how players' behavior is affected by negligible changes in the strategic environment. We discuss in depth the connection between their axioms and ours in Appendix~\ref{sec_bnb}, where we also explain which of their axioms are satisfied by our statistic response equilibria.
Other axiomatic results include \cite*{brandl2019justifying}, who characterize maximin strategies in zero-sum games, as well as \cite*{babichenko2014axiomatic} and \cite*{voorneveld2019axiomatization}, where pure Nash equilibria are derived using a strategic invariance axiom similar to the one in our Appendix~\ref{app_strategic_eq}.

The key axiom in our approach is  bracketing. This term is often used in a broader sense than in our paper and includes neglect of interactions between related choices. It has been extensively studied in the context of individual decisions, where much of the literature treats  bracketing as a behavioral bias \citep*[see, e.g.,][]{read1999choice,barberis2006individual}. \cite*{bland2019many} experimentally documents bracketing in games, and specifically in composite games.
Our paper belongs to a recent literature offering a rational perspective on  bracketing. \cite*{kHoszegi2020choice} justify  bracketing as rational behavior in a model with costly attention, and \cite*{camara2022computationally} offers a computational complexity justification.

The basic idea behind the proof of Theorem~\ref{th_Nash QRE} originated in our previous paper~\citep*{sandomirskiy2025iid}, where we study single-agent decisions and characterize random coefficients logit. The main axiom in that paper (independence of irrelevant decisions, or IID) stipulates that the prediction for composite decisions should have the same marginals as the prediction for the components. This is different from our bracketing axiom in several aspects. First, the bracketing axiom does not require every prediction to have the same marginals, but only that the solution concept includes one such prediction. Second, in that one prediction we also require independence, which is not part of the IID axiom. Accordingly, the key difficulty in \cite*{sandomirskiy2025iid} is handling correlated decisions, which is achieved via the De Finetti theorem. In this paper, the main difficulty is in the proof of Theorem~\ref{th_sre}, and stems from the endogeneity of randomness.

\cite*{luce1959individual} used the classical independence of irrelevant alternatives (IIA) axiom to characterize multinomial logit choice in decision problems, which is equivalent to LQRE in single-player games. As we explain at the end of \S\ref{sec_3}, our axioms are conceptually and technically different from IIA. While one could expect that LQRE in games could also be characterized using a modification of IIA, we highlight some difficulties in this approach for more than one player.

Our first set of results provides a rationality-based justification for the logit subfamily of quantal response equilibria (QRE).
Introduced by \cite*{mckelvey1995quantal}, QRE has been empirically successful at explaining  deviations from Nash equilibrium predictions across a wide range of experiments \citep*{goeree2016quantal,goeree2020stochastic}. The axiomatic approach has been applied to define non-parametric subclasses of QRE such as regular QRE \citep*{goeree2005regular} and (a)symmetric QRE \citep* {friedman2022quantal} by imposing axioms directly on the quantal response functions. These subclasses impose falsifiable restrictions on behavior, unlike general QRE criticized by \cite*{haile2008empirical}. By contrast, we do not take QRE as a starting point, and instead axiomatize solution concepts rather than response functions.  Furthermore, our results differ in that they pin down the one-parameter class of logit QRE in particular, providing a novel justification for a quantal response function that has been widely used to analyze empirical data \citep*[see, e.g.,][]{goeree2016quantal,wright2017predicting,goeree2020stochastic}.

Our second set of results gives rise to a novel solution concept---statistic response equilibria---where players respond to a {monotone additive statistic} of payoffs, a class of statistics characterized by \cite*{mu2024monotone}.
These equilibria
capture various attitudes toward the uncertainty induced by other players' mixed strategies, including the ambiguity-averse multiplier preferences  \citep*{hansen2001robust,strzalecki2011axiomatic}.
Importantly, these attitudes emerge from our axioms rather than being assumed a priori.
This contrasts with the literature on equilibrium concepts that incorporate specified risk attitudes by transforming game payoffs according to some utility function reflecting such an attitude \citep*{goeree2003risk, yekkehkhany2020risk}. In fact, the risk attitudes resulting from our characterization cannot be replicated by any such transformation of payoffs. In this way, we contribute to the literature on games with non-expected-utility preferences \citep*{shalev2000loss,metzger2019non}. An equilibrium concept related to statistic response equilibria has appeared in the computer science literature, where it has been shown to be efficiently computable under certain conditions \citep*{mazumdar2024tractable}.

\section{Solution Concepts and Axioms}\label{sec_model}

We consider finite normal form games played between a fixed set of $n\geq 2$ players~$\{1,\ldots,n\}$. Such a game $G=(A,u)$ is given by its finite set of action profiles $ A=\prod_i A_i$ and its payoff function $u\colon A \to \R^n$, where $A_i$ is the set of actions of player~$i$ and $u_i(a)$ is the payoff to player~$i$ when an action profile $a$ is played. 

A mixed strategy $p_i \in \Delta A_i$ of player~$i$ is a probability distribution over $A_i$. A mixed strategy profile $p=(p_1,\ldots, p_n)$ induces a product distribution over $A$, also denoted by $p$, with its marginal on $A_{-i}=\prod_{j\ne i} A_j$ denoted $p_{-i}$.  Given $p$, we write $u_i(a_i,p_{-i})$ for the lottery faced by player~$i$ when playing action $a_i$, while other players use mixed strategies~$p_{-i}$. The corresponding expected payoff is denoted by $\mathbb{E}[u_i(a_i, p_{-i})]$.

\smallskip

A \textit{solution concept} $S$ assigns to each game $G$ a nonempty set $S(G) \subset \prod_i \Delta A_i$ of mixed strategy profiles, referred to as \textit{solutions}.\footnote{There is a technical nuance that can be safely ignored without missing the gist of the paper: the collection of all finite sets is not a set, and neither is the collection of all finite games. Hence, for a solution concept to be a well-defined correspondence, we assume that all actions available to any player in any game belong to a universal, non-empty set of actions $\cA$. 
We also suppose that $\cA$ is closed under pairing, so that~$\cA\times \cA \subseteq \cA$.} 
One can think of these solutions as potential predictions for player behavior in $G$. We require that every game be assigned at least one solution. For instance, the map $S(G)=\prod_i \Delta A_i$ assigning to each~$G$ the set of all mixed strategy profiles is a valid---albeit uninformative---solution concept. A more important example of a solution concept is $\Nash(G)$, the correspondence that assigns to $G$ the set of (mixed, including pure) Nash equilibria. Given $\lambda \geq 0$ and a game $G=(A,u)$, the logit quantal response equilibrium correspondence is given by\,\footnote{In a general QRE $p$, each $p_i$ is obtained by applying to the vector of expected payoffs $(\E[u_i(a_i,p_{-i})])_{a_i\in A_i}$ a quantal response function: a continuous map assigning to each payoff vector a mixed strategy \citep*{mckelvey1995quantal}. A QRE is \emph{regular} if this map satisfies monotonicity (actions yielding higher expected payoffs are played with higher probabilities), interiority (full support), and responsiveness (the probability of an action strictly increases in its own expected payoff) \citep*{goeree2005regular}.}
\[
\LQRE_{\lambda}(G)=\Big\{p \in \Pi_i \Delta A_i \mid p_i(a_i) \propto \exp\big(\lambda\E [u_i(a_i, p_{-i})]\big) \; \forall i\Big\}. 
\]
We use $p_i(a_i) \propto \exp(\lambda\E [u_i(a_i, p_{-i})])$ to indicate equality up to normalization of the probabilities.

To compare solution concepts by how specific they are, we say that a solution concept $S$ is a \textit{refinement} of $S'$ if $S(G) \subseteq S'(G)$ for all games~$G$. For example, the solution concept assigning to each game its set of trembling hand perfect equilibria is a refinement of $\Nash$.

\bigskip

Our first axiom is a rationality assumption reflecting that players tend to choose actions with higher payoffs more often.
Given a distribution over actions $p$,
we say that an action $a_i \in A_i$ \emph{first-order dominates} $b_i \in A_i$ if  
the lottery $u_i(a_i,p_{-i})$ first-order stochastically dominates $u_i(b_i,p_{-i})$, which we denote by $u_i(a_i,p_{-i})\geq_{\text{FOSD}}u_i(b_i,p_{-i})$. If moreover the distributions of $u_i(a_i,p_{-i})$ and $ u_i(b_i,p_{-i})$ are not identical, we say that $a_i$ \emph{strictly first-order dominates} $b_i$ and write $u_i(a_i,p_{-i}) >_{\text{{FOSD}}} u_i(b_i,p_{-i})$.\footnote{First-order dominance means that for any threshold $t$, the probability that $u_i(a_i,p_{-i})$ yields a payoff greater than $t$ is at least as high as the probability that $u_i(b_i,p_{-i})$ yields a payoff greater than $t$. Equivalently,   $\E\big[f\big(u_i(a)\big)\big| a_i\big]\geq \E\big[f\big(u_i(a)\big)\big|b_i\big]$ for any nondecreasing payoff transformation $f\colon \R\to\R$. Strict dominance corresponds to strict inequality for any strictly increasing~$f$.}

\begin{definition}
  A solution concept $S$ satisfies \emph{distribution-monotonicity} if $$u_i(a_i,p_{-i}) >_{\text{\emph{FOSD}}} u_i(b_i,p_{-i})\qquad \text{implies}\qquad p_i(a_i) \geq p_i(b_i)$$ for every game $G=(A,u)$, solution $p \in S(G)$, player~$i$, and  actions $a_i,b_i \in A_i$.
\end{definition}
In other words, distribution-monotonicity means that if one action strictly first-order dominates the other, the dominated action cannot be played with a higher probability. Both $\Nash$ and $\LQRE$ satisfy distribution-monotonicity, as do their refinements. Moreover, every regular QRE satisfies distribution-monotonicity.\footnote{\label{footnote:weak-weak}One could imagine a weak-weak version of distribution-monotonicity in which the weak inequality $u_i(a_i,p_{-i})\geq_{\mathrm{FOSD}}u_i(b_i,p_{-i})$ already implies $p_i(a_i)\geq p_i(b_i)$. It has the additional implication that players must break ties uniformly and, hence, is not satisfied by the Nash solution concept, but is satisfied by $\LQRE$. Likewise, a strong-strong version carries the implication that all actions, even strictly dominated ones, must be played with positive probability if they do not yield an FOSD-minimal distribution. Again, the strong-strong version is satisfied by $\LQRE$ but not $\Nash$.}

Distribution-monotonicity is an ordinal axiom that is invariant to monotone transformations of payoffs. Consequently, Nash equilibrium with monotone-reparameterized payoffs \citep*{weinstein2016effect} and risk-adjusted QRE under reparameterized payoffs \citep*{goeree2003risk} also satisfy distribution-monotonicity. Distribution-monotonicity is also satisfied by $S(K)$ equilibria \citep*{osborne1998games}, where players respond to independent draws from the payoff distribution induced by each action. 

This axiom includes two conceptual assumptions. First, it introduces a notion of rational expectations into a solution concept, in the sense that players anticipate the others' actions. Second, it captures a notion of rationality, because players prefer actions with higher payoffs. This is closer in spirit to equilibrium analysis than to rationalizability, which will be reflected in the solution concepts we characterize. Note, however, that this axiom does not entail the assumption that players can perfectly anticipate others' actions. Instead, it only requires that players do this well enough to ``get it right more often than not'' in the sense that they choose a dominating action with higher probability.

\bigskip

Our second axiom,  bracketing, asserts that players who are engaged in multiple unrelated games may consider each game independently. Formally, for games $G=(A,u)$ and $ H=(B,v)$, we define the \emph{composite game} $G \otimes H=(C,w)$ by
\begin{align*}
 C_i=
 A_i \times B_i
 ~~\text{ and }~~   w_i\left((a_{1},b_{1}),\ldots,(a_{n},b_{n})\right)=u_i(a)+v_i(b).
\end{align*}
In $G \otimes H$, each player  $i$ chooses an action $a_i$ from  $A_i$ and an action $b_i$ from $B_i$---effectively playing both games simultaneously---and earns the sum of payoffs from the two games. We call $G$ and $H$ \emph{component games} of the composite game $G \otimes H$.

To make sense of the summation of payoffs across different games, we of course need to think of payoffs as being quoted in the same units, across all games. If payoffs are money, the additive payoff structure of the composite game captures the unrelated nature of the component games, i.e., the action in one does not affect the payoff in the other. If payoffs are utilities, summing them calls for an additional explanation: ordinal utilities cannot be meaningfully added, and von Neumann--Morgenstern utilities capture preferences over lotteries, which are not a part of our approach. We instead adopt a different interpretation of utilities, which we develop in detail in Appendix~\ref{app_generalized_product_games}. The starting point is players' ordinal preferences over physical outcomes of strategic interactions. We derive an \emph{additive cardinalization}: a unique (up to scaling) cardinal representation of a preference that is additive over combined outcomes from interactions that are separable with respect to the underlying preference. In what follows, utilities refer to this additive cardinalization. Accordingly, the additive structure of composite games reflects that players perceive the interactions as separable.

\medskip 

Given mixed strategy profiles $p$ and $q$ for games $G$ and $H$, we define the mixed strategy profile $p \times q$ for the game $G \otimes H$ by 
\begin{align*}
    [p \times q]_i(a_i,b_i) = p_i(a_i) \cdot q_i(b_i).
\end{align*}
So, when players are playing $p \times q$ in $G\otimes H$, they independently choose strategies in $G$ from $p$ and in $H$ from $q$.

\begin{definition}
A solution concept $S$ permits \emph{bracketing} if, for all games $G$ and $H$,
$$p \in S(G)\quad\text{and}\quad q \in S(H)\qquad \text{implies}\qquad p \times q \ \in \  S(G \otimes H).$$
\end{definition}
When a solution concept permits bracketing, solutions of $G$ and $H$ can be combined into a solution of the composite game $G \otimes H$ by having players choose their actions independently in the two component games. Viewing solutions as predictions, our bracketing assumption requires that independently following the predictions in the component games constitutes a valid prediction for the composite game. Importantly, this assumption does not rule out the existence of other predictions for the composite game. Accordingly, rather than restricting the set of predictions that a solution concept gives for a composite game, the bracketing assumption ensures that the set of predictions is rich enough to contain independent predictions.\footnote{Indeed, starting from any solution concept, one can define its closure under bracketing by repeatedly adding all the products to the set of solutions of composite games.}
Notably, bracketing applies only to the very small class of composite games, and does not imply anything for a generic game, which will not be a composite game with exactly additive payoffs, even if the action sets do have a product structure.

The Nash correspondence permits bracketing. Note that not all Nash equilibria in composite games are products of equilibria in the component games, but these products do appear in the solution of the composite game, as required by  bracketing.

Many refinements of Nash also permit bracketing. 
These include maximal-entropy Nash equilibria,\footnote{The maximal-entropy equilibria are the Nash equilibria that maximize the Shannon entropy of the joint distribution of actions.} trembling hand perfect equilibria, and welfare-maximizing equilibria.  
However, not all refinements do. For example, minimal-entropy Nash equilibria---a natural extension of pure Nash equilibria to a non-empty correspondence---violate bracketing as entropy can be reduced by correlating unrelated choices.\footnote{Minimal-entropy Nash equilibria are defined similarly to maximal-entropy ones. To see that this solution concept does not permit bracketing, consider a composition of two copies of matching pennies. The minimal-entropy Nash assigns to matching pennies the unique Nash, and assigns to the composite game the mixed-strategy profiles in which both players perfectly correlate (or anti-correlate) their strategies across the two components. Thus, the product strategy does not appear in the solution of the composite game.} Likewise, a refinement obtained by eliminating Nash equilibria that are Pareto dominated by other Nash equilibria does not
permit  bracketing.\footnote{A standard intuition applies: Pareto optimal allocation in sub-markets may not give rise to a Pareto optimal allocation in the market itself due to beneficial trades across sub-markets.} This is a weakness of our approach, as Pareto optimality is a desirable criterion in many normative settings.

Alongside $\Nash$ equilibria, $\LQRE_\lambda$ also permits  bracketing.
In fact, this correspondence satisfies a stronger property: the set of solutions of a composite game is exactly the set of all products of solutions of the component games.

Other solution concepts that permit  bracketing include the rationalizable mixed strategies, welfare maximizing (or minimizing) mixed strategy profiles, level-$k$ models in which level-$0$ players choose uniformly, as well as cognitive hierarchy models with the same base choices. Probit QRE does not permit bracketing, and (as will follow from our results) neither does any regular QRE that is not logit. The set of all Pareto optimal mixed strategy profiles likewise does not permit bracketing.

We do not think of bracketing as a behavioral axiom, which assumes that players always act independently in composite games. Indeed, such an assumption would require every solution of a composite game to be a product. We think of this axiom as a desirable property of a model, from the point of view of an analyst who has to choose a solution concept. Bracketing is assumed explicitly or implicitly in many models. Allowing for independent behavior enables an analyst to exclude from the model the universe of unrelated factors and considerations, and write a parsimonious description of the situation of interest.

\bigskip

Finally, we also consider anonymity, a simplifying assumption which does not affect the essence of our results. Anonymity requires that permuting players' names results in the corresponding permutation of the solution concept's predictions---that is, the solution concept treats all the players in the same way.
Given a game $G=(A,u)$ and a permutation of players $\pi \colon \{1,\ldots,n\}\to \{1,\ldots,n\}$, define the permuted game $G_\pi = (B,v)$ by $B_i=A_{\pi(i)}$, and $v_i(a_{\pi(1)},\dots,a_{\pi(n)})=u_{\pi(i)}(a_1,\dots,a_{n})$ for all players $i$ and action profiles $a$. Each mixed strategy profile $p$ in $G$ yields a permuted profile $p_\pi$ in  $G_\pi$ by $(p_{\pi})_i=p_{\pi(i)}$.

\begin{definition}
A solution concept $S$ satisfies \emph{anonymity} if for any permutation $\pi$ of players, game~$G$, and its solution $p \in S(G)$, we have $p_\pi \in S(G_\pi)$.
\end{definition}

Our anonymity axiom is a symmetry assumption requiring that players' names do not matter.\footnote{Note that anonymity does not require that every solution to a symmetric game is symmetric, but instead that the set of solutions is symmetric. For example, if $G$ is symmetric then anonymity  means that if $(p_1,p_2) \in S(G)$, then so is $(p_2,p_1)$. 
} In contrast, we nowhere require that action names do not matter. In particular, our setting and axioms do not rule out framing effects, where players behave differently depending on how actions are labeled.

\section{Expectation-Monotonicity and Logit Quantal Response Equilibrium}\label{sec_3}

To develop our intuition for the interaction between bracketing and monotonicity, in this section we strengthen the distribution-monotonicity axiom, and study \emph{expectation-monotonicity}.

\begin{definition}
 A solution concept $S$ satisfies \emph{expectation-monotonicity} if 
 $$\E [u_i(a_i,p_{-i})]>\E [u_i(b_i,p_{-i})]\qquad \text{implies}\qquad p_i(a_i)\geq p_i(b_i)$$ for every game  $G=(A,u)$, solution $p\in S(G)$, player~$i$, and actions $a_i,b_i \in A_i$.
\end{definition} 
That is, actions with strictly lower expected payoffs cannot be played with a higher probability.
Conceptually, this axiom adds an expected utility assumption on top of the rational-expectations and rationality assumptions implied by distribution-monotonicity. Alternatively, if we think of payoffs as money rather than utilities, expectation-monotonicity reflects the assumption that players are risk-neutral. As with distribution-monotonicity, this axiom does not imply that players can calculate expected utilities exactly, but only that they do this well enough, in the sense that they choose actions with higher expected utility more often.

Expectation-monotonicity is satisfied by $\Nash$, $\LQRE_\lambda$, probit QRE, and, more generally, any regular QRE \citep*{goeree2005regular}, and M-equilibrium \citep*{goeree2021m}.\footnote{In fact, \cite{goeree2021m} show that, within any single game, the testable content of regular QRE reduces to a version of expectation-monotonicity.
} It is closed under refinements. The level-$k$ and cognitive hierarchy models do not satisfy this axiom, and neither does the set of rationalizable mixed strategies. Each of the three properties---bracketing, anonymity, and expectation-monotonicity---is satisfied by many well-known solution concepts.  Among the examples mentioned above, $\Nash$ and $\LQRE_\lambda$ satisfy all three, as do trembling hand Nash and maximum entropy Nash. The following theorem shows that any solution concept satisfying all of the three properties must include only Nash equilibria or only logit quantal response equilibria.
\begin{theorem}\label{th_Nash QRE}
    If $S$ permits bracketing and satisfies expectation-monotonicity and anonymity, then~$S$ is a refinement of $\Nash$, or of $\LQRE_\lambda$ for some $\lambda\geq 0$. 
\end{theorem}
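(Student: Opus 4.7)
My plan is to show that for every solution concept $S$ satisfying the three axioms, a single constant $\lambda \in [0,\infty]$ governs all solutions in the sense that for every $p \in S(G)$, player $i$, and actions $a,b \in A_i$,
\[
\log\frac{p_i(a)}{p_i(b)} \;=\; \lambda\bigl(\E[u_i(a,p_{-i})] - \E[u_i(b,p_{-i})]\bigr).
\]
Finite $\lambda$ reproduces $\LQRE_\lambda$; the limit $\lambda = \infty$ forces every solution to concentrate on best responses and hence to be a Nash equilibrium.

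The first step is to establish, for fully mixed solutions, that the ratio $p_i(a)/p_i(b)$ is a function $R(\Delta)$ of the expected payoff gap $\Delta := \E[u_i(a,p_{-i})] - \E[u_i(b,p_{-i})]$ alone, uniformly across games, players, and solutions. Given two such triples $(G,p,a,b)$ and $(G',q,a',b')$ with gaps $\Delta$ and $\Delta'$, bracketing places $p \times q$ in $S(G \otimes G')$. Within the composite, player $i$'s profiles $(a,b')$ and $(b,a')$ have expected payoff difference $\Delta - \Delta'$, while $(a,a')$ and $(b,b')$ have difference $\Delta + \Delta'$. By varying the auxiliary game $G'$ so that $\Delta - \Delta'$ is alternately positive or negative and invoking expectation-monotonicity in both directions, one concludes $R(\Delta) = R(\Delta')$ when $\Delta = \Delta'$, so $R$ is well-defined. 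Anonymity identifies the same $R$ across players. Applying the construction to $(a,a')$ versus $(b,b')$ then yields the Cauchy-type multiplicative identity
\[
R(\Delta + \Delta') \;=\; R(\Delta)\,R(\Delta'),
\]
and, since $R$ is weakly monotone by expectation-monotonicity, this equation on $\R$ forces either $R(\Delta) = \exp(\lambda \Delta)$ for some $\lambda \in [0,\infty)$---the $\LQRE_\lambda$ identity---or the degenerate step-function limit where $R$ vanishes on negative gaps and is infinite on positive gaps.

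The hard part will be the boundary behavior of $R$: in the Nash regime, $R$ is undefined at $\Delta = 0$, and the ratio $p_i(a)/p_i(b)$ is singular whenever an action has zero probability in some solution. My plan is to handle the dichotomy by cases---if any solution of any game places zero probability on an action whose payoff gap with some other played action is nonzero, then multiplicativity and monotonicity of $R$ propagate this degeneracy to force the step-function limit everywhere, pinning $S$ in the Nash regime; otherwise $R$ is finite and positive throughout $\R$ and we are in the LQRE regime with a uniquely determined $\lambda \geq 0$. A minor technical point is ensuring a sufficiently rich class of auxiliary games for the perturbation arguments to realize arbitrary $(\Delta, \Delta')$; this is immediate from the freedom to construct two-player games in which a dummy opponent has a single action, reducing player $i$'s problem to an arbitrary payoff matrix.
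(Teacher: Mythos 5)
Your overall architecture---a dichotomy between a Nash regime and an $\LQRE_\lambda$ regime, a multiplicative Cauchy equation for the odds ratio as a function of the expected-payoff gap, and monotonicity pinning down the exponential---matches the paper's. But the step on which everything rests, the well-definedness of $R(\Delta)$, has a genuine gap. Expectation-monotonicity is silent on ties: when the composite profiles $(a,b')$ and $(b,a')$ have exactly equal expected payoffs it yields no inequality at all, and your proposed fix of perturbing the auxiliary game so that the gap is alternately positive and negative produces one-sided inequalities involving the solutions $q_\epsilon$ of the \emph{perturbed} games, not of $G'$ itself. Since $S$ is an arbitrary correspondence with no continuity or upper-hemicontinuity assumption, there is no way to pass to the limit $\epsilon\to 0$ and recover an equality for $q$. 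The same problem affects your dichotomy: the claim that ``multiplicativity and monotonicity of $R$ propagate this degeneracy'' invokes properties of a function $R$ whose existence is exactly what is in question.

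The paper closes both gaps with a single device that is missing from your outline: $n$-fold self-composition as an amplifier. Fix the one test game $G_1$ in which player $1$ chooses between payoffs $1$ and $0$ against dummies, and let $p\in S(G_1)$. If $p_1(\ell)=0$, then for any game $H$, any $q\in S(H)$, and any action $a$ played with positive probability despite a strictly better alternative $b$, bracketing places $q^n\times p$ in $S(H^n\otimes G_1)$; for $n$ large the profile $(b,\dots,b,\ell)$ strictly beats $(a,\dots,a,h)$ in expectation yet has probability zero while $(a,\dots,a,h)$ has positive probability, violating expectation-monotonicity. This is how the degeneracy propagates, yielding the Nash regime without ever needing $R$. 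If instead $p_1(\ell)>0$, the same trick run in the other direction---choosing $n$ so that $(q_1(b)/q_1(a))^n$ exceeds the fixed ratio $p_1(h)/p_1(\ell)$---converts any violation of interiority or of equal-probabilities-on-equal-expectations into a violation of expectation-monotonicity in a composite game, because a single strict payoff gap of size $1$ is then outweighed by an arbitrarily lopsided probability ratio. Only after interiority and this expectation-neutrality are established is your $R$ well defined and the Cauchy-equation argument sound. Adding this amplification lemma would make the rest of your outline go through essentially as in the paper.
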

Theorem \ref{th_Nash QRE} gives yet another piece of evidence for the importance of Nash equilibria. It also provides a novel justification for the particular logit form of QRE, beyond its tractability.
Furthermore, Theorem~\ref{th_Nash QRE} establishes a connection between Nash and $\LQRE_\lambda$. Notice that $\Nash$ is not merely a limiting case of $\LQRE_\lambda$ as $\lambda\to \infty$; \cite*{mckelvey1995quantal} show that limit points of $\LQRE_\lambda$ are Nash equilibria, but not every Nash equilibrium can be obtained as a limit point of $\LQRE_\lambda$.
Indeed, there is an interesting distinction between logit quantal response equilibria (and their limit points) and Nash equilibria. While all logit quantal response equilibria of a composite game are products of equilibria of its component games, there exist Nash equilibria of composite games that do not satisfy this property. In fact, Nash equilibria exhibit a very rich correlation structure: 
any strategy profile of a composite game that marginalizes to Nash equilibria of its component games constitutes a Nash equilibrium.\footnote{By a result of \cite{solan2020logit}, the whole set of Nash equilibria in a game $(A,u)$ can be obtained as limit points of $\LQRE_\lambda$ as $\lambda \to \infty$, augmented with a vanishing perturbation of the payoff function~$u$. Accordingly, one may view $\Nash$ itself (rather than the limit points of $\LQRE_\lambda$ for fixed $u$) as the natural (yet uncommon) definition of $\LQRE_\infty$. Under this convention, the conclusion of Theorem~\ref{th_Nash QRE} reads simply: $S$ is a refinement of $\LQRE_\lambda$ for some $\lambda \in [0,\infty]$.}

Another property common to $\Nash$ and $\LQRE_\lambda$ is that they assign the same set of solutions to strategically equivalent games. These are games with the same action sets and the same marginal utilities of switching from one action to the other. We explore this notion in Appendix~\ref{app_strategic_eq} and demonstrate that the requirement of expectation-monotonicity in Theorem~\ref{th_Nash QRE} can be relaxed to distribution-monotonicity, together with the requirement that strategic equivalence is respected by the solution concept. This result further highlights the importance of these solution concepts and their connection.

\bigskip

The theorem is proved in Appendix~\ref{ap_A}. We illustrate some of the ideas here. Suppose $S$ satisfies expectation-monotonicity,  bracketing, and anonymity. 
Consider a class of games, where all players but player~$1$ are dummies facing no strategic choice, and player~$1$ chooses between two actions, one of which is dominant. In such a game $G_x$, indexed by $x>0$, the action sets are $A_1=\{h,\ell\}$ and $A_i=\{c\}$ for~$i\neq 1$, and
the payoff for the first player is $x$ when playing $h$ and $0$ when playing $\ell$. We will refer to games $G_x$ as \emph{test games} as the behavior of $S$ on $G_x$ turns out to determine its behavior elsewhere.

Consider a solution $p \in S(G_1)$.
We consider two cases depending on whether the dominated action is excluded ($p_1(\ell)=0$) or not ($p_1(\ell)>0$). In the former case, it turns out that $S$ is a refinement of $\Nash$, and in the latter, a refinement of $\LQRE$.
The case of $p_1(\ell) = 0$ is treated in the following claim:
\begin{claim}
\label{cl_dominated}
If $p_1(\ell)=0$, then $S$ is a refinement of Nash equilibrium.
\end{claim}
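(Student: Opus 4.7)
The plan is to proceed by contradiction. Suppose some game $G=(A,u)$ admits a solution $q\in S(G)$ that fails to be a Nash equilibrium, so there exist a player $i$ and actions $a_i,b_i\in A_i$ with $q_i(b_i)>0$ and $\Delta := \E[u_i(a_i,q_{-i})]-\E[u_i(b_i,q_{-i})]>0$. Expectation-monotonicity by itself only delivers the weak comparison $q_i(a_i)\geq q_i(b_i)$, so the task is to import the stronger behavior on the test game (where the dominated action has probability exactly zero) into $G$ via bracketing, and turn it into a contradiction.

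First, I would use anonymity to transport the hypothesis to the player who matters. Let $\pi$ be the permutation swapping players $1$ and $i$, set $G_1^i:=(G_1)_\pi$, and take $p^i := p_\pi\in S(G_1^i)$. In the test game $G_1^i$, only player $i$ has a nontrivial choice, between actions $h_i$ and $\ell_i$ with payoffs $1$ and $0$, and by construction $p^i_i(h_i)=1$ and $p^i_i(\ell_i)=0$.

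The key step is to amplify the gap $\Delta$ before combining with the test game. I would fix an integer $k$ with $k\Delta>1$, form the composite $G^*:=G^{\otimes k}\otimes G_1^i$, and iterate bracketing to obtain $q^{\times k}\times p^i\in S(G^*)$. In $G^*$, compare player $i$'s actions $\alpha := (a_i,\ldots,a_i,\ell_i)$ and $\beta := (b_i,\ldots,b_i,h_i)$. Against the opposing strategy profile $q_{-i}^{\times k}\times (p^i)_{-i}$, their expected payoffs are $k\E[u_i(a_i,q_{-i})]+0$ and $k\E[u_i(b_i,q_{-i})]+1$ respectively, differing by $k\Delta-1>0$. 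So $\alpha$ strictly dominates $\beta$ in expectation within $G^*$.

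Expectation-monotonicity applied to $q^{\times k}\times p^i\in S(G^*)$ then forces $q_i(a_i)^k\cdot p^i_i(\ell_i)\ \geq\ q_i(b_i)^k\cdot p^i_i(h_i)$, which collapses to $0\geq q_i(b_i)^k$, contradicting $q_i(b_i)>0$. The main obstacle, and the point of the amplification, is that the test game supplies only a unit-sized payoff advantage for $h_i$, while the gap $\Delta$ in $G$ can be arbitrarily small; taking $k$ independent copies of $G$ is the device that lets a single copy of $G_1^i$ be decisive in the composite comparison. Once this scaling trick is in place, the rest is bookkeeping with bracketing, anonymity, and expectation-monotonicity.
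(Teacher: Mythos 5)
Your proof is correct and follows essentially the same route as the paper: anonymity to place the deviating player at the test game's active seat, an $n$-fold (your $k$-fold) composition of the offending game with one copy of the test game to amplify the payoff gap past the unit advantage of $h$ over $\ell$, and expectation-monotonicity applied to the product solution to force $0 \geq q_i(b_i)^k$. The only cosmetic differences are the swapped roles of the action labels and that you permute the test game rather than the game $G$.
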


\begin{proof}
The solution $p\in S(G_1)$ with $p_1(\ell)=0$ coincides with the $\Nash$ prediction for this test game. We now show that for any game $H=(B,v)$, each $q\in S(H)$ is a Nash equilibrium as well. For the sake of contradiction, suppose $\E[v_i(b,q_{-i})]>\E[v_i(a,q_{-i})]$ while $q_i(a) > 0$ for some player~$i$ and some actions $a,b \in B_i$. By anonymity, we can assume that $i=1$. Choose \[n>\frac{1}{\E[v_1(b,q_{-1})]-\E[v_1(a,q_{-1})]},\] and consider the composite game $H^n\otimes G_1$, where $H^n$ denotes the $n$-fold composition $H \otimes \cdots \otimes H$.
By  bracketing, $q^n \times p$ (where  $q^n=
 q \times \cdots \times q$) is a solution for this composite game. In the composite game, the expected payoff to player~1 for playing $b$ in every copy of $H$ and playing $\ell$ in $G_1$ is strictly higher than playing $a$ in every copy of $H$ and playing $h$ in $G$. This violates expectation-monotonicity since $q^n \times p$ places positive probability on $(a,\ldots,a,h)$ and zero probability on $(b,\ldots,b,\ell)$, since $p_1(\ell)=0$. Hence players never play strategies that yield lower expected payoffs, i.e., they play a Nash equilibrium.
\end{proof}

We now discuss the case $p_1(\ell) 
>0$, which corresponds to $\LQRE_\lambda$. Unlike $\Nash$, this rule is sensitive to magnitudes of payoff differences. As we explain below, the exponential function capturing this sensitivity originates from an equation linking $S(G_x)$ across test games $G_x$, while the constant $\lambda$ is determined by the solution $p$ to any single test game $G_x$.

As the first step toward deriving an equation linking behavior across games, we show in the appendix that $p_1(\ell) >0$ implies two general properties of $S$: interiority and expectation-neutrality. A solution concept $S$ satisfies \emph{interiority} if every action in every game is played with positive probability, that is, for every game $G=(A,u)$, solution $p \in S(G)$, and player~$i$, we have $p_i(a_i)>0$ for each $a_i \in A_i$. It satisfies \emph{expectation-neutrality} if actions yielding equal expected payoffs are played with equal probabilities:  $\E [u_i(a_i,p_{-i})] = \E [u_i(b_i,p_{-i})]$ implies $p_i(a_i)= p_i(b_i)$.

Next, we consider three test games $G_x,G_y$ and $G_{x+y}$ for arbitrary payoffs $x$ and $y$ and derive an equation relating solutions assigned by~$S$ to these games. This step is an adaptation to our setting of the proof strategy used by \cite*{sandomirskiy2025iid} for single-valued predictions in a single-agent setting. We illustrate the core argument further assuming that the solution $S$ is single-valued, and let $r_x$ denote the probability with which player~$1$ chooses the dominant action  $h$ according to $S(G_x)$.
In the composite game $G_x \otimes G_y \otimes G_{x+y}$, the action tuples $(h, h, \ell)$ and $(\ell, \ell, h)$ both give player~$1$ a total payoff of $x + y$. By expectation-neutrality, these two profiles must be chosen with equal probability, leading to the identity
$$
r_x r_y \big(1 - r_{x+y}\big) = (1 - r_x)(1 - r_y)r_{x + y}.
$$
By interiority, none of the factors is zero, so we can take logarithms on both sides. Defining 
$
f(x) = \log\big({r_x}/{(1 - r_x)}\big),
$
we obtain the Cauchy functional equation $f(x) + f(y) = f(x + y)$. The function $f$ must be monotone to be compatible with expectation-monotonicity. Since the only monotone solutions to the Cauchy equation are linear functions, \( f(x) = \lambda\cdot  x \), where $\lambda \geq 0$ is uniquely determined by the probability of $h$ in any game  $G_x$. We conclude that $S$ is a refinement of $\LQRE_\lambda$ on the family of test games $G_x$. The details of this argument and its extension to arbitrary games $G$ are contained in the appendix.

\medskip
 We note that anonymity is not a crucial assumption for Theorem~\ref{th_Nash QRE}, but rather ensures that all players behave alike. Without anonymity, we get a version of $\LQRE_\lambda$ with player-specific $\lambda_i$ instead of a common $\lambda$, as well as chimeric rules where some agents use logit best responses and others best-respond as in Nash equilibrium.\footnote{\cite*{mckelvey2000effects} extended the QRE framework to allow for such $\lambda$-heterogeneity. Interestingly, they experimentally reject the hypothesis that $\lambda$ is fixed across games, which is a consequence of our axioms.} As a corollary, we do not require anonymity to characterize refinements of Nash equilibrium (even though it is an anonymous solution concept).

\begin{corollary}\label{cor_only Nash}
   If $S$ permits bracketing, satisfies expectation-monotonicity, and players never play strictly dominated actions, then $S$ is a refinement of $\Nash$. 
\end{corollary}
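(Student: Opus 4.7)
The plan is to adapt the proof of Claim~\ref{cl_dominated} so that it no longer relies on anonymity, using instead the hypothesis that strictly dominated actions carry probability zero. The anonymity axiom was used in that proof only to reduce to the case $i=1$, i.e., to apply a single test game $G_x$ in which \emph{player~1} faces the dominated action. To remove this reduction, I will construct an analogous test game for every player~$i$, and verify that in each such game, the dominated action is killed by the new hypothesis rather than by any structural property of the single solution of $S(G_1)$.

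Concretely, for each player $i$ and each $x>0$, let $G^{(i)}_x=(A,u)$ be the game with $A_i=\{h,\ell\}$, $A_j=\{c\}$ for $j\ne i$, and payoffs $u_i(h,c,\ldots,c)=x$, $u_i(\ell,c,\ldots,c)=0$, $u_j\equiv 0$ for $j\ne i$. Since all other players are dummies, action $\ell$ is strictly dominated by $h$ for player~$i$. By hypothesis, any solution $p^{(i)}\in S(G^{(i)}_x)$ satisfies $p^{(i)}_i(\ell)=0$ and hence $p^{(i)}_i(h)=1$.

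Now fix an arbitrary game $H=(B,v)$ and a solution $q\in S(H)$. Suppose toward contradiction that there exist a player $i$ and actions $a,b\in B_i$ with
\[
\E[v_i(b,q_{-i})]>\E[v_i(a,q_{-i})]\qquad\text{and}\qquad q_i(a)>0.
\]
Choose a positive integer $n$ so large that $n\bigl(\E[v_i(b,q_{-i})]-\E[v_i(a,q_{-i})]\bigr)>x$, and consider the composite game $H^n\otimes G^{(i)}_x$. Applying bracketing repeatedly yields
\[
q^n\times p^{(i)}\;\in\; S\bigl(H^n\otimes G^{(i)}_x\bigr).
\]
In this composite game, player $i$'s expected payoff against the strategies $q^n_{-i}\times p^{(i)}_{-i}$ from playing the profile $(b,\ldots,b,\ell)$ equals $n\,\E[v_i(b,q_{-i})]$, which by the choice of $n$ strictly exceeds the expected payoff $n\,\E[v_i(a,q_{-i})]+x$ from the profile $(a,\ldots,a,h)$. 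Expectation-monotonicity therefore forces
\[
[q^n\times p^{(i)}]_i(b,\ldots,b,\ell)\;\geq\; [q^n\times p^{(i)}]_i(a,\ldots,a,h).
\]
But the left-hand side equals $q_i(b)^n\,p^{(i)}_i(\ell)=0$, whereas the right-hand side equals $q_i(a)^n\,p^{(i)}_i(h)>0$, a contradiction. Hence no solution of any game ever places positive weight on an action with strictly lower expected payoff than a feasible alternative, i.e., every $q\in S(H)$ is a Nash equilibrium of $H$.

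The only real work is step two—the construction of a per-player test game in which $\ell$ is strictly dominated—so that the no-strictly-dominated-action hypothesis replaces the role anonymity played in Claim~\ref{cl_dominated}. Once such test games are in hand, the bracketing-plus-monotonicity scaling argument is exactly the one used in Claim~\ref{cl_dominated}, and there is no further obstacle.
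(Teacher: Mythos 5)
Your proof is correct and follows essentially the route the paper intends: Corollary~\ref{cor_only Nash} is obtained by rerunning the argument of Claim~\ref{cl_dominated} with a per-player test game $G^{(i)}_x$, where the hypothesis that strictly dominated actions are never played forces $p^{(i)}_i(\ell)=0$ and thereby replaces the appeal to anonymity. The scaling-by-$n$ and bracketing steps are identical to the paper's, so nothing further is needed.
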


The assumption on strictly dominated actions serves to distinguish Nash from $\LQRE$. In fact, under the assumptions of expectation-monotonicity and  bracketing, \textit{any} feature of Nash equilibrium that does not apply to $\LQRE$ will lead to a characterization of only Nash equilibrium and vice versa. For example, we can weaken the assumption on strictly dominated actions by supposing that every player plays some action with zero probability in some game. Similarly, one can characterize $\LQRE$ by augmenting the axioms of Theorem~\ref{th_Nash QRE} with, for example, the assumption of interiority or, alternatively, with expectation-neutrality. 

In Appendix~\ref{sec_bnb}, we compare this characterization of Nash equilibrium with that of \cite*{brandl2024axiomatic}. 
Their axioms rule out non-trivial refinements of Nash equilibrium, 
whereas our axioms allow for some refinements, such as trembling hand perfect equilibrium. Recall that not all refinements of Nash permit bracketing, which provides a justification for selecting some refinements of Nash over others.

It is also instructive to compare our bracketing axiom with the classical IIA axiom used by \cite*{luce1959individual} to characterize multinomial logit, i.e., LQRE for single-agent decisions. Luce's axiom stipulates that the ratio between the choice probabilities of two given alternatives is the same in any two menus containing them. Bracketing relates a composite game to its component games, and does not impose any restrictions on behavior in pairs of games where neither is a component of the other.
Moreover, bracketing by itself does not pin down LQRE, and is satisfied by many solution concepts, such as level-$k$ reasoning with uniform mixing at level $0$.

One could imagine an alternative approach to axiomatizing LQRE based on an appropriate extension of the IIA axiom. However, there is a conceptual difficulty, as highlighted by the following example.
In Table~\ref{tab:IIA}, we consider two games that differ by $c_1$, which is an additional action available to player~1 in the game on the right. 
     \begin{table}[h]
    \centering
    \begin{tabular}{c|c|c}
            & $a_2$&$b_2$ \\ \hline
         $a_1$&  $(0,0)$& $(2,0)$\\
         $b_1$&  $(1,0)$& $(1,0)$\\
    \end{tabular}
\qquad
\qquad
    \begin{tabular}{c|c|c}
            & $a_2$&$b_2$ \\ \hline
         $a_1$&  $(0,0)$& $(2,0)$\\
         $b_1$&  $(1,0)$& $(1,0)$\\
         $c_1$&  $(\alpha,\beta)$& $(\gamma,\delta)$\\
    \end{tabular}
    \caption{Additional actions are not irrelevant}
    \label{tab:IIA}
\end{table}

In any $\LQRE$, the unique solution to the left game is uniform mixing by both players. A naive adaptation of the IIA property to games would require any solution $q$ to the game on the right to also satisfy $q_1(a_1)=q_1(b_1)$, since the two actions have the same payoff profile as they do in the game on the left. However, unless $\beta=\delta$, player~2 will not randomize uniformly under $q$ (provided $\lambda \neq 0$). Thus $q_1(a_1) \neq q_1(b_1)$ for all $\LQRE_\lambda$, with~$\lambda \neq 0$.

\section{Distribution-Monotonicity and Statistic Response Equilibria}\label{sec:SRE}

In this section we show that the only solution concepts permitting bracketing and satisfying anonymity and distribution-monotonicity are \textit{statistic response equilibria}, where players respond to a statistic of each action's payoff distribution. This class of equilibria generalizes $\Nash$ and $\LQRE_\lambda$---in which players evaluate actions by their expected payoffs---and accommodates other preferences over payoff distributions (which we refer to as lotteries). Notably, our characterization implies that players behave as if driven by a single preference for lotteries that is stable across all games.

We use the term \textit{statistic} to refer to a function $\Phi$ that assigns a real number to every lottery  and satisfies $\Phi[c]=c$ for deterministic lotteries yielding a constant amount $c$. Here, a lottery is simply a finitely supported distribution over the real line. Lotteries arise in our setting as the payoffs a player anticipates when choosing an action, given the mixed strategies of the other players. We denote lotteries by $X,Y$, and $X+Y$ denotes the lottery corresponding to the sum of outcomes drawn independently from $X$ and $Y$; that is, the distribution of $X+Y$ is the convolution of the distributions of $X$ and~$Y$.

We now define a class of statistics that are monotone with respect to first-order stochastic dominance and additive for independent lotteries.
Below, we demonstrate that players must respond to this class of statistics.

\begin{definition}
    A statistic $\Phi$ is a \emph{monotone additive statistic} if 
    $$\Phi[X+Y]=\Phi[X]+\Phi[Y]\qquad \text{and}\qquad \Phi[Z]\leq \Phi[W]\quad \text{for}\quad Z\leq_{\text{\emph{FOSD}}} W.$$
\end{definition}
A canonical example of a monotone additive statistic is given by the normalized cumulant-generating function of~$X$,
defined as $K_a[X]=\frac{1}{a}\log \E\left[\ee^{aX}\right]$ for $a \in \R$. Taking limits as $a$ approaches~$\pm\infty$ and~$0$, we obtain $K_{-\infty}[X]$, $K_0[X]$, and $K_\infty[X]$, which correspond to the minimum, expectation, and maximum of $X$, respectively.
When $X$ is a random amount of money, the statistic $K_a[X]$
has an important economic interpretation: each $K_a[X]$ represents the certainty equivalent of lottery $X$ for an agent with constant absolute risk aversion (CARA) utility, where $-a$ is the coefficient of absolute risk aversion; negative $a$ values correspond to risk aversion, and positive values to risk-loving preferences. When $X$ is a random utility, $K_a[X]$ represents a multiplier preference, which can be interpreted as a form of ambiguity aversion \citep*{hansen2001robust, strzalecki2011axiomatic}. \cite*{mu2024monotone} show that this example is, in fact, universal: any monotone additive statistic can be represented as $\Phi[X]=\int_{\overline{\R}}K_a[X]\,\dd \mu(a)$ for some probability measure $\mu$ on the extended real line $\overline{\R}=\R\cup\{+\infty,-\infty\}$. Hence, any monotone additive statistic is a weighted average of statistics of the form $K_a$.\footnote{\cite*{mu2024monotone} provide a characterization of monotone additive statistics on the domain of all compactly supported (rather than finitely supported) lotteries. Their characterization also applies to the domain of finitely supported lotteries, since any monotone additive statistic on this restricted domain can be extended to the compactly supported lotteries. For a proof, see Appendix \ref{ap_MAS}.}

In Nash equilibrium and in $\LQRE_\lambda$ players respond to the expectation of each action's payoff distribution. In a statistic response equilibrium, players evaluate actions using a monotone additive statistic of the payoff distribution.
Below we define two classes of statistic response equilibria.

\begin{definition}
Given a monotone additive statistic $\Phi$, a mixed strategy profile $p$ is a $\Nash_\Phi$ \emph{equilibrium} of the game $G=(A,u)$ if for all players~$i$
\[\supp p_i\subseteq \argmax_{a_i} \Phi\big[u_i(a_i,p_{-i})\big].\]
\end{definition}

In a $\Nash_\Phi$ equilibrium, players best respond to the other players' mixed strategies according to $\Phi$ by randomizing over actions whose payoff distributions maximize $\Phi$.  
The next definition introduces a class of statistic response equilibria in which players ``better respond'' to a monotone additive statistic of each distribution. 
\begin{definition}
Given a monotone additive statistic $\Phi$ and $\lambda\geq 0$, a mixed strategy profile $p$ is an $\LQRE_{\lambda\Phi}$ \emph{equilibrium} of the game $G=(A,u)$ if for all players~$i$ and actions $a_i \in A_i$ \[p_i(a_i) \propto \exp\big(\lambda\Phi \big[u_i(a_i,p_{-i})\big]\big).\]
\end{definition}

$\Nash_\Phi$ and $\LQRE_{\lambda\Phi}$ generalize  Nash and logit quantal response equilibria, in which players respond to the expectation, i.e.,  $\Phi=\E$.\footnote{Despite a superficial resemblance of the algebraic expression, $\LQRE_{\lambda\Phi}$ does not reduce to random coefficients logit for single-agent decision problems; in mixed logit choice probabilities are calculated by integrating the exponent, rather than exponentiating the integral as in $\LQRE_{\lambda\Phi}$.} In general, the preference expressed by a monotone additive statistic $\Phi$ is not equivalent to an expected utility preference. As we show in Appendix~\ref{app_CARA}, the only exceptions are $\Phi = K_a$ for $a \in \R$. Since these correspond to multiplier preferences,  $\Nash_\Phi$ and $\LQRE_{\lambda\Phi}$  incorporate in games some important preferences over gambles that have been previously explored in the decision theory literature.

While every game has a $\Nash$ equilibrium, the existence of a $\Nash_\Phi$ equilibrium is not guaranteed for all $\Phi$. 
For example, $\Nash_\Phi$ equilibria may not exist when $\Phi$ is the minimum or maximum of a distribution. This issue does not arise for $\LQRE_{\lambda\Phi}$, which do exist for every game. As the next result shows, the existence of $\Nash_\Phi$ equilibria is also guaranteed for a large family of monotone additive statistics, namely those in which the maximum and minimum do not play a role.

\begin{proposition}\label{pr_Nash_Phi_Exist}
    Let $\Phi=\int_{\overline{\R}} K_a\, \dd \mu(a)$ be a monotone additive statistic. Then
    \begin{itemize}
    \item There is an $\LQRE_{\lambda\Phi}$ equilibrium for every game and $\lambda \geq 0$.
    \item There is a $\Nash_\Phi$ equilibrium for every game if and only if  $\mu(\{-\infty\})=\mu(\{+\infty\})=0$.
    \end{itemize}
\end{proposition}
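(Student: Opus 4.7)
I would split the proposition into three claims and handle them separately.

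For $\LQRE_{\lambda\Phi}$ existence, I would apply Brouwer's fixed point theorem to the natural self-map $F(p)_i(a_i) \propto \exp\bigl(\lambda \Phi[u_i(a_i, p_{-i})]\bigr)$ on $\prod_i \Delta A_i$. The potential difficulty is that $\Phi = \int K_a\,d\mu$ may be discontinuous at the simplex boundary whenever $\mu$ charges $\pm\infty$, since $K_{\pm\infty}$ (the max and the min) depends on the support of the lottery. I would bypass this by noting the a priori bounds $\min u_i \le \Phi[u_i(a_i, p_{-i})] \le \max u_i$: these force every component of $F(p)$ to exceed some $\epsilon^*>0$, so $F$ carries the whole simplex into a compact convex subset $\Delta_{\epsilon^*}$ on which every $p_{-i}$ has full support. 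On $\Delta_{\epsilon^*}$ the distribution of $u_i(a_i, p_{-i})$ lives on a fixed finite set; each $K_a$ with $a\in\R$ depends continuously on the probabilities, $K_{\pm\infty}$ is locally constant, and $\Phi$ is continuous by dominated convergence (with integrable bound $\max|u_i|$). Brouwer then yields a fixed point.

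For the ``if'' direction of $\Nash_\Phi$ existence, I would apply Kakutani's theorem to the best-response correspondence $p \mapsto \prod_i \bigl\{q_i \in \Delta A_i : \supp q_i \subseteq \argmax_{a_i} \Phi[u_i(a_i, p_{-i})]\bigr\}$. Non-empty, convex, and compact values are immediate. Upper hemicontinuity reduces to global continuity of $p_{-i}\mapsto \Phi[u_i(a_i, p_{-i})]$ on all of $\prod_j \Delta A_j$, which holds precisely because $\mu$ avoids $\pm\infty$: each $K_a$ with $a\in\R$ is continuous in the distribution even as the support shrinks at the boundary, and dominated convergence propagates continuity to $\Phi$.

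For the ``only if'' direction, suppose $\mu(\{+\infty\}) = \alpha > 0$; the $-\infty$ case follows by negating all payoffs. I would exhibit the two-player, two-action game
\[
\begin{array}{c|cc}
 & c & d \\ \hline
a & (M,0) & (0,m) \\
b & (0,M) & (m,0)
\end{array}
\]
with $m > 0$ fixed and $M$ taken so large that the $\alpha K_{+\infty}$ contribution dominates every other term in $\Phi$. For any strictly interior opponent mixture, each player strictly prefers the ``high-max'' action (namely $a$ for player~1 and $c$ for player~2), because $K_{+\infty}=M$ for that action versus $m$ for the alternative; at the boundary the $K_{+\infty}$ term collapses to the deterministic realized payoff, reversing the preference. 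This yields the best-response cycle $p = 0 \Rightarrow q = 1 \Rightarrow p = 1 \Rightarrow q = 0 \Rightarrow p = 0$ without a fixed point, and strict preferences throughout rule out mixed equilibria, so no $\Nash_\Phi$ exists. The main obstacle throughout the proposition is exactly this discontinuity of $\Phi$ at the simplex boundary when $\mu$ charges $\pm\infty$: the two existence arguments neutralize it in different ways (by an interior a priori bound for $\LQRE_{\lambda\Phi}$, and by hypothesis for $\Nash_\Phi$), while the counterexample exploits it head-on.
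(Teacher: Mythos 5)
Your proof is correct and follows essentially the same route as the paper's: Brouwer's theorem for $\LQRE_{\lambda\Phi}$ with a device to neutralize the boundary discontinuity of $K_{\pm\infty}$ (you restrict to an interior slab $\Delta_{\epsilon^*}$ that the natural map provably maps into itself, whereas the paper instead replaces $K_{\pm\infty}[u_i(a_i,p_{-i})]$ by $\min_{a_{-i}}u_i(a_i,a_{-i})$ and $\max_{a_{-i}}u_i(a_i,a_{-i})$ to obtain a globally continuous operator whose fixed points are automatically totally mixed), Kakutani's theorem for the ``if'' direction, and a $2\times 2$ cyclic game exploiting the jump of the max/min at the simplex boundary for the ``only if'' direction (the paper's version is a single matching-pennies variant calibrated to $\varepsilon=\mu(\{-\infty\})+\mu(\{+\infty\})$ that covers both infinities at once). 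The one point to tighten is the claim that the $-\infty$ case ``follows by negating all payoffs'': negation converts $\Phi$-maximization into \emph{minimization} of the reflected statistic $\int K_a\,\dd(\text{reflected }\mu)$ rather than maximization, so this is not a formal reduction of one case to the other---you should instead verify directly that the mirror-image game (your game with all payoffs negated) admits no $\Nash_\Phi$ equilibrium when $\mu(\{-\infty\})>0$, which it does by the symmetric argument with the roles of max and min interchanged.
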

Proposition \ref{pr_Nash_Phi_Exist} is proved in Appendix \ref{ap_pr_1}. The existence of an $\LQRE_{\lambda\Phi}$ equilibrium follows from Brouwer's fixed-point theorem, applied to a version of the logit response function; the latter must be appropriately modified to ensure continuity  when $\mu$ places positive weight on $\pm\infty$.
The proof that $\Nash_\Phi$ equilibria exist when $\mu$ places no mass on the minimum or maximum follows from a standard fixed point argument.  This argument does not apply when $\mu$ places positive mass on the minimum or maximum, since $\Phi$ may be discontinuous at strategy profiles in which some action of an opponent has probability zero: there, the support of the induced payoff lottery changes abruptly, and so can the minimum or maximum. Indeed,  $\Nash_\Phi$  equilibria may fail to exist for such~$\mu$, as we show using a variant of matching pennies in Appendix \ref{ap_pr_1}. 

\medskip

We call $\Nash_\Phi$ and $\LQRE_{\lambda\Phi}$ \textit{statistic response equilibria} as players best or better respond to the statistic $\Phi$ of distributions induced by each available action. Formally, a statistic response equilibrium (SRE) is a solution concept that returns all $\Nash_\Phi$ or all $\LQRE_{\lambda\Phi}$ equilibria for some $\Phi$ and $\lambda$.\footnote{By Proposition~\ref{pr_Nash_Phi_Exist}, $\Nash_\Phi$ is a well-defined solution concept only for $\Phi$ that puts no mass on the maximum or minimum.}

It is easy to verify that the SRE solution concepts satisfy our axioms. Bracketing follows from the additivity of $\Phi$, distribution-monotonicity is a consequence of the monotonicity of~$\Phi$, and anonymity holds since all players use the same $\Phi$.  
The next result shows that these axioms, in fact, characterize SRE.
\begin{theorem}\label{th_sre} 
If $S$ permits bracketing and satisfies distribution-monotonicity and anonymity, then~$S$ is a refinement of some SRE. 
\end{theorem}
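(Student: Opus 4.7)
The plan is to mirror the proof of Theorem \ref{th_Nash QRE}, replacing deterministic payoffs by finitely supported lotteries and the scalar Cauchy equation by an additivity equation on the convolution semigroup of lotteries. The elementary fact ``monotone Cauchy implies linear'' is then replaced by the representation theorem of \cite*{mu2024monotone}: any monotone, convolution-additive functional on lotteries is a mixture of the CARA certainty equivalents $K_a$, i.e., a monotone additive statistic.

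As in Theorem \ref{th_Nash QRE}, I would begin with a dichotomy. Pick a test game with a strictly FOSD-dominant action. If the dominated action is played with zero probability in some such solution, then a composite-game argument paralleling Claim \ref{cl_dominated}, with strict expectation dominance swapped for strict $\Phi$-dominance, should force $S$ to refine some $\Nash_\Phi$; the statistic $\Phi$ is identified from $S$'s support behavior across two-action games, after verifying via bracketing that the resulting order on lotteries is transitive, FOSD-monotone, and convolution-additive. If instead the dominated action is played with positive probability, I would upgrade this, exactly as in Theorem \ref{th_Nash QRE}, to full interiority together with a \emph{lottery-neutrality} property: any two actions inducing identical payoff distributions are played with equal probability.

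The $\LQRE_{\lambda\Phi}$ case then hinges on producing, for each lottery $X$ in a sufficiently rich domain, a test game $G_X$ in which some player faces an endogenously induced two-action choice between $X$ and the degenerate lottery $\delta_0$. I expect this to be the main obstacle---the ``endogeneity of randomness''---because lotteries in normal-form games arise only through other players' mixing, which is itself determined by the still-uncharacterized $S$. My proposed workaround is to compose the deterministic test games of Theorem \ref{th_Nash QRE} with auxiliary anonymous gadgets whose equilibrium mixing is pinned down by anonymity and lottery-neutrality alone---e.g., symmetric subgames in which all players are forced to mix uniformly---so as to realize a convolution- and FOSD-dense family of lotteries. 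Given such $G_X$, set $r(X) = p_1(h)/p_1(\ell)$ and run the swap argument in $G_X \otimes G_Y \otimes G_{X+Y}$: the profiles $(h,h,\ell)$ and $(\ell,\ell,h)$ induce the same lottery $X+Y$ for player $1$ and, by lottery-neutrality, receive equal probability, yielding
\[
\log r(X+Y) \;=\; \log r(X) + \log r(Y).
\]
Distribution-monotonicity makes $\log r$ FOSD-monotone, so \cite*{mu2024monotone} gives $\log r(X) = \lambda \Phi[X]$ for some $\lambda \geq 0$ and some monotone additive statistic $\Phi$. To promote this to an arbitrary game $H$ with $q \in S(H)$, I would compose $H$ with test games calibrated so that bracketing and lottery-neutrality pin down $q_i(a_i)/q_i(b_i)$ in terms of $\Phi[u_i(a_i,q_{-i})]$ and $\Phi[u_i(b_i,q_{-i})]$, delivering the $\LQRE_{\lambda\Phi}$ form; an analogous support-based argument closes out the $\Nash_\Phi$ case.
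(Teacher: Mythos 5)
Your high-level architecture matches the paper's: a dichotomy based on whether strictly dominated actions are ever played, test games that endogenously generate a choice between a lottery $X$ and a near-deterministic payoff, a Cauchy-type equation on the convolution semigroup, and the \cite*{mu2024monotone} representation in place of ``monotone Cauchy implies linear.'' Your treatment of the $\LQRE_{\lambda\Phi}$ branch is essentially the paper's: there, interiority and distribution-neutrality (your ``lottery-neutrality'') are first established exactly as in Theorem~\ref{th_Nash QRE}, and then a gadget with $u_2\equiv 0$ forces player~2 to mix uniformly, which is enough to realize every rational-weight lottery and run the swap argument. That part of your plan would go through.

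The gaps are both in the $\Nash_\Phi$ branch. First, your proposed gadgets ``whose equilibrium mixing is pinned down by anonymity and lottery-neutrality alone'' are unavailable there: lottery-neutrality is a property of the QRE branch only, and in the Nash branch an indifferent player may concentrate all mass on one action, so a symmetric subgame with constant payoffs produces no lottery at all. The paper instead builds a quasi-zero-sum ``card game'' $G_{r,x,\varepsilon}$ in which player~1 chooses a permutation of the card values and player~2 (who pays player~1) is forced to mix uniformly in \emph{every} FOSD-Nash, with the perturbation $\varepsilon>0$ essential to kill degenerate pure equilibria; some construction of this kind is unavoidable. Second, and more seriously, your claim that ``a composite-game argument paralleling Claim~\ref{cl_dominated}, with strict expectation dominance swapped for strict $\Phi$-dominance'' closes the Nash case does not survive the swap. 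Claim~\ref{cl_dominated} works because expectations add: taking $n$ copies of $H$ makes the expected gap exceed the unit payoff of the test game. But if $\Phi[Y]>\Phi[X]$ it is generally false that $Y^n >_{\mathrm{FOSD}} X^n + c$ for large $n$ (e.g.\ when $\max Y=\max X$ the right side always has the larger maximum), so distribution-monotonicity cannot be invoked on the composite game. This is exactly where the paper needs the dominance-in-large-numbers theorem of Aubrun and Nechita (Proposition~\ref{lm_large_numbers aubrun nechita}), the auxiliary Lemma~\ref{lm_K_trick}, and the matching-pennies variant of Table~\ref{tab:VMP_low_maxmin} that manufactures a played action with strictly lower max \emph{and} min than an alternative --- both to prove the dichotomy itself (that excluding strictly dominated actions forces exclusion of all FOSD-dominated actions) and to show that only $\Phi$-maximizers are played in arbitrary games. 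Your outline contains no substitute for this machinery, and without it neither the dichotomy nor the final step of the Nash branch closes. A smaller omission: the statistic is first defined only on rational-weight lotteries, and the monotone approximation argument (Lemma~\ref{lm_lim_Phi_converge}) needed to extend it uniquely to $\Delta_\fin$ also relies on the large-numbers result.
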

Anonymity in Theorem \ref{th_sre} can be removed as in Theorem~\ref{th_Nash QRE}, with the conclusion appropriately altered to allow different players to best or better respond to different statistics. 

While distribution-monotonicity ensures that players' mixing probabilities are monotone with respect to first-order stochastic dominance, it does not provide a way to compare every pair of distributions. Interestingly, 
Theorem \ref{th_sre} shows that, with bracketing, there is a total order, defined by a statistic $\Phi$, that dictates how players rank \textit{every} payoff distribution. Hence, Theorem \ref{th_sre} gives rise to preferences over lotteries, which are not a primitive of our model.

We explore two different interpretations of this result. If payoffs are utilities, the statistic~$\Phi$ that players respond to is now (in general) not the expectation of these utilities. Indeed, it is in general not even the expectation of any monotone transformation of these utilities, as we explain in Appendix~\ref{app_CARA}. Thus, SREs incorporate non-expected-utility behavior into games, yet do so in a principled way that preserves monotonicity and bracketing. This flexibility allows a range of behavior that is larger than what can be accommodated within the expected-utility framework. In \S\ref{sec_min_max_mean} we explore a parametrized family of SREs and explain how they can predict behavior that is compatible with the Allais paradox, for example.

In another interpretation,  payoffs are monetary. The statistic $\Phi$ to which players respond is then a weighted average of CARA certainty equivalents across different coefficients. Hence statistic response equilibria incorporate flexible risk attitudes which allow for risk-averse, risk-loving, or mixed risk attitudes.
Indeed, consider a statistic $\Phi[X] = \int_{\overline \R}K_a[X]\, \dd \mu(a)$. If $\mu$ places mass only on negative values of $a$, $\Phi[X] \leq \E[X]$ for any lottery $X$, i.e., $\Phi$ reflects risk-aversion. Conversely, if $\mu$ places mass only on positive values of $a$, then $\Phi[X] \geq \E[X]$ and $\Phi$ reflects a risk-loving attitude. If $\mu$ places mass on both negative and positive values of $a$, then $\Phi$ reflects a mixed risk attitude, i.e. there are lotteries $X$ and $Y$ with $\Phi[X] < \E[X]$ and $\Phi[Y] > \E[Y]$; see Proposition~5 of \cite*{mu2024monotone}.

\bigskip

The proof of Theorem~\ref{th_sre} begins with the observation (formalized in Proposition~\ref{lm_Ghl} below) that solution concepts satisfying our axioms fall into one of two classes: {ordinal-Nash or ordinal-QRE}.
\begin{definition}
    A mixed strategy profile $p$ for a game $G=(A,u)$ is an \emph{ordinal-Nash} if $p_i(a_i)=0$ whenever there exists $b_i \in A_i$ such that $u_i(a_i,p_{-i}) <_\text{FOSD} u_i(b_i,p_{-i})$.
\end{definition}
That is, $p$ is an ordinal-Nash if players never play an action whose outcome distribution is strictly first-order dominated by that of another action. This is a very permissive solution concept that generalizes Nash. It retains the rational-expectations aspect of Nash equilibria, but completely discards cardinal expectation-maximization, leaving only a weaker ordinal assumption. 
Beyond Nash, every $\Nash_{\Phi}$ 
is also a refinement of ordinal-Nash.  Clearly, ordinal-Nash satisfies distribution-monotonicity and is itself a refinement of the solution concept consisting of all strategy profiles that survive repeated elimination of strictly dominated strategies.

The next class is an analogous weakening of QRE:
\begin{definition}
    A mixed strategy profile $p$ for a game $G=(A,u)$ is an \emph{ordinal-QRE} if $p_i(a_i)>0$ for all $i$ and $a_i \in A_i$ (interiority), and $u_i(a_i,p_{-i}) \geq_{\text{FOSD}} u_i(b_i,p_{-i})$ implies $p_i(a_i) \geq p_i(b_i)$.
\end{definition}
In this solution concept, strategies satisfy a strengthening of distribution-monotonicity, which, beyond  monotonicity, implies that actions yielding equal outcome distributions are played with the same probability; we call this property \emph{distribution-neutrality}. In the spirit of QRE, it also includes an interiority assumption, and indeed every regular QRE is a refinement of ordinal-QRE, as is every $\LP$. As with ordinal-Nash, this is a very permissive concept that includes no cardinal assumptions, but retains distribution-monotonicity.

\begin{proposition}\label{lm_Ghl}
    If $S$ permits bracketing and satisfies distribution-monotonicity and anonymity, then exactly one of the following two statements holds:
    \begin{enumerate}
        \item $S$ is a refinement of ordinal-Nash.
        \item $S$ is a refinement of ordinal-QRE.
    \end{enumerate}
\end{proposition}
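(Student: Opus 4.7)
The plan is to use the test game $G_1$ where player~1 chooses between $h$ (payoff $1$) and $\ell$ (payoff $0$) and every other player has a unique action. Pick any $p\in S(G_1)$; distribution-monotonicity forces $p_1(h)\geq p_1(\ell)$, so either $p_1(\ell)=0$ (Case A, to conclude FOSD-Nash) or $p_1(\ell)>0$ (Case B, to conclude FOSD-QRE). These cases are exhaustive and mutually incompatible, since FOSD-QRE demands every action in every game be played with positive probability, while $\ell$ in $G_1$ is strictly FOSD-dominated, so FOSD-Nash forces $p_1(\ell)=0$.

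For Case B, I first prove interiority: in any $H=(B,v)$, any $q\in S(H)$, and any $a\in B_i$, we have $q_i(a)>0$. Suppose not; by anonymity take $i=1$ and pick $b\in B_1$ with $q_1(b)>0$, which exists since $q_1$ is a distribution. Choose $n$ large enough that every realization of $v_1(a,q_{-1})+n$ strictly exceeds every realization of $v_1(b,q_{-1})$---possible since lotteries have finite support. Bracketing gives $q\times p^n\in S(H\otimes G_1^n)$; the action $(a,h,\dots,h)$ has payoff distribution $v_1(a,q_{-1})+n$ and probability $q_1(a)p_1(h)^n=0$, while $(b,\ell,\dots,\ell)$ has payoff $v_1(b,q_{-1})$ and probability $q_1(b)p_1(\ell)^n>0$, contradicting distribution-monotonicity applied to the strict FOSD relation. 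The remaining ingredient of FOSD-QRE, distribution-neutrality, is the main obstacle for Case B: a direct comparison in $H\otimes G_1^n$ only brackets the ratio $q_1(a)/q_1(a')$ inside $[(p_1(\ell)/p_1(h))^n,(p_1(h)/p_1(\ell))^n]$, pinning it to $1$ only when $p_1(h)=p_1(\ell)$.

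For Case A, I first bootstrap $p_1(\ell)=0$ to every test game $G_M$ with payoffs $M,0$ for any $M>0$. In the composite $G_M^n\otimes G_1^K$ with $nM>1$, bracketing gives the product solution; the actions $(h_M,\dots,h_M,h,\dots,h,\ell)$ and $(\ell,\dots,\ell,h,\dots,h)$ have deterministic payoffs $nM+K-1$ and $K$, and since $p_1(\ell)=0$ the former has probability $0$ while the latter has probability $p^{(M)}_1(\ell)^n$; distribution-monotonicity forces $p^{(M)}_1(\ell)=0$. Extending this to FOSD-Nash on an arbitrary game $H$ is the main obstacle: strict FOSD does not scale under iid sums the way expected payoffs do in Theorem~\ref{th_Nash QRE}, so the composition trick used there cannot be transplanted directly---for instance, if $V_a=0$ deterministically and $V_b$ is Bernoulli$(1/2)$ on $\{0,1\}$, then the sum of $n$ iid copies of $V_b$ never FOSD-dominates $V_a+1$ for any $n$. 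Overcoming this will likely require a more delicate composite combining $H^n$ with the bootstrapped $G_M$ for $M$ depending on the supports of $v_1(a,q_{-1})$ and $v_1(b,q_{-1})$, alongside the interiority already established.
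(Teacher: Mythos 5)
Your case split and the interiority argument in Case B match the paper's, but both of the steps you flag as obstacles are genuine gaps, and one of them you misdiagnose. For distribution-neutrality in Case B, the difficulty you describe comes from composing one copy of $H$ with $n$ copies of the test game; the paper instead composes $n$ copies of $H$ with \emph{one} copy of the test game. If $v_1(a,q_{-1})$ and $v_1(b,q_{-1})$ have identical distributions but $q_1(a)<q_1(b)$, choose $n$ with $(q_1(b)/q_1(a))^n>p_1(h)/p_1(\ell)$ and compare $(a,\dots,a,h)$ with $(b,\dots,b,\ell)$ in $H^n\otimes G_1$: the $n$-fold sums of identically distributed lotteries are identically distributed, so the extra $+1$ from $h$ gives strict FOSD dominance of the first profile, while its probability $q_1(a)^n p_1(h)$ is strictly smaller than $q_1(b)^n p_1(\ell)$ --- a violation of distribution-monotonicity. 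So no condition like $p_1(h)=p_1(\ell)$ is needed; the trick is the same amplification already used for expectation-neutrality in Theorem~\ref{th_Nash QRE}.

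The Case A gap is the substantial one, and your Bernoulli example correctly identifies why: if $X>_{\mathrm{FOSD}}Y$ but $\max X=\max Y$ or $\min X=\min Y$, then no number of iid copies of $X$ can strictly dominate the same number of copies of $Y$ shifted up by a constant, so composing with test games $G_M$ (whose actions differ only by a deterministic shift) can never separate $a_Y$ from $a_Z$. The paper's resolution has two ingredients you would need to supply. First, a large-numbers result (Lemma~\ref{lm_K_trick}, built on the Aubrun--Nechita theorem): if $X>_{\mathrm{FOSD}}Y$ and auxiliary lotteries $A,B$ satisfy $\max A>\max B$ and $\min A>\min B$, then $X^m+A^n>_{\mathrm{FOSD}}Y^m+B^n$ for suitable $m,n$. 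Second, a game in which such an $A$/$B$ pair arises endogenously, with the $B$-yielding action played with positive probability and the $A$-yielding action with probability zero: the paper uses a variant of matching pennies (Table~\ref{tab:VMP_low_maxmin}) in which \emph{every} mixed profile has some player putting positive weight on an action whose induced lottery has strictly lower max and strictly lower min than an alternative, and then tensors with a test game to kill the alternative's probability. Without both ingredients your bootstrap over the games $G_M$ cannot reach arbitrary FOSD comparisons, so Case A remains unproven.
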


Proposition \ref{lm_Ghl} is proved in Appendix~\ref{ap_distr_neutrality}. To convey some of the ideas of the proof, recall the classical notion of dominated actions. An action $a_i$ \emph{strictly dominates}  $b_i$ if $u_i(a_i,a_{-i}) > u_i(b_i,a_{-i})$ for all~$a_{-i}$.
Note that $a_i$ strictly dominates $b_i$ 
if and only if $b_i$ is strictly first-order dominated by $a_i$ for every $p$. However, $a_i$ can strictly first-order dominate $b_i$ under some profile~$p_{-i}$ but not others, in which case there is no strict dominance.

We show that the two cases in the proposition correspond to whether or not players ever play strictly dominated actions. Similarly to the proof of Theorem~\ref{th_Nash QRE}, if such actions are ever played, it can be shown that the solution concept satisfies interiority and distribution-neutrality.
The key difficulty in the proof is demonstrating that by ruling out strictly dominated actions we inevitably rule out all first-order dominated actions as well. This requires the construction of an interesting game (Table~\ref{tab:VMP_low_maxmin}) and a novel result about stochastic dominance (Lemma~\ref{lm_K_trick}).

\medskip

We now explain how we use Proposition~\ref{lm_Ghl} to complete the proof of Theorem~\ref{th_sre}. The two cases in Proposition~\ref{lm_Ghl} correspond to the two families of SREs. Like in the proof of Theorem~\ref{th_Nash QRE}, which led to the characterization of $\Nash$ and $\LQRE$, we construct a family of test games designed so that the behavior of $S$ on this family fully determines its behavior in general.
The SRE solution concepts, which we ultimately aim to characterize, depend on a statistic $\Phi$. Accordingly, we require a sufficiently rich family of test games to identify the statistic. This leads to a more intricate construction than the earlier games $G_x$, where player~1 simply compared deterministic payoffs of $0$ and $x$ and all other players were dummies. In contrast, for Theorem~\ref{th_sre}, player~1 must compare a given lottery $X$ with a deterministic payoff $r$. Crucially, the lottery $X$ must arise endogenously from the strategies of other players, who therefore can no longer all be dummies.

We now describe the construction of test games that guarantee player~1 faces this endogenous choice between $X$ and $r$ in every solution. The construction, based on Proposition~\ref{lm_Ghl}, varies depending on which of its two cases applies.

In the first case, we consider $S$ that is a refinement of ordinal-Nash. The test game is constructed as follows. Let $X$ be a lottery with finite support and rational weights. Any such lottery can be generated by sampling a coordinate of some vector $x=(x_1,\ldots,x_m)$ uniformly at random. Given such a vector $x\in \R^m$, a number $r\in \R$, and small $\varepsilon>0$, we define a game $G_{r,x,\varepsilon}$.
In this game, player~1 selects $a_1\in \{a_x,a_r\}$ \emph{and} a permutation $\pi$ of $\{1,\ldots, m\}$. Player~2 selects a number $a_2\in \{1,\ldots, m\}$. The resulting payoffs  for player~1 are
\begin{align}
    \label{eq:G_rXeps}
    u_1\big((a_x,\pi),\,a_2\big)= x_{\pi(a_2)},\qquad \qquad 
    u_1\big((a_r,\pi),\,a_2\big)=r+\varepsilon x_{\pi(a_2)}.
\end{align}
For player~2, 
\begin{align*}
u_2\big((a_1,\pi),a_2\big) = -x_{\pi(a_2)}.
\end{align*}
This game admits the following interpretation. Player $1$ has $m$ cards, each showing some amount of money $x_i$. Player $1$ chooses a shuffling $\pi$ of the cards. Player $2$ chooses one of these cards, and pays player~$1$ the amount of money the card shows. Depending on whether player~$1$ chose $a_x$ or $a_r$, player~1 either keeps the money paid by player~$2$ or else player~$1$ gains~$r$, but has to burn a $1-\varepsilon$ fraction of the money received from player~$2$. From the point of view of player~1, the choice between $a_r$ and $a_x$ is a choice between the almost-deterministic payoff $r$ (assuming $\varepsilon$ is small) and the random amount of money transferred by player~2. From the point of view of player~2 who transfers money to player~1, this game is very similar to a zero-sum game, since player~2 does not care if player~1 burns some of the transferred money.

To gain intuition about this game, consider its Nash equilibria first. Suppose player~2 does not choose uniformly at random. Then player~1 will choose a shuffling that takes advantage of this, making player~2 more likely to pick a higher card. But knowing this, player~2 would want to pick a different strategy, in order to pay less in expectation. Thus, in equilibrium, player~2 must mix uniformly.
We show in Lemma~\ref{lm_lotteries_games_FOSD_undominated} that this conclusion does not require the assumption that players play a Nash equilibrium: player~2 must mix uniformly in every ordinal-Nash. Note that it is important that $\varepsilon>0$. For $\varepsilon=0$, it is not guaranteed that player~2 will mix uniformly. E.g., if $r>\max_j x_j$, there is a pure Nash equilibrium where player~1 selects $a_r$ and the identity permutation and player~2 chooses the lowest card $\argmin_j x_j$.

The uniform mixing of player~2 makes player~1 choose between the action $a_x$ that yields the lottery $X$ and $a_r$ that yields the (approximately) deterministic payoff $r$. Accordingly, the test games provide a measure of how players evaluate a rich class of lotteries for any $S$ that is a refinement of ordinal-Nash. Indeed, the highest $r$ for which player~1 chooses $a_x$ in $G_{r,x,\varepsilon}$ plays the role of an approximate certainty equivalent of the lottery $X$ yielded by~$a_x$. Formally, we define $\Phi$ as the following limit as $\varepsilon\to 0$:
\begin{align*}
    \Phi[X]=\lim_{\varepsilon\to 0} \left( \sup \Big\{ \ r\in \R \ \ \Big\vert  \    p_1(a_x,\pi )>0   \text{ for some }  \pi  \text{ and }    p \in S\big(G_{r,x,\varepsilon}\big)  \  \Big\}\right).
\end{align*}
The distribution-monotonicity and bracketing axioms imply the existence of the limit and that the recovered $\Phi$ is a monotone additive statistic for lotteries with rational weights. 
We then show that $\Phi$ extends uniquely to general lotteries that can include irrational weights. Finally, we use the axioms to demonstrate that the same statistic $\Phi$ is deployed to compare actions in every game, leading to $S$ being a refinement of $\Nash_\Phi$. The details are provided in the appendix.
\smallskip

In the second case of Proposition~\ref{lm_Ghl}, we consider $S$ that is a refinement of ordinal-QRE. In this setting, it is more straightforward to construct a sufficient class of test games, where player~1 is guaranteed to face an action that yields a given lottery $X$ and an action that yields a deterministic payoff $r$. This is due to distribution-neutrality, by which player~2 must uniformly mix between actions that result in the same deterministic payoffs, allowing us to generate uniform mixing by setting $u_2\equiv 0$. Like the test games constructed for the first case of Proposition~\ref{lm_Ghl}, these test games provide a measure of how players evaluate a rich class of lotteries, giving rise to a monotone-additive statistic $\Phi$. We then show that players logit respond to $\Phi$ in every game. This is demonstrated in a manner similar to the $\LQRE$ case of Theorem~\ref{th_Nash QRE}.

\section{Min-Max-Mean Response Equilibria}\label{sec_min_max_mean}

In statistic response equilibria, players respond to a statistic $\Phi$, which is parameterized by a probability measure $\mu$ on $\overline{\R}$, an infinite-dimensional parameter. Having such a large parameter space can be challenging for empirical studies, especially on small datasets. In this section, we introduce an additional axiom that singles out a tractable parametric family of SREs that is rich enough to be empirically viable. In general, a monotone additive statistic $\Phi$ can be viewed as an average of the statistics $K_a[X] = \frac{1}{a}\log\E[\ee^{aX}]$. For high values of $a$, $K_a$ puts a lot of weight on the maximum of the distribution of $X$, in the sense that $K_a[X]$ tends to the maximum of $X$ as $a$ tends to infinity. For very low negative values of $a$, almost all the weight is on the minimum. For $a=0$, $K_a$ is simply the expectation, which in some sense puts very little weight on the minimum or maximum. 

The parametric family that we consider consists of all the SREs in which $\Phi$ is just an average of the minimum $K_{-\infty}$ (worst-case), maximum $K_{\infty}$ (best-case), and expectation $K_0$
(average-case). This family thus captures much of the qualitative diversity of the family of SREs, while being much smaller, as it is parametrized by just three variables. We call this family Min-Max-Mean response equilibria.
\begin{definition}
Given $\lambda \in \R_{\geq 0}^3$, a mixed strategy profile $p$ is a $\text{Min-Max-Mean}_\lambda$ equilibrium of a game $G=(A,u)$ if 
    \[
      p_i(a_i) \propto \exp\left(\lambda_1 \min_{a_{-i}} u_i(a_i,a_{-i})+\lambda_2\E[u_i(a_i,p_{-i})]+\lambda_3\max_{a_{-i}} u_i(a_i,a_{-i})\right)
    \]    
for all  players~$i$ and actions $a_i \in A_i$.
\end{definition}
In this definition, the minima and maxima are taken over all opponent action profiles $a_{-i}\in A_{-i}$. These coincide with the minima and maxima of $u_i(a_i,p_{-i})$, since every action is played with positive probability.

Min-Max-Mean response equilibria may reflect how players react to complexity in games. Computing the minimum and maximum of each induced lottery is straightforward and does not require any strategic reasoning. Empirically, players take the minimum and maximum into account when they face multiple games simultaneously \citep*{avoyan2020attention}. 
Moreover, assuming that other players are affected by the minimum and maximum of the lotteries they face reduces the cognitive load of hypothesizing about opponents' behavior. There is experimental evidence that players assume that their opponents use simple heuristics in choosing their strategies.
\cite*{friedenberg2024beyond} look at three heuristics: the maximum, the minimum, and the sum of payoffs. One of their findings is that 88\% of subjects play as if they believe that, with some probability, their opponents will use one of these heuristics rather than behave rationally.

Min-Max-Mean evaluation also admits a decision-theoretic interpretation in terms of ambiguity. For a fixed profile $p_{-i}$ of the opponents, $\text{Min-Max-Mean}_\lambda$ equilibrium corresponds to players logit responding to $\alpha$-maxmin expected utility, given by \[\alpha\min_{r_{-i}\in C}\E[u_i(a_i,r_{-i})]+(1-\alpha)\max_{r_{-i} \in C}\E[u_i(a_i,r_{-i})],\] where $C=\{(1-\varepsilon)p_{-i}+\varepsilon q_{-i} \mid q_{-i} \in \Delta(A_{-i})\}$ and $\varepsilon=\frac{\lambda_1+\lambda_3}{\lambda_1+\lambda_2+\lambda_3}$ and $\alpha=\frac{\lambda_1}{\lambda_1+\lambda_3}$.\footnote{{In this interpretation, player $i$ entertains not a single belief about the play of the opponents but the set $C$ of candidate beliefs. Sets of the form $C$ are known as $\varepsilon$-contamination sets of priors. They represent players who somewhat trust the reference belief $p_{-i}$, but believe that the opponents will play arbitrary $q_{-i}$ with probability $\varepsilon$. The parameter $\alpha$ captures players' optimism/pessimism. See \cite*{gajdos2008attitude} for background on these representations. We thank an anonymous referee for suggesting this connection.}}

All SREs are invariant to adding constants to players’ utilities. More generally, they are invariant to adding the same independent lottery to every outcome; this follows directly from the defining property of the monotone additive statistics that underlie SREs. To characterize the class of Min-Max-Mean equilibria, we supplement this invariance with a requirement of scale-invariance.

\begin{definition}\label{def_scale_inv}
    $S$ satisfies \emph{scale-invariance} if whenever $p_i$ is the uniform distribution on $A_i$ for all $i$, $p\in S(A,u)$ implies $p \in S(A,\alpha\cdot u)$ for all $\alpha \in (0,1)$. 
\end{definition}
That is, reducing all payoffs by the same positive factor does not change whether the uniform mixed strategy is a solution.\footnote{This requirement follows from the consistency and consequentialism axioms of \cite*{brandl2024axiomatic}; see Appendix~\ref{sec_bnb}.}
This assumption is weak, along two dimensions. First, it only applies when all players mix uniformly. In a way, this means that it only restricts players' behavior when all players are indifferent among all actions.\footnote{A longer but more cumbersome name such as indifference-scale-invariance might be more appropriate.} Second, the restriction is imposed only
for scales $\alpha$ less than unity. Intuitively, it seems plausible that if a player is indifferent between all actions, then they would still be indifferent when the stakes are made lower.\footnote{In fact, the proof of Theorem~\ref{th_complex} uses the axiom for a single scale, $\alpha=\frac{1}{2}$, and any fixed positive $\alpha\neq 1$ would do. We state the axiom for $\alpha \in (0,1)$ because lowering stakes seems the more natural restriction for maintaining indifference.}

\begin{theorem}\label{th_complex}
    Suppose $S$ permits bracketing and satisfies 
      distribution-monotonicity, anonymity, scale-invariance, and interiority. Then $S$ is a refinement of $\text{Min-Max-Mean}_\lambda$ for some $\lambda \in \R_{\geq 0}^3$.
\end{theorem}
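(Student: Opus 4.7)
The plan is to invoke Theorem~\ref{th_sre} and then use scale-invariance to pin down the structure of the monotone additive statistic. First I observe that interiority excludes the $\Nash_\Phi$ branch of Theorem~\ref{th_sre}: in any game containing a strictly dominated action, every FOSD-Nash profile (and hence every $\Nash_\Phi$ equilibrium) puts zero weight on that action, so no full-support profile can be a solution, contradicting non-emptiness of $S$. Hence $S$ must refine $\LP$ for some $\lambda\ge 0$ and some monotone additive statistic $\Phi$. The case $\lambda=0$ is trivial and corresponds to $\text{Min-Max-Mean}_{(0,0,0)}$, so I will take $\lambda>0$. Writing $\Phi=\int_{\overline{\R}} K_a\,\dd\mu(a)$ via the representation of \cite*{mu2024monotone} (Appendix~\ref{ap_MAS}), the target is to prove $\mu$ is supported on $\{-\infty,0,+\infty\}$.

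The second step is to construct, for each finitely supported lottery $X=\tfrac{1}{n}\sum_{j=1}^n \delta_{y_j}$ with rational weights, a test game whose unique $\LP$ solution is uniform mixing by both players. Consider a two-player game with $u_2\equiv 0$, player~1 choosing between $a_x$ and $a_r$, and payoffs $u_1(a_x,j)=y_j$ and $u_1(a_r,j)=r$, where I set $r:=\Phi[X]$. Because all of player~2's payoff lotteries are degenerate at $0$, the $\LP$ formula forces uniform mixing by player~2; given that, $a_x$ yields the lottery $X$ and $a_r$ yields the sure payoff $r=\Phi[X]$, so the $\LP$ formula also forces uniform mixing by player~1. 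Since $S\subseteq\LP$ is non-empty on this game, the uniform profile lies in $S$. Duplicating entries $y_j$ realizes every finitely supported lottery with rational weights as the induced distribution.

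Now apply scale-invariance: the uniform profile belongs to $S(A,\alpha u)\subseteq\LP(A,\alpha u)$ for every $\alpha\in(0,1)$. In the scaled game, $a_x$ induces $\alpha X$ and $a_r$ induces $\alpha r$, so the $\LP$ indifference condition collapses to $\Phi[\alpha X]=\alpha r=\alpha\Phi[X]$. Varying the $y_j$'s gives positive homogeneity of $\Phi$ on all rational-weighted finitely supported lotteries. Fixing any finite support and restricting to strictly positive weights keeps $K_{\pm\infty}[X]$ equal to the extrema of the support, while $K_a$ for finite $a$ depends continuously on the weights, so $\Phi$ is continuous on the open weight simplex, and homogeneity extends by density to all finitely supported lotteries.

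Finally, I translate homogeneity of $\Phi$ into a constraint on $\mu$. Using $K_a[\alpha X]=\alpha K_{a\alpha}[X]$ for $a\in\R$ and $K_{\pm\infty}[\alpha X]=\alpha K_{\pm\infty}[X]$ for $\alpha>0$, the identity $\Phi[\alpha X]=\alpha\Phi[X]$ combined with uniqueness in the Mu--Pomatto--Strack--Tamuz representation implies that the pushforward of $\mu|_{\R\setminus\{0\}}$ under $a\mapsto a\alpha$ equals $\mu|_{\R\setminus\{0\}}$ for every $\alpha\in(0,1)$. Hence $\mu|_{\R\setminus\{0\}}$ is invariant under all positive dilations. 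Partitioning $(0,\infty)=\bigsqcup_{n\in\Z}[\beta^n,\beta^{n+1})$ for any $\beta>1$ shows each piece carries the same mass, which must be zero since $\mu$ is finite; the same holds on $(-\infty,0)$. Thus $\mu=c_1\delta_{-\infty}+c_2\delta_0+c_3\delta_{+\infty}$ with $c_i\ge 0$ summing to one, so $\Phi=c_1 K_{-\infty}+c_2 K_0+c_3 K_{+\infty}$ and $\LP=\text{Min-Max-Mean}_{(\lambda c_1,\lambda c_2,\lambda c_3)}$. The main obstacle is the density/continuity step linking the rational-weighted homogeneity to the Mu--Pomatto--Strack--Tamuz uniqueness, given that $K_{\pm\infty}$ is not weakly continuous; the fix is to perturb weights only while fixing the support, so that the min and max of the support stay put and only the finite-$a$ cumulant-generating functions vary continuously.
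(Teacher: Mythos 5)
Your proposal is correct and follows essentially the same route as the paper's proof: invoke Theorem~\ref{th_sre} (with interiority excluding the $\Nash_\Phi$ branch), force uniform mixing in the test game $H_{\Phi[X],X}$, use scale-invariance to obtain positive homogeneity of $\Phi$, and then deduce from the uniqueness of the representation $\Phi=\int K_a\,\dd\mu(a)$ that $\mu$ restricted to $\R\setminus\{0\}$ is dilation-invariant and hence zero. The only differences are cosmetic: the paper scales by the single factor $\alpha=\tfrac{1}{2}$ (which already suffices for the dilation argument), and you are somewhat more explicit about extending homogeneity from $\Delta_\Q$ to $\Delta_\fin$ and about why interiority rules out the FOSD-Nash case.
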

We assume interiority to keep the statement of the theorem simple; removing this assumption would only add back the Nash solution concept and some of its refinements.\footnote{Indeed, by Theorem~\ref{th_sre}, a non-interior $S$ must be a refinement of some $\Nash_\Phi$, and scale-invariance forces $\Phi$ to be a combination of the minimum, maximum, and expectation. Proposition~\ref{pr_Nash_Phi_Exist} shows that only $\Phi=\E$ yields a well-defined solution concept, i.e., Nash equilibrium.}
Perhaps surprisingly, scale-invariance rules out all statistics except the extreme ones and the expectation. While this family is simple enough to be represented with only three parameters, it is rich enough to account for a wide range of behavior beyond expected-utility preferences, such as that exhibited in the Allais paradox.

 \begin{table}[ht]
    \centering
    \begin{tabular}{|c|c|c|c|}
    \hline
         Probability &  $89\%$ & $1\%$ & $10\%$\\
    \hline
       $a$ & $\$10$ & $\$10$ & $\$10$ \\
       $b$ & $\$10$ & $\$0$ & $\$11$\\
    \hline
       $c$ & $\$0$ & $\$10$ & $\$10$ \\
       $d$  & $\$0$ & $\$0$ & $\$11$ \\
    \hline
    \end{tabular}
    \caption{Classical Allais  problem}
    \label{tab:allais}
  \end{table}
Table~\ref{tab:allais} depicts a version of the classical Allais common consequence problem, where the modal behavior is choosing $a$ over $b$ and $d$ over $c$. If an individual has expected utility preferences, they will prefer $a$ over $b$ if and only if they prefer $c$ over $d$, regardless of their utility function over dollar amounts. Thus, the modal behavior is not compatible with expected utility. In contrast, this behavior can be rationalized by Min-Max-Mean preferences. Indeed, if an individual places a sufficiently high weight on the minimum, they will prefer $a$ to $b$ and $d$ to $c$.\footnote{If, for example, their utility function for money is simply the dollar amount, these choices are rationalized if the weight on the minimum is more than one-tenth of the weight on the maximum. } 

\begin{table}[ht]
    \centering
    \begin{tabular}{|c|c|c|c|}
    \hline
         Probability &  $1/3 $ & $1/3$ & $1/3$\\
    \hline
       $a$ & $\$10$ & $\$10$ & $\$10$ \\
       $b$ & $\$5$ & $\$5$ & $\$18$\\
       $c$ & $\$0$ & $\$10$ & $\$20$ \\
    \hline
    \end{tabular}
    \caption{Min-Max-Mean preferences allow for a choice of $b$}
    \label{tab:flexible_risk_prefs}
  \end{table}

If payoffs are interpreted as money rather than utilities, Min-Max-Mean preferences are flexible enough to explain choices that are not consistent with risk-averse or risk-seeking expected-utility preferences. For the lotteries depicted in Table~\ref{tab:flexible_risk_prefs}, any risk-averse individual will strictly prefer $a$ among  $\{a,b,c\}$, any risk-seeking individual would strictly prefer $c$, and a risk-neutral individual would be indifferent between $a$ and $c$, but strictly prefer both to $b$.\footnote{This holds because $a$ dominates $b$ and $c$ in the increasing concave order and $c$ dominates $a$ and $b$ in the increasing convex order.} In contrast, an individual with Min-Max-Mean preferences who places equal weight on the max and min (and sufficiently little weight on the mean) will rank $b$ the highest. When faced with binary choices, this decision maker would choose $b$ over $a$, exhibiting risk-loving behavior, and $b$ over $c$, exhibiting risk-averse behavior. As suggested by these examples, the combination of parsimony and flexibility of Min-Max-Mean preferences may prove useful in analyzing experimental data.

We conclude by noting that Min-Max-Mean preferences cannot, of course, explain all non-expected utility behavior. For example, they are consistent with expected utility behavior for a variant of Table~\ref{tab:allais} in which minima and maxima do not change across the two choices.

\section{Conclusion}

We have developed an axiomatic theory of how players bracket and evaluate strategic 
decisions, characterizing both classical solution concepts and new ones that accommodate non-expected-utility behavior, or mixed risk attitudes. Our analysis of bracketing reveals it as a unifying principle underlying Nash equilibrium and $\LQRE$, while also suggesting natural generalizations of 
these concepts. We discuss below three questions for future research.

In the definition of our solution concepts, it is hardwired that players randomize their actions independently.
A natural next step would be to consider solution concepts that allow for correlated actions, such as the set of correlated equilibria. 
Since the set of correlated equilibria satisfies a version of the bracketing axiom, extending our approach to correlated actions would potentially result in a new perspective on correlated equilibria and suggest their generalizations. The main challenge is finding the right analog of the monotonicity axioms.

One can also weaken the bracketing axiom by relaxing the independence requirement and only asking that if $p\in S(G)$ and $q\in S(H)$, then some mixed strategy profile $r$ with marginals $p$ and $q$ is in $S(G\otimes H)$. For example, such an axiom would allow for mixed LQRE, where the parameter $\lambda$ is sampled from a distribution over $\R_+$ that is fixed across games. 
We conjecture that such rules and their SRE cousins exhaust all the rules satisfying our axioms with this relaxed bracketing assumption. A positive resolution to this conjecture was established for one-player games in \cite*{sandomirskiy2025iid}.

Our results show that Nash equilibria,  $\LQRE$, and their generalizations arise under bracketing, as do some of their refinements. A complete 
characterization of which refinements permit bracketing remains open.
This question 
is particularly interesting for $\LQRE$, where the study of refinements has not been as extensive as in the Nash setting.

Bracketing may have different implications if restricted to particular classes of games, e.g., symmetric or zero-sum games.
Solution concepts that fail our axioms when applied 
to all games might satisfy them within these restricted domains, potentially yielding 
novel equilibrium notions tailored to particular strategic environments.

\bibliography{refs}

\newpage

\appendix

\section{Proof of Theorem~\ref{th_Nash QRE} }\label{ap_A}

We begin by introducing a number of definitions and establishing a key lemma. Recall that $S$ satisfies \textit{expectation-neutrality} if for any $G=(A,u)$ and $p \in S(G)$, if $\E[u_i(a,p_{-i})]=\E[u_i(b,p_{-i})]$, then $p_i(a) = p_i(b)$. A solution concept $S$ satisfies \emph{interiority} if every action in every game is played with positive probability, that is, for every game $G=(A,u)$, solution $p \in S(G)$, and player~$i$, we have $p_i(a_i)>0$ for each $a_i \in A_i$.

\begin{lemma}\label{lm_DSC_number}
  If $S$ permits bracketing and satisfies expectation-neutrality, interiority, expectation-monotonicity, and anonymity, then $S$ is a refinement of $\LQRE_\lambda$ for some $\lambda\geq 0$.
\end{lemma}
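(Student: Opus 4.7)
The plan is to first characterize the behavior of $S$ on a simple family of one-player test games via a Cauchy functional equation, and then extend the characterization to arbitrary games by leveraging bracketing to couple each game with an appropriately calibrated test game.

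For $x \geq 0$, I will reuse the test games $G_x$ from Claim~\ref{cl_dominated}: player~1 chooses between a dominant action $h$ with payoff $x$ and $\ell$ with payoff $0$, while all other players are dummies. For any $p \in S(G_x)$, let $r_x = p_1(h)$. Interiority gives $r_x \in (0,1)$, and expectation-neutrality gives $r_0 = 1/2$. The central step is to apply bracketing to the composite $G_x \otimes G_y \otimes G_{x+y}$, using anonymity to assign the active player to the same index across the three components. Any product solution obtained from bracketing must satisfy expectation-neutrality, and since the profiles $(h,h,\ell)$ and $(\ell,\ell,h)$ both deliver player~1 a total payoff of $x+y$, their probabilities must coincide. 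This yields $\tfrac{r_3}{1-r_3} = \tfrac{r_1}{1-r_1}\cdot\tfrac{r_2}{1-r_2}$ for any triple $(r_1, r_2, r_3)$ drawn from $S(G_x)\times S(G_y)\times S(G_{x+y})$. Fixing $r_1, r_2$ pins down $r_3$, and swapping roles then forces $r_x$ to be single-valued for each $x$. Setting $f(x) = \log\tfrac{r_x}{1-r_x}$ turns the identity into Cauchy's equation $f(x+y) = f(x) + f(y)$ on $\R_{\geq 0}$. Expectation-monotonicity applied within $G_x$ yields $r_x \geq 1/2$, hence $f \geq 0$, which together with additivity makes $f$ nondecreasing; the only nondecreasing solution to Cauchy's equation on $\R_{\geq 0}$ is $f(x) = \lambda x$ for some $\lambda \geq 0$.

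To extend the $\LQRE_\lambda$ form to an arbitrary game $H = (B,v)$, fix $q \in S(H)$, a player $i$, and two actions $a, b \in B_i$. Swapping $a$ and $b$ if necessary, set $x := \E[v_i(a, q_{-i})] - \E[v_i(b, q_{-i})] \geq 0$. Using anonymity, I construct an isomorphic copy of $G_x$ in which the active player is $i$, and invoke bracketing to conclude that $q \times p \in S(H \otimes G_x)$ for any $p \in S(G_x)$. Because other players' strategies are independent across components and the test-game payoff for $i$ depends only on $i$'s own action, the profiles $(a, \ell)$ and $(b, h)$ deliver $i$ the same expected payoff in the composite. Expectation-neutrality applied to $q \times p$ then gives $q_i(a)\,(1 - r_x) = q_i(b)\, r_x$, equivalently $q_i(a)/q_i(b) = e^{\lambda x}$, which is exactly the $\LQRE_\lambda$ ratio condition. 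Since $i$, $a$, $b$ were arbitrary, $q \in \LQRE_\lambda(H)$.

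The main obstacle I foresee is extracting a single uniform $\lambda$ from the correspondence $S$: a priori $S(G_x)$ could assign multiple probabilities to $h$, in which case $\lambda$ would be ill-defined and the extension argument would break. The functional equation derived in the first paragraph handles this automatically, since fixing any two of $r_x, r_y, r_{x+y}$ forces the third, which symmetrically forces all three sets to be singletons. The remaining care is in the extension step, where one must ensure via anonymity that a test game can be grafted onto an arbitrary player of an arbitrary game, and that the resulting product strategy is indeed a valid solution of the composite — both of which follow directly from the stated axioms.
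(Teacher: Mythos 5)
Your proposal is correct and follows essentially the same route as the paper's proof: the same test games $G_x$, the same three-fold composite $G_x\otimes G_y\otimes G_{x+y}$ yielding Cauchy's equation for $f(x)=\log\big(r_x/(1-r_x)\big)$, and the same grafting of a calibrated test game onto an arbitrary game to extract the $\LQRE_\lambda$ ratio condition via expectation-neutrality. The only cosmetic differences are that you work on $\R_{\geq 0}$ and derive monotonicity of $f$ from within-game expectation-monotonicity rather than from the two-fold composite $G_x\otimes G_{x'}$, and that you make the single-valuedness of $r_x$ explicit, whereas the paper handles it implicitly by fixing a selection $p^x\in S(G_x)$.
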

\begin{proof}[Proof of Lemma~\ref{lm_DSC_number}]

We first show that $S$ coincides with $\LQRE_\lambda$ on a particular class of games where all but one player have a single action. 
Fix a player~$i$ and consider for each $x \in \R$ the game $G_x = (A,u)$, where $A_i = \{a_0,a_x\}$, and payoffs of the player~$i$ are given by $u_i(a_0,a_{-i})=0$ and  $u_i(a_x,a_{-i})=x$, while all the other players have a single action and receive a payoff of zero for all action profiles. Fix for each $G_x$ a solution $p^x \in S(G_x)$, denote by $r_x = p^x_i(a_x)$ the probability that player~$i$ chooses $a_x$ under $p^x$ in the game $G_x$, and define 
$$
  f(x) = \log \left(\frac{r_x}{1-r_x}\right),
$$ which is well-defined by interiority.

We aim to demonstrate that $f(x)=\lambda\cdot x$ and so $S(G_x)=\LQRE_\lambda(G_x)$. 
By bracketing, for all $x,y \in \R$ it holds that $p^x \times p^y \times p^{x+y} \in S(G_x\otimes G_y\otimes G_{x+y})$. The actions $(a_x,a_y,a_0)$ and $(a_0,a_0,a_{x+y})$ yield the same payoffs, and hence by  expectation-neutrality, 
\begin{align*}
r_x r_y (1-r_{x+y}) = (1-r_x)(1-r_y)r_{x+y}.
\end{align*}
Rearranging and taking logs, we get $f(x+y)=f(x)+f(y)$, Cauchy's functional equation. Since $S$  satisfies expectation-monotonicity, applying bracketing to the composite game $G_x \otimes G_{x'}$ for $x>x'$ yields that $r_x$ is increasing in $x$, since the payoff for $(a_x,a_0)$ is larger than for $(a_{0},a_{x'})$, and so  
\begin{align*}
    r_x(1-r_{x'})\geq (1-r_x)r_{x'}.
\end{align*} Thus $f$ is nondecreasing. Since all monotone solutions to the Cauchy equation are linear, $f(x)=\lambda x$, for some $\lambda\geq 0$.

Fix any $G=(B,v)$ and $q\in S(G),$ with $b,c \in B_i$. Let $x=\E[v_i(b,q_{-i})]$ and $y=\E[v_i(c,q_{-i})]$. By bracketing, $s:=q \times p^{x-y}\in S(G \otimes G_{x-y})$. Let $w$ denote the payoff map for $G \otimes G_{x-y}$. Note that $\E[w_i((b,a_0),s_{-i})]=\E[w_i((c,a_{x-y}),s_{-i})]=x$. By expectation-neutrality, 
$$
    q_i(b)(1-r_{x-y})=q_i(c)r_{x-y}.
$$ 
Rearranging, we have $$
  \frac{q_i(b)}{q_i(c)}=\frac{r_{x-y}}{1-r_{x-y}}=\exp(f(x-y))=\exp(f(x))\exp(f(-y)).
$$

Since $c$ was arbitrary, $$q_i(b) \propto \exp (f(x))  =\exp (\lambda x) = \exp (\lambda \E[v_i(b,q_{-i})]), $$ for some $\lambda \geq 0$. By anonymity, this holds for all players~$i$.
\end{proof}

We are now ready to prove Theorem \ref{th_Nash QRE}.
\begin{proof}[Proof of Theorem~\ref{th_Nash QRE}]

Consider the game $G$ with action sets $A_1=\{h,\ell\}$ and $A_i=\{c\}$ for $i\neq 1$, and where the payoff for player~$1$ is $1$ when playing $h$ and $0$ when playing $\ell$. Let $p \in S(G)$. By Claim~\ref{cl_dominated},  if $p_1(h)=1$, then $S$ is a refinement of Nash.

For the remainder of this proof, suppose that $p_1(h)<1$. We will show $S$ is a refinement of $\text{LQRE}_{\lambda}$ for some $\lambda \geq 0$. First, we show that $S$ satisfies interiority. By anonymity, it is without loss of generality to suppose, toward a contradiction, that there is a game $H=(B,v)$ with $q \in S(H)$ and $a,b \in B_1$ such that $q_1(a) =0< q_1(b)$. Let $n>\E[v_1(b,q_{-1})]-\E[v_1(a,q_{-1})]$, and, as previously, consider that $r := q \times p^n\in S(H\otimes G^n)$. However, \[\E[u_1((h,\dots,h,a),r_{-1})]=n+\E[v_1(a,q_{-1})]>\E[v_1(b,q_{-1})]=\E[u_1((\ell,\dots,\ell,b),r_{-1})],\] while \[r_1(h,\dots,h,a)=(p_1(h))^nq_1(a)=0<r_1(\ell,\dots,\ell,b)=(p_1(\ell))^nq_1(b),\] violating expectation-monotonicity.

We now show that $S$ also satisfies expectation-neutrality. By anonymity, it is without loss of generality to suppose, toward a contradiction, that there is a game $H=(B,v)$ with $q \in S(H)$ and $a,b \in B_1$ such that $\E[v_1(a,q_{-1})]=\E[v_1(b,q_{-1})]$, while $q_1(a) < q_1(b)$. By interiority we may choose $n$ such that $\left(\frac{q_1(b)}{q_1(a)}\right)^n>\frac{p_1(h)}{p_1(\ell)}$. By bracketing, $r:=q^n \times p\in S(H^n\otimes G)$. However, $\E[u_1((h,a,\dots,a),r_{-1})]>\E[u_1((\ell,b,\dots,b),r_{-1})]$, while \[r_1(h,a,\dots,a)=(q_1(a))^np_1(h)<(q_1(b))^np_1(\ell)=r_1(\ell,b,\dots,b),\] violating expectation-monotonicity.

Since $S$ satisfies the hypotheses of Lemma \ref{lm_DSC_number}, there is a $\lambda \geq 0$, such that for any $i$, $G=(A,u),$ and $p \in S(G)$,
\[
p_i(a) \propto \exp(\lambda\E[u_i(a,p_{-i})]).
\]
\end{proof}

\section{Preliminary Results on Monotone Additive Statistics} \label{ap_MAS}

In this appendix we prove two results on monotone additive statistics that will be useful in a number of our proofs. See \S\ref{sec:SRE} for a discussion of monotone additive statistics and related notation.

\cite*{mu2024monotone} provide a characterization of monotone additive statistics on the domain of all compactly supported lotteries. However, lotteries originating from finite normal form games necessarily have a finite support. We show that the characterization of monotone additive statistics extends to those statistics that are defined on this restricted set of lotteries.

Let $\Delta_\fin$ be the set of all lotteries with finitely many real-valued outcomes. We denote by $\Delta_{\Q}\subset \Delta_\fin$ the set of all lotteries with a finite number of outcomes and rational-valued CDFs. Lotteries from $\Delta_{\Q}$ are convenient because of the following remark, which we use below to construct games that result in a prescribed payoff distribution of an action under minimal assumptions on the solution concept. 

\begin{remark}\label{rm_lottery_rv}
Any $X \in \Delta_{\mathbb{Q}}$ can be represented as a random variable defined on the probability space $(\Omega=\{1,\dots,m\},2^{\Omega},\nu)$, where $\nu$ is the uniform distribution, $m \in \N$, and $X \colon \Omega \to \R$.
\end{remark}

Our first step is to demonstrate that $\Delta_\Q$ is rich enough to approximate any compactly supported lottery in a monotone way, in particular, any lottery in $\Delta_\fin$. For a compactly supported lottery $X$ with CDF $F$, we denote by $\ubar X_n$ the lottery in $\Delta_\Q$ with CDF $\ubar F_n$ defined by $\ubar F_n(t)=\frac{1}{n!}\ceil{n!\cdot F(t)}$. Likewise, denote by $\bar X_n$ the lottery with CDF $\bar F_n$ defined by $\bar F_n(t)=\frac{1}{n!}\floor{n!\cdot F(t)}$. Note that for all $n$, 
\begin{equation}\label{eq_X_n_increasing}
    \ubar X_n \leq_{\text{FOSD}} \ubar X_{n+1} \leq_{\text{FOSD}} X \leq_{\text{FOSD}} \bar X_{n+1} \leq_{\text{FOSD}} \bar X_{n},
\end{equation} since first-order dominance is equivalent to having a lower CDF (pointwise). Note also that if $X \in \Delta_\Q$ then $\ubar F_n = F  = \bar F_n$ for all $n$ large enough, since $F$ only takes finitely many rational values, and thus $n!\cdot F(t)$ is an integer once $n!$ is larger than these values' lowest common denominator.

We seek to show that the sequences of lotteries $(\ubar X_n)_n$ and $(\bar X_n)_n$ provide an approximation for $X$. To this end we require a technical lemma about stochastic dominance in large numbers. Recall that we identify a lottery $X$ with a distribution over $\R$, and write $X+Y$ for the lottery corresponding to the sum of outcomes independently sampled from $X$ and $Y$. For $m \in \N$, let $X^m$ denote the sum of $m$ independent copies of $X$.
\begin{definition}
     Let $X$ and $Y$ be compactly supported lotteries. Say \textit{$X$ dominates $Y$ in large numbers}, denoted $X >_L Y$, if there exists $M \in \N$ such that for all $m \geq M$, $X^{m} \g Y^{m}$.
\end{definition}
The following result is due to \cite*{aubrun2007catalytic}:
\begin{proposition}[Aubrun and Nechita]\label{lm_large_numbers aubrun nechita}
    Let $X$ and $Y$ be compactly supported lotteries with $K_a[X] > K_a[Y]$ for all $a \in \overline{\R}$. Then $X >_L Y$.
\end{proposition}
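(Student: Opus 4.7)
The plan is, since the statement is attributed to Aubrun and Nechita, to appeal to their theorem directly. For a self-contained argument, I would reprove it via classical Cramér large deviations, by showing that for every $t \in \R$ one has $P(X^m \geq t) \geq P(Y^m \geq t)$ for all sufficiently large $m$, with a strict gap somewhere. The proof divides the range of $t$ into bulk, tail, and support-boundary regimes and handles each regime using a different tool.

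The first step is the observation that, since $X$ and $Y$ are compactly supported, $a \mapsto K_a[X]$ and $a \mapsto K_a[Y]$ are continuous on the extended line $\overline{\R}$, with $K_{\pm\infty}$ equal to the essential supremum and infimum. Hence the strictly positive continuous function $K_\cdot[X]-K_\cdot[Y]$ has a uniform positive lower bound $c > 0$ on the compact set $\overline{\R}$. In the bulk regime, where $K_0[Y] < t/m < K_0[X]$, the weak law of large numbers gives $P(X^m \geq t) \to 1$ and $P(Y^m \geq t) \to 0$. In the upper-tail regime $t/m > K_0[X]$, Cramér's theorem (upper bound by Markov/exponential tilting, matching lower bound by tilting the measure to make $t/m$ the mean) gives exponential decay rates $\sup_{a>0}\{a\cdot(t/m) - K_a[\cdot]\}$ for the two probabilities; since $K_a[X] > K_a[Y]$ for all $a > 0$ with the uniform gap $c$, the rate for $X$ is strictly smaller by at least $c\cdot a$ uniformly in relevant $a$, and $P(X^m \geq t) \geq P(Y^m \geq t)$ for large $m$. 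A symmetric argument at $a < 0$, applied to $P(X^m < t)$ and $P(Y^m < t)$, handles the lower-tail regime $t/m < K_0[Y]$. Finally, $K_{\infty}[X] > K_{\infty}[Y]$ and $K_{-\infty}[X] > K_{-\infty}[Y]$ translate into $m\max X > m\max Y$ and $m\min X > m\min Y$, extending $\supp(X^m)$ strictly beyond $\supp(Y^m)$ at both ends and yielding the extreme boundary regimes, including the strict FOSD inequality.

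The main obstacle will be gluing these regimes uniformly in $t$. Pointwise-in-$t$ convergence as $m \to \infty$ is not enough; I need a single $m$ that works for every $t \in \R$ simultaneously. This requires uniform Chernoff-type exponential bounds that remain valid across the LLN-to-large-deviations transition, and near the support endpoints where the Cramér rate functions blow up. This is precisely the delicate step that Aubrun and Nechita reduce to a finite typicality statement; adapting their argument (or invoking the theorem as a black box) is the most efficient route in both cases.
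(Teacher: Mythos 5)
Your primary plan---to invoke the Aubrun--Nechita theorem directly---is exactly what the paper does: the paper offers no proof of this proposition and simply cites \cite{aubrun2007catalytic}, so your approach matches. Your supplementary large-deviations sketch is consistent with how the cited result is actually established, and you correctly identify the uniformity-in-$t$ gluing across the bulk, tail, and support-boundary regimes as the genuinely delicate step that would need to be carried out (or borrowed) to make the argument self-contained.
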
 
If $\Phi\colon \Delta_\Q \to \R$ is monotone and additive and $X,Y \in \Delta_\Q$ then $X>_L Y$ implies that $\Phi(X)\ge \Phi(Y)$, since
\begin{align*}
    \Phi(X) = \frac{1}{m}\Phi(X^m) \geq \frac{1}{m}\Phi(Y^m) = \Phi(Y),
\end{align*}
where the first and last equalities follow from additivity, and the inequality holds for $m$ large enough since $X >_L Y$, and by the monotonicity of $\Phi$.

\begin{lemma}\label{lm_lim_Phi_converge}
    Let $\Phi \colon \Delta_\Q \to \R$ be a monotone additive statistic and $X\in\Delta_\fin$. Then  \begin{equation*}\
         \lim_n \Phi[\ubar X_n] = \lim_n \Phi[\bar X_n].
    \end{equation*}
\end{lemma}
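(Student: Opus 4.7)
The plan is as follows. First, I observe that both limits exist. The FOSD chain in~\eqref{eq_X_n_increasing} together with the FOSD-monotonicity of $\Phi$ gives that $\Phi[\ubar X_n]$ is non-decreasing and $\Phi[\bar X_n]$ is non-increasing in $n$. Moreover, $\ubar F_n(t)\ge F(t)\ge \bar F_m(t)$ for all $n,m$ yields $\ubar X_n\leq_{\mathrm{FOSD}}\bar X_m$, so $\Phi[\ubar X_n]\le \Phi[\bar X_m]$. Hence $L:=\lim_n\Phi[\ubar X_n]$ and $U:=\lim_n\Phi[\bar X_n]$ both exist and $L\le U$, and the task is to prove the reverse inequality.

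Second, I fix a rational $\epsilon>0$ and aim to show $\Phi[\bar X_n]\le \Phi[\ubar X_n]+\epsilon$ for all sufficiently large $n$. Since $\epsilon$ is rational, the shifted lottery $\ubar X_n+\epsilon$ belongs to $\Delta_{\Q}$, and additivity together with $\Phi[\epsilon]=\epsilon$ gives $\Phi[\ubar X_n+\epsilon]=\Phi[\ubar X_n]+\epsilon$, so my goal becomes $\Phi[\bar X_n]\le \Phi[\ubar X_n+\epsilon]$. Proposition~\ref{lm_large_numbers aubrun nechita} applied to the pair $(\ubar X_n+\epsilon,\bar X_n)$, combined with the monotonicity and additivity of $\Phi$ on $\Delta_{\Q}$, reduces this to verifying the strict pointwise bound
\[
K_a[\bar X_n]<K_a[\ubar X_n]+\epsilon\qquad\text{for every }a\in\overline{\R}
\]
at each large enough $n$.

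The main obstacle is establishing this bound uniformly in $a\in\overline{\R}$. For each fixed $a\in\R$, bounded convergence (using that the supports of $\ubar X_n$ and $\bar X_n$ lie in $[\min X,\max X]$ and that their CDFs converge to $F$) gives $K_a[\ubar X_n]\to K_a[X]$ and $K_a[\bar X_n]\to K_a[X]$; at $a=\pm\infty$, the extremes of $\ubar X_n$ and $\bar X_n$ coincide with those of $X$ once $n$ is large, yielding the same limits. The convergence is monotone in $n$ by the FOSD-monotonicity of $K_a$, and the map $a\mapsto K_a[X]$ is continuous on the compact space $\overline{\R}$ (continuity at $0$ follows from a Taylor expansion of $\log \E \ee^{aX}$; continuity at $\pm\infty$ is a standard Laplace-type estimate exploiting that $X$ is finitely supported with positive mass on $\min X$ and $\max X$). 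Dini's theorem then upgrades pointwise convergence to uniform convergence in $a$, which delivers the required strict bound for all large $n$. Sending $n\to\infty$ and then $\epsilon\downarrow 0$ in $\Phi[\bar X_n]\le \Phi[\ubar X_n]+\epsilon$ yields $U\le L$, closing the argument.
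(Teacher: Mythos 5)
Your proof is correct and follows essentially the same route as the paper's: establish pointwise convergence $K_a[\bar X_n]-K_a[\ubar X_n]\to 0$ on the compact space $\overline{\R}$, upgrade to uniform convergence via Dini's theorem, apply the Aubrun--Nechita proposition to the $\epsilon$-shifted pair $(\ubar X_n+\epsilon,\bar X_n)$, and let $\epsilon\downarrow 0$. The only cosmetic differences are that you make the existence of the two limits explicit at the outset and restrict to rational $\epsilon$ (which is unnecessary, since membership in $\Delta_\Q$ depends on the CDF values, not the outcomes, so any real shift is admissible).
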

\begin{proof}
    Let $F$ denote the CDF of $X$. Since $X\in\Delta_\fin$, there is a closed interval $I\subset \R$ containing all the values $X$ that takes with positive probability. Denote by $\Delta_I$ the set of all lotteries supported on~$I$. Lotteries $X$, $\bar X_n$, and $\ubar X_n$ belong to $\Delta_I$.
    
    Define a sequence of functions $(g_n)_n$ with $g_n\colon \overline \R\to \R$ by $g_n(a)=K_a[\bar X_n]-K_a[\ubar X_n]$. 
    For any $a\in \R$, we have $\lim_n g_n(a)=0$, since $\bar X_n$ and $\ubar X_n$  converge to $X$ weakly, and $K_a$, considered as a mapping $ \Delta_I \to \R$, is continuous in the weak topology. 
    Consider now $a=\pm\infty$. Recall that $K_{-\infty}$ and $K_{\infty}$ are the leftmost and the rightmost points of the support, respectively. Therefore, $\lim_n g_n(a)=0$ also for $a=\pm\infty$. Indeed,
     $K_{-\infty}[\ubar X_n]=K_{-\infty}[X]$ for all $n$ and $K_{-\infty}[\bar X_n]=\min \{t \mid \bar F_n(t) > 0\}$ converges to $K_{-\infty}[X]=\min \{t \mid F(t) > 0\}$.
    Thus $\lim_n g_n(-\infty)=0$, and an analogous argument shows that $\lim_n g_n(\infty)=0$. 
     
    As each $K_a$ is monotone with respect to first-order dominance, $g_{n+1}(a)\leq g_n(a)$ for any~$a$ and~$n$. Note that each $g_n$ is a continuous function of $a$. Thus, $(g_n)_n$ is a decreasing sequence of continuous functions such that $g_n(a)\to 0$ for each $a$ from the compact set $\overline \R$.
    By Dini's theorem, monotone pointwise convergence to a continuous function on a compact set implies uniform convergence.    
    Thus, for each $\varepsilon > 0$, there exists $M \in \N$ such that for $n\geq M$, 
 \[K_a[\ubar X_n +\varepsilon]=K_a[\ubar X_n] +\varepsilon > K_a[\bar X_n],\]
for all $a \in \overline \R$. By Proposition~\ref{lm_large_numbers aubrun nechita} and the remark following it, $\Phi[\ubar X_n] + \varepsilon \geq \Phi[\bar X_n]$. 
Taking the limit as $\varepsilon$ goes to zero, we conclude that $\lim_n \Phi[\ubar X_n] = \lim_n \Phi[\bar X_n]$.

\end{proof}

\begin{lemma}\label{lm_rational_MAS}
 Let $\Phi \colon \Delta_{\mathbb Q} \to \R$ be a monotone additive statistic. Then $$\Phi[X]=\int_{\overline{\R}}K_a[X]\,\dd \mu(a)$$ for some Borel probability measure $\mu$ on $\overline{\R}$.
\end{lemma}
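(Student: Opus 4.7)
The plan is to extend $\Phi$ from $\Delta_\Q$ to the space of all compactly supported lotteries, verify that the extension is again a monotone additive statistic, and then invoke the Mu et al.\ characterization stated in the text. For a compactly supported lottery $X$, I would define
\[
\tilde\Phi[X] \;=\; \lim_n \Phi[\ubar X_n].
\]
This limit exists because $(\ubar X_n)_n$ is FOSD-increasing by \eqref{eq_X_n_increasing}, so $(\Phi[\ubar X_n])_n$ is monotone, and it is bounded above by $\Phi[\bar X_1]$. Lemma~\ref{lm_lim_Phi_converge} guarantees that $\tilde\Phi[X]$ coincides with $\lim_n \Phi[\bar X_n]$, so either the lower or upper rational approximation computes it. Since $\ubar X_n = X$ for all large $n$ when $X \in \Delta_\Q$, the map $\tilde\Phi$ extends $\Phi$; the normalization $\tilde\Phi[c]=c$ on constants is immediate, and monotonicity is straightforward since $X \leq_{\text{FOSD}} Y$ forces $\ubar X_n \leq_{\text{FOSD}} \ubar Y_n$ through a pointwise comparison of CDFs, yielding $\Phi[\ubar X_n]\leq \Phi[\ubar Y_n]$ before passing to the limit.

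The main obstacle is showing that $\tilde\Phi$ is additive for independent compactly supported lotteries. The subtle point is that $\ubar X_n + \ubar Y_n$ need not equal $\ubar{(X+Y)}_n$, so additivity on $\Delta_\Q$ does not transfer automatically. I would resolve this by a sandwich argument. Since independent convolution preserves both $\Delta_\Q$-membership and the FOSD order, transitivity yields
\[
\ubar X_n + \ubar Y_n \;\leq_{\text{FOSD}}\; X+Y \;\leq_{\text{FOSD}}\; \bar{(X+Y)}_n
\]
and
\[
\ubar{(X+Y)}_n \;\leq_{\text{FOSD}}\; X+Y \;\leq_{\text{FOSD}}\; \bar X_n + \bar Y_n,
\]
with all six lotteries in $\Delta_\Q$. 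Applying the additivity and monotonicity of $\Phi$ on $\Delta_\Q$, these chains give $\Phi[\ubar X_n]+\Phi[\ubar Y_n] \leq \Phi[\bar{(X+Y)}_n]$ and $\Phi[\ubar{(X+Y)}_n] \leq \Phi[\bar X_n]+\Phi[\bar Y_n]$. Letting $n\to\infty$ and using Lemma~\ref{lm_lim_Phi_converge} to identify each of these limits with the corresponding value of $\tilde\Phi$ pins both inequalities to the single identity $\tilde\Phi[X+Y]=\tilde\Phi[X]+\tilde\Phi[Y]$.

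Having established that $\tilde\Phi$ is a monotone additive statistic on all compactly supported lotteries, I would invoke the Mu et al.\ representation theorem recalled in the text to obtain a Borel probability measure $\mu$ on $\overline\R$ with $\tilde\Phi[X]=\int_{\overline\R} K_a[X]\, \dd\mu(a)$. Restricting to $X \in \Delta_\Q$ yields the stated formula for $\Phi$.
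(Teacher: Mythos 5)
Your proposal is correct and follows the same overall strategy as the paper: extend $\Phi$ to compactly supported lotteries via $\lim_n \Phi[\ubar X_n]$, check that the extension is a monotone additive statistic, and invoke the \cite{mu2024monotone} representation. The one place where you genuinely diverge is the additivity step, which is indeed the only delicate point. The paper handles it by observing that $\ubar X_n + \ubar Y_n$ and $(\underline{X+Y})_n$ both converge in distribution to $X+Y$ with matching extremes, and then \emph{re-running} the Dini/large-numbers argument from the proof of Lemma~\ref{lm_lim_Phi_converge} on this new pair of sequences to conclude their $\Phi$-values have a common limit. Your sandwich
\[
\ubar X_n + \ubar Y_n \;\leq_{\text{FOSD}}\; \overline{(X+Y)}_n,
\qquad
\underline{(X+Y)}_n \;\leq_{\text{FOSD}}\; \bar X_n + \bar Y_n,
\]
obtained by transitivity through $X+Y$, lets you instead use Lemma~\ref{lm_lim_Phi_converge} purely as a black box (applied to $X$, $Y$, and $X+Y$) together with monotonicity and additivity of $\Phi$ on $\Delta_\Q$; this avoids repeating the Aubrun--Nechita machinery and is the cleaner write-up of the two. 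The only caveat, which you share with the paper, is that additivity must hold for all independent compactly supported $X,Y$ before Mu et al.\ applies, while Lemma~\ref{lm_lim_Phi_converge} is stated for $\Delta_\fin$; its proof goes through verbatim for compactly supported lotteries, so this is cosmetic.
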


\begin{proof}[Proof of Lemma~\ref{lm_rational_MAS}]
Let $\Phi \colon \Delta_{\mathbb{Q}} \rightarrow \R$ be a monotone additive statistic and fix a lottery~$X$ that is compactly supported. 
As above, we denote by $\ubar X_n$ the lottery in $\Delta_\Q$ with CDF $\ubar F_n(t)=\frac{1}{n!}\ceil{n!\cdot F(t)}$. 
Define the real-valued function $\Psi$ on the set of compactly supported lotteries by $\Psi[X]=\lim_{n \rightarrow \infty}\Phi[\ubar X_n]$. By \eqref{eq_X_n_increasing}, $( \ubar X_n)_n$ is an increasing sequence in terms of first-order dominance, so $\Psi[X]\geq \Phi[\ubar X_n]$ for all $n$. For $X \in \Delta_\Q$, $\ubar X_n$ and $X$ have the same distribution for all $n$ large enough (see the remark after \eqref{eq_X_n_increasing}), $\Psi[X] =\lim_n \Phi[\ubar X_n]=\Phi[X]$, i.e., $\Psi$ extends $\Phi$. 

Let $X,Y$ be compactly supported lotteries. If $X \geq_{\text{FOSD}} Y$, then $\ubar X_n  \geq_{\text{FOSD}} \ubar Y_n$ for all~$n$, so $\Psi[X] \geq \Psi[Y]$, and $\Psi$ is monotone. Moreover, if $X$ and $Y$ are independent, then the minima and maxima of $\ubar X_n + \ubar Y_n$ and $(\underline{X + Y})_n$ converge to those of $X + Y$. Additionally, they both converge in distribution to $X+Y$, so,   by the same argument as in the proof of Lemma~\ref{lm_lim_Phi_converge},  
$\lim_n\Phi[\ubar X_n + \ubar Y_n]=\lim_n\Phi[(\underline{X + Y})_n]$. It thus holds that \begin{align*}
  \Psi[X+Y]&= \lim_{n \to \infty}\Phi[(\underline{X + Y})_n]= \lim_{n \rightarrow \infty}\Phi[\ubar{X}_n + \ubar{Y}_n ]\\
  &= \lim_{n \rightarrow \infty}\Phi[\ubar{X}_n] + \Phi[\ubar{Y}_n] = \Psi[X]+\Psi[Y].
\end{align*}

Hence, by the characterization of \cite*{mu2024monotone}, $\Psi[X]=\int_{\overline{\R}}K_a[X]\,\dd \mu(a)$ for some Borel probability measure $\mu$ on $\overline{\R}$. On its domain $\Delta_\Q$, the statistic~$\Phi$ coincides with $\Psi$ and thus admits the same representation. 
\end{proof}

\begin{lemma}\label{lm_additive_MAS}
    Let $\Phi \colon \Delta_{\mathbb Q} \to \R$ be a monotone statistic. If $\Phi$ is additive for all lotteries (not necessarily independent), then $\Phi$ is the expectation. 
\end{lemma}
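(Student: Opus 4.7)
The plan is to exploit the strong additivity assumption together with a cyclic-shift construction that makes a sum of correlated copies of $X$ deterministic. Monotonicity will not play an active role; what matters is that $\Phi$ is additive on arbitrary joint distributions, not merely on independent ones, which allows anti-correlated constructions that are not available in the setting of Lemma~\ref{lm_rational_MAS}.

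First, I would fix an arbitrary $X \in \Delta_{\mathbb Q}$ and, using Remark~\ref{rm_lottery_rv}, represent it as a random variable on the uniform probability space $\Omega = \{1,\ldots,m\}$ with $X(i) = x_i$. On the same space, I would define $m$ jointly distributed copies via the cyclic shift $\sigma \colon \omega \mapsto (\omega \bmod m) + 1$, setting $X_k(\omega) = X(\sigma^k(\omega))$ for $k=1,\ldots,m$. Since $\sigma$ is a bijection on $\Omega$, each $X_k$ is uniform on $\{x_1,\ldots,x_m\}$ and hence has the same distribution as $X$. On the other hand, because $\sigma$ is a single $m$-cycle, for every fixed $\omega$ the orbit $\{\sigma(\omega),\sigma^2(\omega),\ldots,\sigma^m(\omega)\}$ is all of $\Omega$, so
\begin{align*}
    X_1(\omega) + X_2(\omega) + \cdots + X_m(\omega) \;=\; \sum_{i=1}^{m} x_i \;=\; m\,\E[X]
\end{align*}
is a deterministic constant.

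Next, I would apply the hypothesis that $\Phi$ is additive even for dependent lotteries. Since a statistic depends only on the distribution of its argument and each $X_k$ is distributed as $X$,
\begin{align*}
    m\,\Phi[X] \;=\; \sum_{k=1}^{m} \Phi[X_k] \;=\; \Phi[X_1+\cdots+X_m] \;=\; \Phi\bigl[m\,\E[X]\bigr] \;=\; m\,\E[X],
\end{align*}
where the last equality uses the normalization $\Phi[c]=c$ built into the definition of a statistic. Dividing by $m$ gives $\Phi[X] = \E[X]$, and since $X \in \Delta_{\mathbb Q}$ was arbitrary, $\Phi$ coincides with the expectation on $\Delta_{\mathbb Q}$.

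The main ``obstacle'' is purely conceptual rather than technical: recognizing that strengthening additivity from independent to arbitrary joint distributions is precisely what enables the cyclic-shift construction, since it forces $\Phi$ to be consistent with perfectly anti-correlated copies collapsing to a constant. No approximation via $K_a$ or appeal to the Aubrun–Nechita result is required, and the monotonicity assumption—though stated—is redundant for the conclusion.
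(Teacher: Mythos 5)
Your proof is correct, and it takes a genuinely different route from the paper's. The paper decomposes $X=\sum_\omega I(\omega)X(\omega)$ into indicator pieces, observes that each coordinate map $f_\omega(x)=\Phi[I(\omega)\cdot x]$ satisfies Cauchy's functional equation, invokes monotonicity to conclude each $f_\omega$ is linear, and then uses the fact that $\Phi$ depends only on the distribution (permutation invariance on the uniform space) together with $\Phi[c]=c$ to pin down the weights as $1/m$. Your cyclic-shift argument instead produces $m$ identically distributed copies of $X$ whose pointwise sum is the constant $m\,\E[X]$, and a single telescoped application of the (dependent) additivity hypothesis plus the normalization $\Phi[c]=c$ gives $m\,\Phi[X]=m\,\E[X]$ directly. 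All the intermediate partial sums live on the same uniform space $\{1,\dots,m\}$, so their distributions have rational CDFs and remain in $\Delta_{\mathbb Q}$, which keeps every application of the hypothesis legitimate. Your approach is shorter, bypasses Cauchy's equation entirely, and — as you correctly note — shows that the monotonicity assumption is not needed for this lemma, whereas the paper's proof genuinely uses it (additive functions on $\R$ need a regularity condition such as monotonicity to be linear). What the paper's route buys in exchange is the explicit intermediate representation $\Phi[X]=Z\cdot X$, which makes transparent exactly where permutation invariance and the normalization each enter; but as a proof of the stated lemma, yours is complete and arguably cleaner.
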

\begin{proof}[Proof of Lemma \ref{lm_additive_MAS}]
As in Remark~\ref{rm_lottery_rv}, we consider each finite probability space $(\Omega= \{1,\dots,m\}, 2^\Omega, \nu)$, where $\nu$ is the uniform distribution on $\Omega$. Each $X \in \Delta_{\mathbb Q}$ can be represented as a random variable $X \colon \Omega \to \R$ for a large enough $m$. We require that $\Phi[X+Y] = \Phi[X]+\Phi[Y]$ for any random variables $X,Y \colon \Omega\to \R$. 

For $X \colon \Omega \to \R$, we may write $X= \sum_\omega I(\omega) X(\omega)$, where $I$ denotes the indicator function. For each $\omega \in \R$, define $f_\omega \colon \R \to \R$ by $f_\omega(x) = \Phi[I(\omega)\cdot x]$, so $\Phi[X] = \sum_\omega f_\omega(X(\omega))$. It follows that each $f_\omega$ is a monotone additive function and is therefore linear. Thus there is $Z \in \R^\Omega$ such that $\Phi[X]=Z\cdot X$ for all $X$. Since $\Phi$ only depends on the distribution of $X$ and $\mu$ is uniform, $\Phi[X]=\Phi[X \circ \pi] $ for any permutation $\pi \colon \Omega \to \Omega$, which is only possible for constant $Z$. Finally, since $\Phi$ is a statistic, it maps any constant random variable to its value, so $Z(\omega)=\frac{1}{m}$ for all $\omega$. We have thus shown that for any $X\colon \Omega \to\R$, $\Phi[X]=\frac{1}{|\Omega|}\sum_\omega X(\omega)=\E[X]$.
\end{proof}

\section{Proof of Proposition \ref{pr_Nash_Phi_Exist}}\label{ap_pr_1}
We show the two parts of the proposition in two separate claims: Claim~\ref{clm_lqre} and Claim~\ref{clm_nash}.

Given $\lambda\geq 0$ and $\Phi = \int K_t\,\dd\mu(t)$, we prove in Claim~\ref{clm_lqre} that every game $G=(A,u)$ has an $\LP$ equilibrium. A natural approach would be to follow the proof of the existence of quantal response equilibria by defining the quantal response operator $T \colon \prod_i \Delta A_i \to \prod_i \Delta A_i$ by $T_i(p)(a_i) \propto \exp(\lambda \Phi[u_i(a_i,p_{-i})])$ and applying a fixed point theorem. The issue is that when $\mu$ has positive mass at $t=\infty$ or $t=-\infty$, then $T$ is not continuous, and so we cannot apply Brouwer's fixed point theorem. 

To overcome this issue, we define a closely related, continuous operator $T'$, apply the fixed point theorem to it, and then show that this is also a fixed point of $T$, and hence an equilibrium. To define $T'$, fix  $i$, $a_i\in A_i$, and $p\in \prod_i\Delta A_i$. Let $L \colon \overline{\R} \times A_i \times \prod_{j \neq i}\Delta A_j \to \R$ be given by
    \begin{align}
        \label{eq_Lt}
      L(t,a_i,p_{-i}) = 
      \begin{cases}
          K_t[u_i(a_i,p_{-i})]& \text{if } t\in \R \\
          \min_{a_{-i}}u_i(a_i,a_{-i})& \text{if } t = -\infty \\
          \max_{a_{-i}}u_i(a_i,a_{-i})& \text{if } t = +\infty.
      \end{cases}
    \end{align}
    That is, when $t \in \R$, $L(t,a_i,p_{-i})$ is equal to the monotone additive statistic $K_t$, evaluated on the lottery that player~$i$ gets when playing $a_i$ and when the rest of the players play $p_{-i}$. For $t \in \{-\infty,+\infty\}$, 
    $L(t,a_i,p_{-i})$ is independent of $p_{-i}$, and returns the minimum or maximum payoff that the action $a_i$ can yield.  Let $\Phi'[a_i,p_{-i}]=\int_{\overline{\R}} L(t,a_i,p_{-i})\,\dd \mu(t)$. Note that if $p_{-i}$ is totally mixed then $\Phi'[a_i,p_{-i}]=\Phi[u_i(a_i,p_{-i})]$. Note also that if $\mu(\{-\infty,+\infty\})=0$ then $\Phi'[a_i,p_{-i}]=\Phi[u_i(a_i,p_{-i})]$ for all $p_{-i}$. 

    Define $T' \colon \prod_i \Delta A_i \to \prod_i \Delta A_i$ by
    \begin{align}
        \label{eq_Tprime}
        T_i'(p)(a_i) \propto \exp(\lambda \Phi'[a_i,p_{-i}]).
    \end{align}
    To prove the existence of $\LP$ equilibria, we show that $T'$ is continuous, and that its fixed points coincide with those of $T$, and hence are equilibria.
\begin{claim}
    \label{clm_lqre}
    Let $\Phi=\int K_a\, \dd \mu(a)$ be a monotone additive statistic. Then
    there is an $\LQRE_{\lambda\Phi}$ equilibrium for every game, for every $\lambda \geq 0$.

\end{claim}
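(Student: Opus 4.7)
The plan is to follow the approach outlined in the excerpt: apply Brouwer's fixed point theorem to the modified operator $T'$ defined in \eqref{eq_Tprime}, and then argue that any fixed point of $T'$ is in fact an $\LP$ equilibrium. The reason to work with $T'$ instead of the natural operator $T_i(p)(a_i)\propto \exp(\lambda\Phi[u_i(a_i,p_{-i})])$ is that when $\mu$ places mass on $\pm\infty$, the functional $p\mapsto K_{\pm\infty}[u_i(a_i,p_{-i})]$ is discontinuous at profiles where some opponent plays a pure action, whereas the replacement $\min_{a_{-i}}u_i(a_i,a_{-i})$ and $\max_{a_{-i}}u_i(a_i,a_{-i})$ in \eqref{eq_Lt} does not depend on $p_{-i}$ at all.

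First, I would verify that $T'$ is a continuous self-map of the compact convex set $\prod_i\Delta A_i$. For each $t\in\R$, $p_{-i}\mapsto K_t[u_i(a_i,p_{-i})]$ is continuous because the lottery $u_i(a_i,p_{-i})$ depends continuously (in the weak topology) on $p_{-i}$ and $K_t$ is continuous on the set of lotteries supported on the bounded set $\{u_i(a_i,a_{-i}):a_{-i}\in A_{-i}\}$. For $t\in\{-\infty,+\infty\}$, $L(t,a_i,p_{-i})$ does not depend on $p_{-i}$. Since all payoffs of $G$ lie in a bounded interval, $|L(t,a_i,p_{-i})|$ is uniformly bounded across $t$ and $p_{-i}$, so dominated convergence yields the continuity of $p_{-i}\mapsto \Phi'[a_i,p_{-i}]=\int_{\overline\R} L(t,a_i,p_{-i})\,\dd\mu(t)$, and hence of $T'$. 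By Brouwer, $T'$ has a fixed point $p^*$.

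It remains to show $p^*$ is an $\LP$ equilibrium, i.e.\ a fixed point of $T$. Since $T'_i(p^*)(a_i)$ is proportional to an exponential, $p^*_i(a_i)>0$ for every $i$ and every $a_i\in A_i$, so $p^*$ is totally mixed. Consequently, for each $i$ and each $a_i\in A_i$, the product measure $p^*_{-i}$ assigns positive probability to every $a_{-i}\in A_{-i}$, so the lottery $u_i(a_i,p^*_{-i})$ has support exactly equal to $\{u_i(a_i,a_{-i}):a_{-i}\in A_{-i}\}$. Its minimum and maximum therefore coincide with $\min_{a_{-i}}u_i(a_i,a_{-i})$ and $\max_{a_{-i}}u_i(a_i,a_{-i})$, so $L(t,a_i,p^*_{-i})=K_t[u_i(a_i,p^*_{-i})]$ for every $t\in\overline\R$. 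Integrating against $\mu$ gives $\Phi'[a_i,p^*_{-i}]=\Phi[u_i(a_i,p^*_{-i})]$, and therefore
\[
  p^*_i(a_i)=T'_i(p^*)(a_i) \;\propto\; \exp\bigl(\lambda\,\Phi[u_i(a_i,p^*_{-i})]\bigr),
\]
which is exactly the $\LP$ condition. The main potential obstacle is the continuity argument for $T'$ when $\mu$ weights $\pm\infty$, but this is resolved cleanly because the offending quantities are replaced by the $p_{-i}$-independent pure-strategy min and max, and boundedness of payoffs in a finite game allows dominated convergence to handle the integral uniformly.
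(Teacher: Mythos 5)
Your proof is correct and follows essentially the same route as the paper's: define the modified operator $T'$ via $L$, establish continuity through dominated convergence, apply Brouwer's theorem, and then use the fact that the fixed point is totally mixed to conclude that $\Phi'$ agrees with $\Phi$ there. Your additional remark spelling out why the min and max of $u_i(a_i,p^*_{-i})$ coincide with the pure-strategy min and max at a totally mixed profile is a welcome bit of extra detail, but the argument is the same as in the paper.
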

\begin{proof}
    Fix $\lambda \geq 0$, $\Phi=\int K_a\, \dd \mu(a)$, and let $G=(A,u)$. Define $L$ and $T'$ as in \eqref{eq_Lt} and \eqref{eq_Tprime}. 

    For $t \in \R$, the map $p_{-i} \mapsto L(t, a_i,p_{-i})$ varies continuously in $p_{-i}$. It is also (trivially) continuous when $t\in \{-\infty,+\infty\}$, since then it does not depend on $p_{-i}$. It follows that $L(t,a_i,p_{-i})$ is continuous in $p_{-i}$ for all $t \in \overline{\R}$ and $a_i \in A_i$.

    We show that $T'$ has a fixed-point. Since $L(t,a_i,p_{-i})$ is continuous in $p$ for all $t \in \overline{\R}$ and 
    \[
      |L(t,a_i,p_{-i})| \leq \max_{a_{-i}} |u_i(a_i,a_{-i})|,
    \] 
    by the dominated convergence theorem and the continuity of $L$,  
    \begin{align*}
        \lim_{p_n\to p}\Phi'[a_i,{(p_n)}_{-i}] &=\lim_{p_n\to p}\int L\left(t,a_i,{(p_{n})}_{-i}\right) \,\dd \mu(t) \\
        &= \int \lim_{p_n\to p} L\left(t,a_i,{(p_{n})}_{-i}\right) \,\dd \mu(t)= \int L\left(t,a_i,p_{-i}\right) \,\dd \mu(t) = \Phi'[a_i,p_{-i}],
    \end{align*}
    hence, $T'$ is continuous in $p$. Since $\prod_i \Delta A_i$ is convex and compact, $T'$ has a fixed-point $q^*$ by Brouwer's fixed-point theorem. Since $T'$ maps every mixed strategy profile to a totally mixed strategy profile, $q^*$ must be totally mixed. We thus have 
    \[
    q^*_i(a_i) \propto \exp(\lambda \Phi'[a_i,q^*_{-i}])=\exp(\lambda \Phi[u_i(a_i,q^*_{-i})]),
    \]
    as $\Phi$ and $\Phi'$ agree when $i$'s opponents play totally mixed strategy profiles. Hence we have shown that an $\LQRE_{\lambda\Phi}$ equilibrium exists.
\end{proof}

Next, we show the second part of the proposition:
\begin{claim}
    \label{clm_nash}
    Let $\Phi=\int K_a\, \dd \mu(a)$ be a monotone additive statistic. Then there is a $\Nash_\Phi$ equilibrium for every game if and only if  $\mu(\{-\infty, +\infty\})=0$.
\end{claim}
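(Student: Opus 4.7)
The proof splits into two directions. For the \emph{if} direction, I would apply Kakutani's fixed-point theorem to the best-response correspondence $BR_i(p) = \{q_i \in \Delta A_i : \supp q_i \subseteq \argmax_{a_i} \Phi[u_i(a_i, p_{-i})]\}$. The crux is showing that $p_{-i} \mapsto \Phi[u_i(a_i, p_{-i})]$ is continuous under the hypothesis $\mu(\{-\infty,+\infty\}) = 0$. Writing $\Phi = \int_{\overline \R} K_a \, \dd \mu(a)$, the integral effectively runs over $\R$, where $K_a$ is continuous in the distribution (and hence in $p_{-i}$) and $|K_a[u_i(a_i,p_{-i})]|$ is uniformly bounded by $\max_{a_{-i}}|u_i(a_i,a_{-i})|$. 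Dominated convergence then yields continuity of $\Phi$, and standard arguments show that $BR_i$ is upper hemicontinuous with nonempty, convex, compact values, so Kakutani applied to $p \mapsto \prod_i BR_i(p)$ produces a $\Nash_\Phi$ equilibrium.

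For the \emph{only if} direction, I would exhibit a $2 \times 2$ variant of matching pennies with no $\Nash_\Phi$ equilibrium. By the symmetry $u \mapsto -u$ (which interchanges the roles of $K_{-\infty}$ and $K_{+\infty}$), it suffices to handle the case $\alpha := \mu(\{-\infty\}) > 0$. The key observation is that $K_{-\infty}[u_i(a_i, p_{-i})]$ depends only on the support of $p_{-i}$ and jumps discontinuously when an action enters or leaves that support. Concretely, I would take player~1's payoffs to be
\[
u_1(T,L) = -N, \quad u_1(T,R) = 0, \quad u_1(B,L) = u_1(B,R) = -1,
\]
with $N$ chosen so large that the $\alpha$-weighted $-N$ term swamps the bounded contribution $\int_{a\neq -\infty} K_a\,\dd\mu$ to $\Phi[u_1(T,p_2)]$ whenever $L \in \supp p_2$. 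Then player~1 strictly prefers $B$ to $T$ whenever $L \in \supp p_2$, and strictly prefers $T$ to $B$ when $\supp p_2 = \{R\}$, so player~1 never mixes at equilibrium. Player~2's payoffs follow the standard matching-pennies pattern, $u_2(T,L) = u_2(B,R) = 1$ and $u_2(T,R) = u_2(B,L) = 0$, yielding a unique best response against each pure $p_1$ that is incompatible with player~1's best response. A routine enumeration of the remaining pure-vs-pure support pairs rules out the last possibilities.

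The main obstacle is the reverse direction, in particular ensuring the construction works uniformly for every $\mu$ with $\mu(\{-\infty\}) > 0$ rather than just the extreme case $\Phi = \min$. The scaling trick above---taking $N$ large relative to the range of the other payoffs---is what delivers this: because $\alpha > 0$ and $\int_{a \neq -\infty} K_a\,\dd\mu$ stays in a bounded range determined by the payoffs, the support-based best-response structure matches that in the pure $\min$ case, ruling out every candidate equilibrium.
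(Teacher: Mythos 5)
Your proposal is correct and follows essentially the same strategy as the paper: Kakutani applied to the best-response correspondence (with continuity of $\Phi$ via dominated convergence) for the \emph{if} direction, and a variant of matching pennies exploiting the discontinuity of the extremal statistics at the boundary of the simplex for the \emph{only if} direction. The differences are cosmetic but worth noting. The paper uses a single game, with payoffs scaled by $1/\varepsilon$ where $\varepsilon=\mu(\{-\infty\})+\mu(\{+\infty\})$, that handles both tails simultaneously; it first shows any equilibrium must be totally mixed and then derives a contradiction from the failure of player~1's indifference. You instead reduce to the case $\mu(\{-\infty\})>0$ and chase a best-response cycle, which works equally well --- your game does force player~1's best response to be pure and uniquely determined by $\supp p_2$, and the cycle with player~2's matching-pennies payoffs closes. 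The one imprecise step is the reduction ``by the symmetry $u\mapsto -u$'': negating all payoffs turns maximizers into minimizers, so it does not literally map $\Nash_\Phi$ equilibria of one game to $\Nash_{\tilde\Phi}$ equilibria of the negated game (where $\tilde\Phi$ is $\Phi$ with $\mu$ reflected). You should instead write down the mirrored game explicitly --- e.g.\ $u_1(T,L)=N$, $u_1(T,R)=0$, $u_1(B,\cdot)=1$, with player~2's payoffs transposed so that she flees the cell player~1 seeks --- which is a two-line fix; alternatively, the paper's single game sidesteps the case split entirely.
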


\begin{proof}
    The existence of $\Nash_{\Phi}$ equilibria for $\mu(\{-\infty,+\infty\})=0$ follows from Kakutani's fixed-point theorem since the best response correspondence is upper hemicontinuous for such $\mu$. Alternatively, we can show the existence of $\Nash_{\Phi}$ as a limit point of $\LP$.

    Next, we demonstrate how to construct a game with no $\Nash_\Phi$ equilibrium when $\mu$ places a positive weight on the minimum or maximum. For such a $\Phi$, let $\varepsilon = \mu(-\infty)+\mu(+\infty)$ and consider the game in table \ref{tab:no_min_max_eq}. 

        \begin{table}[h]
    \centering
    \begin{tabular}{c|c|c}
            & $a_2$&$b_2$ \\ \hline
         $a_1$&  $(1+\frac{1}{\varepsilon},0)$& $(0,1)$\\
         $b_1$&  $(-\frac{1}{\varepsilon},1)$& $(1,0)$\\
    \end{tabular}
    \caption{Variant of matching pennies for which extremal $\Nash_\Phi$ equilibria do not exist.}
    \label{tab:no_min_max_eq}
\end{table}

    Since pure $\Nash_\Phi$ equilibria coincide with pure $\Nash$ equilibria for all $\Phi$, it is easy to see that the game has no pure equilibria. Likewise, there are no equilibria where either player plays a pure strategy, since the best responses to pure strategies in this game are pure for all $\Phi$. In particular, any supposed $\Nash_\Phi$ equilibrium $q$ would have player~2 playing a totally mixed strategy. We thus have

    \begin{align*}
        \Phi[u_1(a_1,q_2)]&-\Phi[u_1(b_1,q_2)]=(\mu(-\infty)+\mu(+\infty))\cdot \frac{1}{\varepsilon} \\ 
        &+ \int_{\R}\underbrace{K_t[u_1(a_1,q_2)]}_{\text{nonnegative}}
\,\dd \mu(t) - \int_{\R}\underbrace{K_t[u_1(b_1,q_2)]}_{\leq 1}\,\dd \mu(t) \geq 1 - \mu(\R)\cdot 1=\varepsilon>0.
    \end{align*}
    This contradicts the assumption that $q$ is a totally mixed $\Nash_\Phi$ equilibrium, which would require that $\Phi[u_1(a_1,q_2)]=\Phi[u_1(b_1,q_2)]$.

\end{proof}

\section{Proof of Proposition \ref{lm_Ghl}}\label{ap_distr_neutrality}

Consider a solution concept $S$ that permits bracketing and satisfies distribution-monotonicity, and
anonymity. For such $S$, Proposition~\ref{lm_Ghl} claims that $S$ is either a refinement of ordinal-Nash or a refinement of ordinal-QRE. We prove this proposition by showing the following dichotomy:

\begin{enumerate}
    \item If players never play strictly dominated actions, then $S$ is a refinement of ordinal-Nash.
    \item If players play a strictly dominated action in some game, then $S$ is a refinement of ordinal-QRE.
\end{enumerate}
We consider each of the two cases separately in Claims~\ref{clm_distribution-mono1} and~\ref{clm_distribution-mono2} below. The proposition follows directly from them. 
To prove Claim~\ref{clm_distribution-mono1}, we first provide a condition under which adding independent lotteries to lotteries ranked with respect to first-order stochastic dominance preserves the dominance ranking.

\begin{lemma}\label{lm_K_trick}
    Let $X,Y,A,$ and  $B$ be compactly supported lotteries with $X \g Y$, $\max(A)>\max(B)$ and $\min(A)>\min(B)$. Then there exist $m,n \in \N$ such that $$X^m + A^n \g Y^m +B^n.$$
\end{lemma}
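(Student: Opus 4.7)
The plan is to reduce the claim to Proposition~\ref{lm_large_numbers aubrun nechita}. For $m,n\in\N$, set $Z_{m,n}=X^m+A^n$ and $W_{m,n}=Y^m+B^n$. If $m,n$ can be chosen so that
\begin{equation*}
  K_a[Z_{m,n}] > K_a[W_{m,n}] \qquad \text{for all } a\in\overline{\R},
\end{equation*}
then Proposition~\ref{lm_large_numbers aubrun nechita} yields $Z_{m,n}^{\,k}\g W_{m,n}^{\,k}$ for all sufficiently large $k$. Rewriting this as $X^{mk}+A^{nk}\g Y^{mk}+B^{nk}$ proves the lemma with the pair $(mk,nk)$. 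Using additivity of $K_a$ on independent sums, the displayed inequality is equivalent to
\begin{equation*}
  m\,f(a)+n\,g(a)>0 \qquad \text{for all } a\in\overline{\R},
\end{equation*}
where $f(a):=K_a[X]-K_a[Y]$ and $g(a):=K_a[A]-K_a[B]$.

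Next I would record the key sign information for $f$ and $g$, both continuous on the compactified line $\overline{\R}$. Since $X\geq_{\text{FOSD}} Y$ and each $K_a$ is monotone with respect to FOSD, $f\geq 0$ on $\overline{\R}$. The strictness of FOSD upgrades this to $f(a)>0$ for every $a\in\R$: for $a\neq 0$ the map $x\mapsto e^{ax}$ is strictly monotone, so strict FOSD gives strict inequality of $\E[e^{aX}]$ and $\E[e^{aY}]$, and dividing by $a$ with the appropriate sign bookkeeping yields $K_a[X]>K_a[Y]$; for $a=0$, strict FOSD forces $\E[X]>\E[Y]$. The hypotheses $\min A>\min B$ and $\max A>\max B$ translate directly into $g(-\infty)>0$ and $g(+\infty)>0$. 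Note that $f$ may vanish at $\pm\infty$ (when $X$ and $Y$ share endpoints), and $g$ may be negative on the interior, so neither function alone controls the linear combination.

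The choice of $m,n$ then comes from a compactness splitting. By continuity of $g$ at $\pm\infty$, pick $R>0$ and $c>0$ with $g(a)\geq c$ for every $a\in\overline{\R}$ with $|a|\geq R$. On the compact interval $[-R,R]$, the continuous strictly positive function $f$ attains a positive minimum $\delta>0$, while $|g|$ is bounded by some $M$. Taking $n=1$ and any integer $m>M/\delta$ gives
\begin{equation*}
  m\,f(a)+n\,g(a)\ \geq\ m\delta-M\ >\ 0 \ \text{on }[-R,R], \qquad m\,f(a)+n\,g(a)\ \geq\ nc\ >\ 0 \ \text{for }|a|\geq R,
\end{equation*}
using $f\geq 0$ in the second estimate. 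Aubrun--Nechita then closes the argument.

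The main obstacle I anticipate is the strictness in the middle step, establishing $f(a)>0$ for every $a\in\R$. It is precisely the possibility that $f(\pm\infty)=0$ (when $X$ and $Y$ share their minimum or maximum) that prevents a direct application of Aubrun--Nechita to $X$ versus $Y$, and motivates introducing the auxiliary lotteries $A^n,B^n$: their role in the statement is to provide the required positive gap near the endpoints of $\overline{\R}$, which is purchased cheaply ($n=1$ suffices), while $m$ is chosen large enough to dominate the interior.
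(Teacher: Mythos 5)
Your proof is correct and follows essentially the same route as the paper's: both reduce to the Aubrun--Nechita proposition by verifying $m\,(K_a[X]-K_a[Y])+n\,(K_a[A]-K_a[B])>0$ on all of $\overline{\R}$ via a compactness splitting, using strict positivity of the first gap on a compact interval and positivity of the second gap near $\pm\infty$, and then choosing the coefficient on the $X$-versus-$Y$ pair large enough to dominate. The only cosmetic difference is that the paper fixes the exponent on $A,B$ at one and absorbs the final power $k$ from the definition of $>_L$ exactly as you do.
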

\begin{proof}
    Since $X \g Y$, we have $K_a[X] > K_a[Y]$ for all $a \in \R$ and $K_a[X] \geq K_a[Y]$ for $a = \pm\infty$. Moreover, $K_a[A] > K_a[B]$ for $a = \pm\infty$. By continuity of $K_a$ in $a$, there exists $M >0$ such that $K_a[A] > K_a[B]$ for all $a \in \overline\R\setminus[-M,M]$. Consider $t=\min_{a \in [-M,M]}(K_a[X]-K_a[Y])>0$ and $s=\min_{a \in [-M,M]}(K_a[A]-K_a[B])$. Choose $d \in \N$ such that $d\cdot t+s>0$. It thus follows from Proposition~\ref{lm_large_numbers aubrun nechita} that $X^d+A >_L Y^d+B$. The result follows from the definition of $>_L$.
\end{proof}

To make use of the above lemma, we construct a variant of matching pennies such that any solution must involve someone playing, with positive probability, an action that generates a lottery with a lower max and min than its alternative.

    \begin{table}[h]
    \centering
    \begin{tabular}{c|c|c}
            & $a_2$&$b_2$ \\ \hline
         $a_1$&  $(2,0)$& $(0,1)$\\
         $b_1$&  $(-1,1)$& $(1,0)$\\
    \end{tabular}
    \caption{Variant of matching pennies}
    \label{tab:VMP_low_maxmin}
\end{table}
\begin{lemma}
\label{lm_min_max_H}
There exists a game $H$ for $n \geq 2$  players, with integral payoffs, and such that for all mixed strategy profiles $p$, there is a player~$i$ and actions $a_i,b_i \in A_i$ such that $p_i(b_i)>0$ and $\min[u_i(b_i,p_{-i})] < \min[u_i(a_i,p_{-i})]$      and $\max[u_i(b_i,p_{-i})] < \max[u_i(a_i,p_{-i})]$.
\end{lemma}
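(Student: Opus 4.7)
The plan is to take the two-player game $H$ depicted in Table~\ref{tab:VMP_low_maxmin}, and for $n\geq 3$ to augment it with $n-2$ dummy players each having a single action and payoff identically $0$, so that the lotteries faced by players $1$ and $2$ are unchanged. It therefore suffices to verify the claim for the two-player game. Fix an arbitrary mixed profile $p=(p_1,p_2)$ and write $s_i=p_i(a_i)$.

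The first step is to compute the four induced lotteries. For player $1$, $u_1(a_1,p_{-1})$ is supported on $\{0,2\}$ with weights $(1-s_2,s_2)$, and $u_1(b_1,p_{-1})$ is supported on $\{-1,1\}$ with weights $(s_2,1-s_2)$. For player $2$, both $u_2(a_2,p_{-2})$ and $u_2(b_2,p_{-2})$ are supported on $\{0,1\}$, with weights $(s_1,1-s_1)$ and $(1-s_1,s_1)$ respectively. The crucial observation I would then extract is: when $s_2>0$, the minimum $-1$ and maximum ($\leq 1$) of $u_1(b_1,p_{-1})$ both strictly undercut the minimum ($\geq 0$) and maximum $2$ of $u_1(a_1,p_{-1})$; when $s_2=0$, the situation reverses and $a_1$ yields the deterministic $0$ while $b_1$ yields the deterministic $1$. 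For player $2$, both actions share the support $\{0,1\}$ whenever $p_1$ is totally mixed, so neither is strictly min- or max-dominated by the other; only when $p_1$ is pure do the two lotteries become distinct deterministic payoffs.

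Next I would split on $p_1$. If $p_1$ is totally mixed, then both $a_1$ and $b_1$ have positive probability, so whichever of them is min-max dominated (according to whether $s_2>0$ or $s_2=0$) supplies the witness through player $1$. If $p_1$ is pure $a_1$, then either $s_2>0$ and player $2$ witnesses with $b_i=a_2$ (since $u_2(a_2,p_{-2})\equiv 0<1\equiv u_2(b_2,p_{-2})$ and $p_2(a_2)>0$), or $s_2=0$ and player $1$ witnesses with $b_i=a_1$ (since $u_1(a_1,p_{-1})\equiv 0<1\equiv u_1(b_1,p_{-1})$ and $p_1(a_1)=1$). The case $p_1$ pure $b_1$ is handled symmetrically.

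There is no genuine obstacle to this argument — the verification is elementary casework — but the reason the construction is not arbitrary is worth noting: player $2$'s two actions deliberately induce lotteries with identical support $\{0,1\}$, so player $2$ can only witness the lemma when $p_1$ is pure, and the payoffs of player $1$ are then calibrated so that on the complementary regime where $p_1$ is totally mixed, player $1$ itself always supplies the witness. This is what motivates the asymmetric choice of payoffs in Table~\ref{tab:VMP_low_maxmin}.
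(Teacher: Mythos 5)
Your proposal is correct and follows essentially the same route as the paper: the same game from Table~\ref{tab:VMP_low_maxmin} padded with dummy players, followed by elementary case analysis on the profile (the paper organizes the cases by pure/mixed combinations of both players rather than by $p_1$ and then $s_2$, but the verification is the same). The explicit computation of the four induced lotteries and the closing remark on why player~2's actions share the support $\{0,1\}$ are a slightly more detailed write-up of what the paper leaves as "easy to verify."
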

\begin{proof}
    Let $H$ be the game in which players 1 and 2 play the game described in Table~\ref{tab:VMP_low_maxmin}, and the remaining players' actions do not affect the payoffs of players 1 and 2.

    Since this game has no pure Nash equilibria, any pure strategy profile has the desired property for whichever of players 1 and 2 has a profitable deviation. It is easy to verify that if $p$ has the property that one of the first two players totally mixes and the other plays a pure strategy, then $p$ has the desired property with respect to the player who is mixing. 

    Finally, if both players 1 and 2 play 
    totally mixed strategies, then 
    \begin{align*}
        \min[u_1(b_1,p_{-1})]=-1 &< \min[u_1(a_1,p_{-1})] = 0, \\
      \max[u_1(b_1,p_{-1})]=1 &< \max[u_1(a_1,p_{-1})]=2.
     \end{align*}
     Thus $p$ has the desired property with respect to player~1.
\end{proof}

\begin{claim}
\label{clm_distribution-mono1}
    Suppose $S$ permits bracketing and satisfies distribution-monotonicity. Assume also that players never play strictly dominated actions. Then $S$ is a refinement of ordinal-Nash.
\end{claim}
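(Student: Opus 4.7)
\textbf{Proof proposal for Claim~\ref{clm_distribution-mono1}.} The plan is to argue by contradiction. Suppose $p \in S(G)$ fails to be an FOSD-Nash of $G$; by anonymity (available in the ambient context of Proposition~\ref{lm_Ghl}), we may assume the witnessing player is player~$1$, so there are actions $a_1, b_1$ of player~$1$ with $p_1(a_1)>0$ and induced lotteries $Y := u_1(a_1,p_{-1})$ and $X := u_1(b_1,p_{-1})$ satisfying $X \g Y$. The goal is to exhibit a composite game together with a pure action that is strictly (pointwise) dominated in that composite yet played with positive probability by a solution produced via bracketing, contradicting the no-strictly-dominated-actions hypothesis.

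The construction proceeds in three steps. First, apply Lemma~\ref{lm_min_max_H} to the game $H$ of Table~\ref{tab:VMP_low_maxmin} and fix some $q \in S(H)$; the lemma, together with anonymity to relabel players, supplies actions $c_1, d_1$ of player~$1$ in $H$ with $q_1(d_1)>0$ and induced lotteries $A := u_1^H(c_1,q_{-1})$, $B := u_1^H(d_1,q_{-1})$ satisfying $\max A > \max B$ and $\min A > \min B$. Second, apply Lemma~\ref{lm_K_trick} to $X,Y,A,B$ to obtain integers $m,n \ge 1$ with $X^m + A^n \g Y^m + B^n$. Third, bracketing places $p^m \times q^n$ in $S(G^m \otimes H^n)$; under this strategy, player~$1$'s pure action $(b_1^m, c_1^n)$ induces the lottery $X^m + A^n$ while $(a_1^m, d_1^n)$ induces $Y^m + B^n$. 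Distribution-monotonicity applied in the composite yields the inequality $p_1(b_1)^m q_1(c_1)^n \ge p_1(a_1)^m q_1(d_1)^n > 0$, and in particular $p_1(b_1) > 0$ and $q_1(c_1) > 0$.

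The main obstacle is to upgrade this distributional FOSD relation in the composite into a \emph{pointwise} strict dominance of one pure action by another, which is the form that the no-strictly-dominated-actions hypothesis can reject. Here the specific engineering of $H$ becomes crucial: the strict min--max gap between $A$ and $B$ guaranteed by Lemma~\ref{lm_min_max_H} supplies, copy by copy, a cushion in the $H$-factor that compensates for pure opponent profiles in $G$ at which $b_1$ yields less than $a_1$; the amplification furnished by Lemma~\ref{lm_K_trick} at sufficiently large $m,n$ is calibrated so that this cushion is uniformly large enough to ensure that some pure action of player~$1$ in $G^m \otimes H^n$ strictly pointwise dominates $(a_1^m, d_1^n)$ at every opponent profile. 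Since $(a_1^m, d_1^n)$ still has positive probability $p_1(a_1)^m q_1(d_1)^n$ under $p^m \times q^n$, this produces a strictly dominated pure action played with positive probability, contradicting the hypothesis and completing the proof.
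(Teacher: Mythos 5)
Your setup and first two steps match the paper's: you invoke Lemma~\ref{lm_min_max_H} to get actions in $H$ with a strict min--max gap, and Lemma~\ref{lm_K_trick} to produce $m,n$ with $X^m+A^n \g Y^m+B^n$. But your final step --- upgrading the FOSD relation in $G^m\otimes H^n$ to a \emph{pointwise} strict dominance that the no-strictly-dominated-actions hypothesis can reject --- has a genuine gap, and I do not believe it can be repaired as described. The ``cushion'' you attribute to the $H$-factor does not exist: in the game of Table~\ref{tab:VMP_low_maxmin}, the action $c_1$ with the higher max and min ($a_1$, payoffs $2$ and $0$) does \emph{not} pointwise dominate $d_1$ ($b_1$, payoffs $-1$ and $1$); at the opponent profile $b_2$ it yields $0$ versus $1$. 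So at the opponent profile of $G^m\otimes H^n$ that is worst for $b_1$ in every $G$-copy and plays $b_2$ in every $H$-copy, the $H$-factor contributes $-n$ to the payoff difference, and no choice of $m,n$ makes $(b_1^m,c_1^n)$ pointwise dominate $(a_1^m,d_1^n)$. More structurally: $H$ is a matching-pennies variant with no dominated actions, and FOSD dominance of $u_1(b_1,p_{-1})$ over $u_1(a_1,p_{-1})$ in $G$ is an equilibrium-dependent, not pointwise, relation; a product of games with no pointwise-dominance relations among the relevant actions cannot acquire one. Your derived inequality $p_1(b_1)^m q_1(c_1)^n \ge p_1(a_1)^m q_1(d_1)^n>0$ is correct but yields no contradiction on its own.

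The paper closes this gap with an extra ingredient you are missing: a third, trivial game $F$ in which player $i$ chooses between payoffs $0.5$ and $0$. There the hypothesis applies directly --- $a_0$ \emph{is} strictly (pointwise) dominated, so $q_i(a_0)=0$ in every solution. One then brackets $G^m\otimes H^n\otimes F^n$ and arranges (via Lemma~\ref{lm_K_trick}, applied with $A=v_i(\overline a_i,p_{-i})+w_i(a_0,q_{-i})$ and $B=v_i(\underline a_i,p_{-i})+w_i(a_{0.5},q_{-i})$; the integrality of $H$'s payoffs guarantees the $0.5$ offset does not destroy the strict min--max gap) that the FOSD-\emph{dominating} pure action contains $a_0$ and hence has product probability zero, while the FOSD-dominated action contains only positive-probability components except possibly $a_Z$. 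Distribution-monotonicity then forces the dominated action's probability to be $\le 0$, i.e.\ $r_i(a_Z)=0$. In short, the hypothesis is used \emph{positively} to create a zero-probability anchor that monotonicity propagates, rather than \emph{negatively} to exhibit a dominated action played with positive probability. You would need to add this auxiliary game (or an equivalent device) to complete your argument.
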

\begin{proof}
Let $H$ be a game that satisfies the property whose existence is guaranteed 
by Lemma~\ref{lm_min_max_H} (e.g., the one described in Table~\ref{tab:VMP_low_maxmin}). Fix any $p \in S(H)$. By the defining property of $H$ there is a player~$i$ such that $i$ plays with positive probability an action $\underline a_i$ that yields a lottery with a lower max and min than its alternative $\overline{a}_i$. 

Consider a game $F=(C,w)$ where $C_i=\{a_{0.5},a_0\}$ and payoffs are given by $w_i(a_{0.5},\,\cdot\,)=0.5$ and $w_i(a_0,\,\cdot\,)=0$. Note that $q_i(a_{0.5})=1$ for any solution $q \in S(F)$, by the assumption that strictly dominated actions are never played.
Consider the composite game $H\otimes F$. 
By bracketing, $p\times q$ is one of its solutions. It puts positive weight on $(\underline a_i,a_{0.5})$ since $p_i(\underline a_i)\cdot q_i(a_{0.5})>0$, while $(\overline a_i, a_{0})$ has probability zero as $p_i(\overline a_i)\cdot q_i(a_{0})=0$.
Since the payoffs in $H$ are integral, a difference in the max or min of lotteries generated by a player's actions is at least~$1$. Consequently, the payoff distribution of $(\overline a_i, a_{0})$ has strictly higher maximum and minimum than those of $(\underline a_i,a_{0.5})$: 
\begin{align*}
    \max (v_i(\overline{a}_i,p_{-i})+w_i(a_0,q_{-i}))&>\max(v_i(\underline{a}_i,p_{-i})+w_i(a_{0.5},q_{-i})),\\
    \min (v_i(\overline{a}_i,p_{-i})+w_i(a_0,q_{-i}))&>\min(v_i(\underline{a}_i,p_{-i})+w_i(a_{0.5},q_{-i})).
\end{align*}

Let $G = (A,u)$ be an arbitrary game with $a_{Y},a_{Z} \in A_i$ and $r \in S(G)$ such that $u_i(a_Y,r_{-i}) \g u_i(a_Z,r_{-i})$. We need to show that $r_i(a_Z)=0$. Indeed, by Lemma~\ref{lm_K_trick} there exist $m,n \in \N$ such that \[u_i(a_Y,r_{-i})^m + (v_i(\overline{a}_i,p_{-i})+w_i(a_0,q_{-i}))^n\g u_i(a_Z,r_{-i})^m + (v_i(\underline{a}_i,p_{-i})+w_i(a_{0.5},q_{-i}))^n.\]
By bracketing, $r^m \times p^n \times q^n \in S(G^m \otimes H^n \otimes F^n)$. Since $q_i(a_{0})=0$ and $p_i(\underline a_i),q_i(a_{0.5})>0$, by distribution-monotonicity, it must be that $r_i(a_Z)=0$.
\end{proof}

\begin{claim}
\label{clm_distribution-mono2}
    Suppose $S$ permits bracketing and satisfies distribution-monotonicity, and anonymity. Also, assume that there is a game $G_{D}$, a player~$i$, and a solution $p \in S(G_D)$ in which player~$i$ plays a dominated action with positive probability. Then $S$ is a refinement of ordinal-QRE.
\end{claim}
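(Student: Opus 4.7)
My plan is to establish two properties of $S$, interiority and distribution-neutrality, which together with the assumed distribution-monotonicity yield exactly the definition of FOSD-QRE. The argument parallels the one used in the proof of Theorem~\ref{th_Nash QRE} to derive interiority and expectation-neutrality from the hypothesis that a dominated action is played with positive probability in the test game $G_1$. Here, the dominance is cardinal (via $G_D$), but we only seek a FOSD statement, which actually makes the derivation cleaner, as no functional equation is required.

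Fix actions $c_i, d_i \in A_i^{G_D}$ witnessing the hypothesis: $d_i$ strictly dominates $c_i$, so $\delta := \min_{a_{-i}}\bigl(u_i(d_i, a_{-i}) - u_i(c_i, a_{-i})\bigr) > 0$, and $p_i(c_i) > 0$. By anonymity, it suffices to verify the desired properties for player~$i$. For \emph{interiority}, suppose toward contradiction that some game $H=(B,v)$ has $q \in S(H)$ with $q_i(a)=0<q_i(b)$ for actions $a,b \in B_i$. Choose $n$ so that $n\delta$ exceeds the range of $v_i(b,\,\cdot\,)-v_i(a,\,\cdot\,)$. By bracketing, $p^n \times q \in S(G_D^{\,n}\otimes H)$, and the payoff realization of $(d_i,\ldots,d_i,a)$ exceeds that of $(c_i,\ldots,c_i,b)$ pointwise in opponents' realizations, hence strictly FOSD-dominates it. Yet the former has probability $p_i(d_i)^{n}q_i(a)=0$ while the latter has probability $p_i(c_i)^{n}q_i(b)>0$, contradicting distribution-monotonicity.

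Given interiority, I next show \emph{distribution-neutrality}. Suppose toward contradiction that $q \in S(H)$ satisfies $v_i(a,q_{-i}) =_{\text{FOSD}} v_i(b,q_{-i})$ (identical distributions) yet $q_i(a) < q_i(b)$. By bracketing, $q^m \times p \in S(H^m \otimes G_D)$ for every $m$. The payoff distribution of $(a,\ldots,a,d_i)$ is the convolution of $m$ i.i.d.\ copies of $v_i(a,q_{-i})$ with $u_i(d_i,p_{-i})$; that of $(b,\ldots,b,c_i)$ is the convolution of $m$ copies of the same distribution (since $v_i(a,q_{-i})$ and $v_i(b,q_{-i})$ agree in law) with $u_i(c_i,p_{-i})$. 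Because $u_i(d_i,p_{-i}) >_{\text{FOSD}} u_i(c_i,p_{-i})$ by pointwise strict dominance, the first convolution strictly FOSD-dominates the second. Distribution-monotonicity then forces
\begin{equation*}
q_i(a)^{m}\, p_i(d_i) \;\geq\; q_i(b)^{m}\, p_i(c_i),
\qquad\text{i.e.,}\qquad \bigl(q_i(a)/q_i(b)\bigr)^{m} \;\geq\; p_i(c_i)/p_i(d_i).
\end{equation*}
By interiority all four quantities are strictly positive; since $q_i(a)/q_i(b) < 1$, the left side tends to $0$ as $m \to \infty$ while the right side is a fixed positive constant, yielding a contradiction for $m$ large enough.

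Combining the two steps, $S$ satisfies interiority and distribution-neutrality for every player. Together with distribution-monotonicity this gives the weak-weak property $u_i(a_i,p_{-i}) \geq_{\text{FOSD}} u_i(b_i,p_{-i}) \Rightarrow p_i(a_i) \geq p_i(b_i)$: equal distributions give equality by neutrality, and strict FOSD gives weak inequality by monotonicity. This is precisely the definition of FOSD-QRE. The main technical point is that strict cardinal dominance of $d_i$ over $c_i$ in $G_D$ transfers, via bracketing with enough copies, to strict FOSD dominance in the composite game; this plays the role of a cleaner analogue of Lemma~\ref{lm_K_trick}, whose max/min hypotheses are unnecessary here because pointwise dominance is available.
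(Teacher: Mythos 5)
Your proposal is correct and follows essentially the same route as the paper: fix the strictly dominated action pair in $G_D$ with gap $\delta>0$, use bracketing with enough copies of $G_D$ to force interiority via distribution-monotonicity, then use bracketing with enough copies of the offending game to force distribution-neutrality, and combine with distribution-monotonicity to get FOSD-QRE. The only (immaterial) difference is that the paper picks a specific $m$ with $(q_i(b)/q_i(a))^m > p_i(a_h)/p_i(a_\ell)$ to exhibit a single violation, whereas you derive the inequality for all $m$ and let $m\to\infty$.
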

\begin{proof}
By definition, there are actions $a_h,a_\ell$ in $G_{D}$ and $D > 0$ such that $p_i(a_\ell)>0$ and $u_i(a_h,a_{-i}) \geq u_i(a_\ell,a_{-i})+D$ for all $a_{-i}$. 
We show that $S$ must then satisfy interiority. Let $G=(A,u)$ be any other game, and consider some $a \in A_i$ and $q \in S(G)$. To prove interiority, we need to show that $q_i(a)>0$. Pick some $b$ such that $q_i(b)>0$. Consider $m \in \N$ such that $m D>\max [u_i(b,q_{-i})]-\min [u_i(a,q_{-i})]$. By  bracketing, $r=q \times p^m$ is in $S(G_{D}^m \otimes G)$. Denote  the payoff map of $G_{D}^m \otimes G$ by~$v$, and note that \[v_i(a_h,\dots,a_h,a,r_{-i})>_{\text{FOSD}} v_i(a_\ell,\dots,a_\ell,b,r_{-i}).\] 
Therefore, by distribution-monotonicity and  bracketing,
 \[r_i(a_h,\dots,a_h,a)\geq r_i(a_\ell,\dots,a_\ell,b)>0,\]
and so $q_i$ is totally mixed. We conclude that $S$ satisfies interiority.

We next show that such an $S$ must satisfy distribution-neutrality. Toward a contradiction, suppose that distribution-neutrality is violated in a game $G=(A,u)$.  By anonymity, we can assume that the violation occurs for the same player~$i$ that played the dominated action $a_\ell$ in $G_D$. Hence, there is $q \in S(G)$ and $a,b \in A_i$, such that $u_i(a,q_{-i})=u_i(b,q_{-i})$, while $q_i(a) < q_i(b)$. By assumption $p\in S(G_{D})$ satisfies $p_i(a_\ell)>0$. 
Pick $m \in \N$ such that $\left(\frac{q_i(b)}{q_i(a)}\right)^m>\frac{p_i(a_h)}{p_i(a_\ell)}$. By  bracketing, $r=q^m \times p\in S(G^m\otimes G_{D})$. Let $v$ denote the payoff map of $G^m\otimes G_{D}$, and note that $v_i(a,\dots,a,a_h,r_{-i})>_{\text{FOSD}}v_i(b,\dots,b,a_\ell,r_{-i})$, while \[r_i(a,\dots,a,a_h)<r_i(b,\dots,b,a_\ell),\] violating distribution-monotonicity. This contradiction implies that $S$ satisfies distribution-neutrality.
\end{proof}

Claim~\ref{clm_distribution-mono1} and Claim~\ref{clm_distribution-mono2} together immediately imply Proposition~\ref{lm_Ghl}.

\section{Test Games}\label{ap_test}
In this appendix, we study the test games needed in the proof of
Theorem~\ref{th_sre} to elicit players’ preferences over payoff
lotteries. This proof will require two classes of such games, corresponding to the two cases of Proposition~\ref{lm_Ghl}. The first, $G_{r,x,\varepsilon}$, will be useful in the first case, where players play an ordinal-Nash. The second, $H_{r,x}$, will be useful for ordinal-QRE.

Recall first the definition of the game $G_{r,x,\varepsilon}$ from \eqref{eq:G_rXeps}.\footnote{While this is an $n \geq 2$ player game, we write the utilities of the first two players as functions of the actions of the first two players only, suppressing the actions of the rest.} The next lemma proves the properties of $G_{r,x,\varepsilon}$ that make it useful as a test game. Namely, that when player~1 chooses $a_r$ they receive (approximately) $r$, and that in any ordinal-Nash player~2 always mixes uniformly, so that when player~1 chooses $a_x$, they receive a lottery distributed as the uniform distribution over $\{x_1,\ldots,x_m\}$.
\begin{lemma}\label{lm_lotteries_games_FOSD_undominated}

    For each $r \in \R$, every nonconstant $x \in \R^m$, and $\varepsilon > 0$, the game $G_{r,x,\varepsilon}$ has the following properties: 
    \begin{enumerate}
    \item $|u_1((a_r,\pi),a_2) - r|\leq \varepsilon \norm{x}_\infty$ for all $(a_r, \pi) \in A_1$ and $a_2 \in A_2$;
    \item $p_2$ is the uniform distribution over $A_2$ in any ordinal-Nash $p$.
    \end{enumerate}

\end{lemma}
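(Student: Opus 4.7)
Property 1 is immediate: $u_1((a_r,\pi),a_2) - r = \varepsilon x_{\pi(a_2)}$, so the absolute value is at most $\varepsilon\|x\|_\infty$. For Property 2, I would argue by contradiction: assume $p$ is an FOSD-Nash of $G_{r,x,\varepsilon}$ and that $p_2$ is not uniform. Let $\mu$ denote the marginal of $p_1$ over permutations and set $Z_\pi := x_{\pi(a_2)}$ with $a_2 \sim p_2$. Since $z \mapsto r + \varepsilon z$ is strictly increasing, the FOSD-order on player~1's payoff distributions over actions $(a_1,\pi)$ with $a_1$ fixed agrees with the order on $Z_\pi$; hence every $\pi \in \supp(\mu)$ must be FOSD-maximal, in the sense that no $\pi'$ yields $Z_{\pi'}$ strictly FOSD-dominating $Z_\pi$. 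I would establish this via a swap argument: if $p_2(a_2) > p_2(a_2')$ and $x_{\pi(a_2)} < x_{\pi(a_2')}$, then the permutation $\pi'$ obtained by interchanging $\pi(a_2)$ with $\pi(a_2')$ places the higher probability $p_2(a_2)$ on the larger value within the partial lottery on $\{a_2,a_2'\}$, so $Z_{\pi'}$ strictly FOSD-dominates $Z_\pi$. Therefore every $\pi \in \supp(\mu)$ satisfies the sort condition: $p_2(a_2) > p_2(a_2')$ implies $x_{\pi(a_2)} \geq x_{\pi(a_2')}$.

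Next, I would translate the sort condition into a block structure. Let $P_1 > P_2 > \cdots > P_s$ list the distinct values of $p_2$ with multiplicities $k_1,\ldots,k_s$, and let $y_1 \geq y_2 \geq \cdots \geq y_m$ denote the entries of $x$ sorted in decreasing order. Non-uniformity of $p_2$ gives $s \geq 2$, so $k_1 + k_s \leq m$, which means the top block (positions with $p_2 = P_1$) and the bottom block (positions with $p_2 = P_s$) occupy disjoint rank ranges in $x$. The sort condition then forces every $\pi \in \supp(\mu)$ to assign the multiset $\{y_1,\ldots,y_{k_1}\}$ to the top block and $\{y_{m-k_s+1},\ldots,y_m\}$ to the bottom block.

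To close the argument, I would find a top-block position $a^*$ and a bottom-block position $a'$ such that, under $\pi \sim \mu$, the distribution of $x_{\pi(a^*)}$ strictly FOSD-dominates that of $x_{\pi(a')}$; negating then shows that $a'$ strictly FOSD-dominates $a^*$ for player~2, contradicting $p_2(a^*) = P_1 > 0$. In the generic case $y_{k_1} > y_{m-k_s+1}$, any top-block $a^*$ and bottom-block $a'$ work, since $x_{\pi(a^*)} \geq y_{k_1} > y_{m-k_s+1} \geq x_{\pi(a')}$ pointwise in $\pi \in \supp(\mu)$. The delicate case is $y_{k_1} = y_{m-k_s+1} = c$, where the two assigned multisets overlap at $c$; I expect this tied case to be the main obstacle. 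There, nonconstancy of $x$ forces either some entry of $x$ strictly above $c$ at a rank $j < k_1$ or some entry strictly below $c$ at a rank $j > m-k_s+1$; assume the former without loss of generality. A pigeonhole argument on the top block, using that each FOSD-maximal $\pi$ places the same positive number of values exceeding $c$ into that block, produces a top-block position $a^*$ whose $x_{\pi(a^*)}$ exceeds $c$ with positive $\mu$-probability. Any bottom-block $a'$ always satisfies $x_{\pi(a')} \leq c$, and the sort condition gives $x_{\pi(a^*)} \geq x_{\pi(a')}$ pointwise, so the distribution of $x_{\pi(a^*)}$ strictly FOSD-dominates that of $x_{\pi(a')}$, completing the contradiction.
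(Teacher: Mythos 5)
Your proof is correct and follows essentially the same route as the paper's: the swap argument showing that every permutation in the support of player 1's strategy sorts $x$ consistently with $p_2$, followed by exhibiting a player-2 action that is strictly first-order dominated (using that player 2's payoff is $-x_{\pi(a_2)}$). The only difference is cosmetic: where you handle the tied case via the block decomposition and a pigeonhole over $\pi \sim \mu$, the paper picks a single $\pi$ in the support, finds a third action $h$ with $x_{\pi(h)} \neq c$, and uses the sort condition to produce a strictly separated pair of player-2 actions directly.
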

\begin{proof}[Proof of Lemma~\ref{lm_lotteries_games_FOSD_undominated}]
Fix a nonconstant $x \in \R^m$, $r \in \R$, and $\varepsilon > 0$, and consider the game~$G_{r,x,\varepsilon}$. 

That (1) holds is immediate from the definition of the game. To show (2), suppose that $p$ is a strategy profile such that every first-order dominated action is played with probability zero. We need to show that $p_2$ is uniform over $A_2$.

Given a permutation $\pi$ of $\{1,\ldots,m\}$ denote by $x \circ \pi$ the vector $(x_{\pi(1)},\ldots,x_{\pi(m)}) \in \R^m$. First note that for all $\pi$ chosen by player~1, $x \circ \pi$ must be weakly increasing in $p_2$, i.e., for $s,t \in A_2$, $p_2(s)>p_2(t)\implies x_{\pi(s)} \geq x_{\pi(t)}$.\footnote{In the card game interpretation of this game, this statement means that if player~2 chooses the first card with probability strictly higher than the second card, then player~$1$ will order the cards so that the first card shows a higher payoff than the second.} Indeed, if, for some $s,t \in A_2$,  $p_2(s) > p_2(t)$ while $x_{\pi(s)}<x_{\pi(t)}$, then $\pi$ is first-order dominated by $\pi'$ which coincides with $\pi$ except on $\{s,t\}$, where $\pi'(s)=\pi(t)$ and $\pi'(t)=\pi(s)$. 

Fix any $\pi$ played with positive probability by player~1, and suppose, for the sake of contradiction, that $p_2$ is not uniform. Since $x$ is nonconstant, we claim that there must be $s,t \in A_2$ with $p_2(s) > p_2(t)$ and $x_{\pi(s)}>x_{\pi(t)}$. Indeed, fix any $s,t \in A_2$ such that $p_2(s) > p_2(t)$. If $x_{\pi(s)}>x_{\pi(t)}$, we are done. Suppose then that $x_{\pi(s)}=x_{\pi(t)}=c$. Since $x\circ \pi$ is nonconstant, there is an $h \in A_2$ such that $x_{\pi(h)}\neq c$. Consider the case where $x_{\pi(h)}>c$. Then $p_2(h)\geq p_2(s)$, since  $x \circ \pi$ is weakly increasing in $p_2$. We thus have $p_2(h) > p_2(t)$ and $x_{\pi(h)}>x_{\pi(t)}$, as desired. An identical argument works for the case where $x_{\pi(h)}<c$. 

We thus have that if $p_2$ is not uniform then there are $s,t \in A_2$ with $p_2(s) > p_2(t)$ and $x_{\pi(s)}>x_{\pi(t)}$.
We claim that, moreover, for all $\pi'$ played with positive probability, $x_{\pi'(s)}\geq x_{\pi'(t)}$, by the weakly increasing property of $x \circ \pi'$. Thus $s$ is first-order dominated by $t$ for player~$2$, a contradiction.
\end{proof}

We next construct a class of test games in which players must evaluate a rich set of lotteries against sure things under distribution-neutrality. This setting is much simpler than the previous, since it is easier to guarantee that players mix uniformly.

For $x\in \R^m$, we define $H_{r,x}=(B,v)$ by $B_1=\{b_r,b_x\}$, $B_2 = \{1,\ldots,m\}$,
\begin{align}
\label{eq_HrX}
    v_1(b_r,b_2)=r \quad\text{ and } \quad v_1(b_x,b_2)=x_{b_2}
\end{align}
for all $b_2 \in B_2$. Let $v_i$ be identically zero for all $i \neq 1$. As with $G_{r,x,\varepsilon}$, we write utilities as just functions of the first two players, since they do not depend on the actions of the rest.

The following lemma summarizes the key features of $H_{r,x}$:
\begin{lemma}\label{lm_lotteries_games}
For each $r \in \R$ and $x \in \R^m$, the game $H_{r,x}$ has the following properties: 
    \begin{enumerate}
    \item The action $b_r$ results in a deterministic payoff of $r$ to player~$1$, i.e., $v_1(b_r,p_2)$ is a degenerate lottery which yields~$r$;
\item $p_2$ is the uniform distribution over $B_2$ in any ordinal-QRE $p$.
\end{enumerate}    
\end{lemma}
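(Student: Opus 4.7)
The plan is that both properties follow essentially from inspection of the construction, with no technical obstacle. For Property 1, since $v_1(b_r, b_2) = r$ for every $b_2 \in B_2$ by definition, the lottery $v_1(b_r, p_2)$ induced by any mixed strategy $p_2$ of player 2 is supported on $\{r\}$, hence deterministic.

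For Property 2, the plan is to exploit the fact that $v_i \equiv 0$ for all $i \neq 1$, and in particular for player 2. Fix an FOSD-QRE $p$ of $H_{r,x}$. Under $p$, each action $b_2 \in B_2$ induces for player 2 the degenerate lottery at zero, since player 2's payoff is identically zero and so does not depend on anyone's action. Consequently, for any two actions $b_2, b_2' \in B_2$ we have both $v_2(b_2, p_{-2}) \geq_{\text{FOSD}} v_2(b_2', p_{-2})$ and $v_2(b_2', p_{-2}) \geq_{\text{FOSD}} v_2(b_2, p_{-2})$. Applying the monotonicity clause in the definition of FOSD-QRE in each direction yields $p_2(b_2) \geq p_2(b_2')$ and $p_2(b_2') \geq p_2(b_2)$, hence $p_2(b_2) = p_2(b_2')$. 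Since the pair was arbitrary, $p_2$ is uniform on $B_2$.

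There is no real difficulty; the work has already been done in the design of $H_{r,x}$. By setting $v_i \equiv 0$ for $i \neq 1$, uniform mixing by player 2 is forced immediately by the distribution-neutrality implicit in the symmetric FOSD-QRE monotonicity condition, bypassing the more delicate zero-sum-like argument needed in Lemma~\ref{lm_lotteries_games_FOSD_undominated} for $G_{r,x,\varepsilon}$, where interiority was not available and uniformity had to be coaxed from FOSD-Nash play through player~2's strict preference to pay less.
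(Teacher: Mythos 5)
Your proof is correct and matches the paper's own (one-line) argument: property 1 is immediate from the definition of $v_1$, and property 2 is exactly the distribution-neutrality of FOSD-QRE applied to player 2's identically-zero payoffs, which you correctly derive by applying the monotonicity clause in both directions. Your closing remark about why this is easier than the FOSD-Nash case of Lemma~\ref{lm_lotteries_games_FOSD_undominated} also accurately reflects the paper's design.
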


\begin{proof}[Proof of Lemma~\ref{lm_lotteries_games}]
 Clearly $v_1(b_r,p_2)=r$, and by distribution-neutrality, $p_2$ is uniform.

\end{proof}

\section{Proof of Theorem~\ref{th_sre}}\label{ap_B}
Given a lottery $X \in \Delta_{\Q}$, we can find $m > 0$ and $x \in \R^m$ such that $X$ is the uniform distribution over $(x_1,\ldots,x_m)$. We define the game $G_{r,X,\varepsilon}$ to be equal to the game $G_{r,x,\varepsilon}$, for some canonical choice of such $x$ (e.g., the one with minimal $m$ and non-increasing components). We also write $a_X$ for the action $a_x$. The game $H_{r,X}$ is defined analogously.

The following lemma shows how certainty equivalents may be deduced from player~1's mixing probabilities in the games $G_{r,X,\varepsilon}$ under the assumption that players play an ordinal-Nash. We will refer to $X \in \Delta_\Q$ with the understanding that $X$ is a random variable as in Remark~\ref{rm_lottery_rv}, so that $G_{r,X,\varepsilon}$ is a well-defined game. 
\begin{lemma}\label{lm_phiep}
Suppose $S$ permits bracketing and is a refinement of ordinal-Nash. Define $\Phi_\varepsilon \colon \Delta_\Q \to \R$ by 
\begin{align}
    \label{eq_phi_eps}
    \Phi_\varepsilon[X]=\sup \{r\in \R \mid \exists p \in S(G_{r,X,\varepsilon}), \exists \pi \colon A_2 \to A_2 \text{ with } p_1(a_X,\pi)>0\}.
\end{align}
Then the limit $\Phi[X] = \lim_{\varepsilon \to 0}\Phi_\varepsilon[X]$ exists and is a monotone additive statistic.

\end{lemma}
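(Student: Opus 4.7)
The plan is to establish three properties of $\Phi_\varepsilon$ and $\Phi$: (i) $\Phi_\varepsilon[X]$ is finite for every $\varepsilon>0$ and $X\in\Delta_\Q$; (ii) the limit $\Phi[X]=\lim_{\varepsilon\to 0}\Phi_\varepsilon[X]$ exists, satisfies $\Phi[c]=c$ for constants, and is monotone with respect to first-order stochastic dominance; (iii) $\Phi$ is additive on independent sums. Together these make $\Phi$ a monotone additive statistic on $\Delta_\Q$. The common engine is Lemma~\ref{lm_lotteries_games_FOSD_undominated}: since $S$ refines FOSD-Nash, player~$2$ must mix uniformly in any $p\in S(G_{r,X,\varepsilon})$, so that every $(a_X,\pi)$-action yields the lottery $X$ and every $(a_r,\pi)$-action yields $r+\varepsilon X$. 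The FOSD-Nash property then forces $(a_X,\pi)$ to be playable only if $X\not<_{\text{FOSD}} r+\varepsilon X$, and $(a_r,\pi)$ only if the reverse strict dominance fails. Boundedness in (i) follows directly: for $r$ sufficiently large, $a_X$ is strictly dominated (upper bound on $\Phi_\varepsilon$); for $r$ sufficiently small, $a_r$ is strictly dominated and the nonempty $S(G_{r,X,\varepsilon})$ is forced to have $(a_X,\pi)$ played (lower bound). For constant $X=c$ the lotteries degenerate and direct comparison yields $\Phi_\varepsilon[c]=(1-\varepsilon)c$, hence $\Phi[c]=c$.

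The core step is additivity. For independent $X,Y\in\Delta_\Q$ I apply bracketing to $G_{r,X,\varepsilon}\otimes G_{s,Y,\varepsilon}$. If $p\in S(G_{r,X,\varepsilon})$ witnesses $r<\Phi_\varepsilon[X]$ (so some $(a_X,\pi)$ is played) and $q\in S(G_{s,Y,\varepsilon})$ witnesses $s<\Phi_\varepsilon[Y]$, then $p\times q$ is a composite solution in which the joint action $(a_X,\pi,a_Y,\pi')$ has positive probability. Applying FOSD-Nash to the composite (where player~$2$ still mixes uniformly in each block), this joint action is not strictly FOSD-dominated by $(a_r,\cdot,a_s,\cdot)$, giving the key relation
\[
  X+Y\not<_{\text{FOSD}} (r+s)+\varepsilon(X+Y),
\]
which is precisely the FOSD-playability condition for $a_{X+Y}$ in the test game $G_{r+s,X+Y,\varepsilon}$. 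This plays the structural role the Cauchy functional equation did in the proof of Theorem~\ref{th_Nash QRE}. Monotonicity of $\Phi$ is handled by the same composite template: for $X\leq_{\text{FOSD}} Y$, a composition argument shows that playability of $a_X$ at parameter $r$ forces playability of $a_Y$ at the same $r$, hence $\Phi_\varepsilon[X]\leq\Phi_\varepsilon[Y]$ and the same inequality in the limit.

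The delicate step, which I expect to be the main obstacle, is translating the FOSD relation above into the existence of an actual solution in $S(G_{r+s,X+Y,\varepsilon})$ with $a_{X+Y}$ played---what is needed to conclude $\Phi_\varepsilon[X+Y]\geq r+s$. Because $S$ may strictly refine FOSD-Nash, FOSD-playability alone does not produce such a solution. My plan is to insert an additional factor $G_{R,X+Y,\varepsilon}$ into the composite, choosing $R$ small enough that Step~(i) forces $a_{X+Y}$ to be played in every $S$-solution of that factor; bracketing across the grand composite then lets the FOSD information obtained from the two-game composite transfer back into the target test game. The reverse inequality is proved dually, and the resulting two-sided bounds squeeze $\limsup_\varepsilon\Phi_\varepsilon[X+Y]$ and $\liminf_\varepsilon\Phi_\varepsilon[X+Y]$ onto the common value $\Phi[X]+\Phi[Y]$, yielding both existence of the limit and additivity simultaneously. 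Having secured all three properties on $\Delta_\Q$, the lemma follows.
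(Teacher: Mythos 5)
Your overall architecture matches the paper's: use Lemma~\ref{lm_lotteries_games_FOSD_undominated} to turn each $G_{r,X,\varepsilon}$ into a choice between $X$ and $r+\varepsilon X$, and extract all properties of $\Phi$ from FOSD comparisons inside composite test games. But the step you yourself flag as delicate---getting the lower bound $\Phi_\varepsilon[X+Y]\geq r+s$---is genuinely not resolved by your proposed fix, and the fix is also unnecessary. Inserting an extra factor $G_{R,X+Y,\varepsilon}$ with $R$ very small only tells you that $a_{X+Y}$ is played in \emph{that} factor, which you already knew; it gives no mechanism for ``transferring'' playability to $G_{t,X+Y,\varepsilon}$ for $t$ near $r+s$, since $S$ may refine FOSD-Nash differently in different games and bracketing only produces product solutions, not implications between factors at different parameters. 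The correct move (and the paper's) is to include the \emph{target} game $G_{t,X+Y,\varepsilon}$ with $t<r+s$ directly as the third factor: taking witnesses $o,p$ playing $(a_X,\pi)$ and $(a_Y,\sigma)$ at parameters $r',s'$ slightly below the respective sups, and any $q\in S(G_{t,X+Y,\varepsilon})$, the composite profile $\big((a_X,\pi),(a_Y,\sigma),(a_t,\tau)\big)$ yields $X+Y+t+\varepsilon(X+Y)$, which is strictly FOSD-dominated by the profile $\big((a_{r'},\pi),(a_{s'},\sigma),(a_{X+Y},\tau)\big)$ yielding $X+Y+r'+s'+\varepsilon(X+Y)$. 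Since $o_1(a_X,\pi)\cdot p_1(a_Y,\sigma)>0$, FOSD-Nash forces $q_1(a_t,\tau)=0$ for every $\tau$, and because player~1's actions in $G_{t,X+Y,\varepsilon}$ are exhausted by the $a_t$-type and $a_{X+Y}$-type, some $(a_{X+Y},\tau)$ is played, giving $\Phi_\varepsilon[X+Y]\geq t$. You already use exactly this pigeonhole in your step~(i); you just need to deploy it inside the composite rather than reaching for an auxiliary factor. (Note also the minor point this highlights: $r<\Phi_\varepsilon[X]$ does not by itself give a witness in $G_{r,X,\varepsilon}$, since the witnessing set is not obviously downward closed in $r$; work with $r'$ just below the sup.)

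A second, smaller gap: you never actually prove that $\lim_{\varepsilon\to 0}\Phi_\varepsilon[X]$ exists for a general $X$. Your only mechanism is the squeeze in the additivity step, which presupposes that $\Phi[X]$ and $\Phi[Y]$ already exist---circular for establishing existence on all of $\Delta_\Q$. The paper handles this separately: if $\linf_\varepsilon\Phi_\varepsilon[X]+c<\lsup_\varepsilon\Phi_\varepsilon[X]$, pick $\varepsilon_1,\varepsilon_2$ small with $\Phi_{\varepsilon_1}[X]+c<\Phi_{\varepsilon_2}[X]$, take $p\in S(G_{r_1,X,\varepsilon_1})$ with $a_X$ unplayed and $q\in S(G_{r_2,X,\varepsilon_2})$ with $a_X$ played for suitable $r_1+c<r_2$, and compare $(a_{r_1},a_X)$ against $(a_X,a_{r_2})$ in the product: for small $\varepsilon$ the latter strictly FOSD-dominates the former while having zero probability under $p\times q$, contradicting FOSD-Nash. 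You should add this argument explicitly. The remaining components of your plan (finiteness, $\Phi[c]=c$, monotonicity via two-game composites, sub-additivity) are sound and essentially identical to the paper's.
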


\begin{proof}
We first show that the limit exists. Suppose, for the sake of contradiction that $\linf_{\varepsilon \to 0} \Phi_\varepsilon[X] < \lsup_{\varepsilon \to 0} \Phi_\varepsilon[X]$ for some $X \in \Delta_\Q$. There then exist $\delta > 0$ and $c > 0$ such that for any $\varepsilon < \delta$, there are $\varepsilon_1,\varepsilon_2<\varepsilon$ with $\Phi_{\varepsilon_1}[X] + c < \Phi_{\varepsilon_2}[X]$. Thus, there exist $(A,u)=G_{r_1,X,\varepsilon_1}$ and $(B,v)=G_{r_2,X,\varepsilon_2}$ with $r_1 + c <r_2$, $p\in S(A,u)$, $q\in S(B,v)$, such that $p_1(a_X,\pi)=0$ for all permutations $\pi$ and $q_1(b_X,\sigma)>0$ for some permutation $\sigma$. By Lemma~\ref{lm_lotteries_games_FOSD_undominated}, we have  $u_1((a_{r_1},\pi),p_2)+v_1((b_X,\sigma),q_2) = r_1+\varepsilon_1X+X$ and $u_1((a_X,\pi),p_2)+v_1((b_{r_2},\sigma),q_2) = X+r_2+\varepsilon_2X$. For $\varepsilon$ small enough, 
\[
u_1((a_{r_1},\pi),p_2)+v_1((b_X,\sigma),q_2) <_{\text{FOSD}} u_1((a_X,\pi),p_2)+v_1((b_{r_2},\sigma),q_2).
\]
However, since $p_1(a_X,\pi)=0$ for all $\pi$, there must be $\pi_0$ so that $p_1(a_{r_1},\pi_0)>0$. By bracketing, $p \times q \in S((A,u)\otimes (B,v))$. This violates distribution-monotonicity, since $p_1(a_{r_1},\pi_0)\cdot q_1(b_X,\sigma)>0$, while $p_1(a_X,\pi_0)\cdot q_1(b_{r_2},\sigma)=0$. Hence, the limit $\Phi[X]$ exists. 

\medskip
We next show $\Phi$ is a monotone additive statistic. 
Since players play ordinal-Nash, it is immediate that, for $c\in \R$, $\Phi_\varepsilon[c]=(1-\varepsilon)c$, so $\Phi[c]=c$; i.e., $\Phi$ is a statistic.

We need to show that $\Phi$ is additive for independent variables. We will show sub-additivity; super-additivity follows an identical argument. Let $X,Y  \in \Delta_\Q$ be independent, and let $r>\Phi[X],s>\Phi[Y]$. Fix $t >r+s$ and let $(A,u)=G_{r,X,\varepsilon}$, $(B,v)=G_{s,Y,\varepsilon}$, and $(C,w)=G_{t,X+Y,\varepsilon}$. Fix $o \in S(A,u)$, ${p}\in S(B,v)$, and ${q}\in S(C,w)$. By bracketing, $o\times p\times q\in S((A,u)\otimes(B,v)\otimes(C,w))$. Note that for any permutations $\pi$, $\sigma$ and $\tau$, by Lemma~\ref{lm_lotteries_games_FOSD_undominated},
\begin{align*}
    u_1((a_r,\pi),o_2)+u_1((a_s,\sigma),p_2)+u_1((a_{X+Y},\tau),q_2)&= X+Y+\varepsilon(X+Y)+r+s\\
    u_1((a_X,\pi),o_2)+u_1((a_Y,\sigma),p_2)+u_1((a_{t},\tau),q_{2})&= X+Y+\varepsilon(X+Y)+t,
\end{align*}
where the latter expression first-order dominates the former one. By distribution-monotonicity, we must have
\[o_1(a_r,\pi)\cdot p_1(a_s,\sigma)\cdot q_1(a_{X+Y},\tau) \leq o_1(a_X,\pi)\cdot p_1(a_Y,\sigma)\cdot q_1(a_t,\tau) .\]

From the definition of $\Phi$, for $\varepsilon$ small enough $o_1(a_X,\pi)=p_1(a_Y,\sigma)=0$. Since the above inequality must hold for all $\pi$ and $\tau$, we see that $q_1(a_{X+Y},\tau)=0$. Since we can choose $r$, $s$, and $t$ so that $t$ is arbitrarily close to $\Phi[X]+\Phi[Y]$, it follows that $\Phi[X+Y]\leq \Phi[X]+\Phi[Y]$.

We next show that $\Phi$ is monotone with respect to first-order stochastic dominance. Let $\varepsilon>0$, $X,Y\in \Delta_\Q$ with $X >_\text{FOSD} Y$, and fix $r < s$. Let $(A,u)=G_{r,X,\varepsilon}$, $(B,v)=G_{s,Y,\varepsilon}$, and fix ${p}\in S(A,u)$ and ${q}\in S(B,v)$. Note that, for all $\pi$ and $\sigma$ and for all $\varepsilon$ small enough, since $Y <_\text{FOSD} X$ and $r<s$,
\begin{align*}
 u_1((a_r,\pi),p_2)+v_1((b_Y,\sigma),q_2) 
 &= Y+ \varepsilon X + r   \\
 &<_\text{FOSD} X+ \varepsilon Y + s\\
 &= u_1((a_X,\pi),p_2)+v_1((b_s,\sigma),q_2).
\end{align*}
Thus, by distribution-monotonicity, ${p}_1(a_r,\pi)\cdot {q}_1(b_Y,\sigma)\leq {p}_1(a_X,\pi)\cdot {q}_1(b_s,\sigma)$. Hence, if ${p}_1(a_X,\pi)=0$ for all $\pi$, it must be that ${q}_1(b_Y,\sigma)=0$ for all $\sigma$. Recalling the definition of $\Phi_\varepsilon$ in \eqref{eq_phi_eps}, it follows that for any $r<s$, there is a $\delta>0$ such that for all $\varepsilon<\delta$, if $\Phi_\varepsilon[X] < r$ then $\Phi_\varepsilon[Y]\leq s$. Hence, $\Phi[X]\geq \Phi[Y]$. 

Finally, if $X$ and $Y$ have the same distribution then, for any $\varepsilon>0$, $\Phi[Y]-\varepsilon = \Phi[Y-\varepsilon] \leq\Phi[X]\leq \Phi[Y+\varepsilon]=\Phi[Y]+\varepsilon$, so $\Phi[X]=\Phi[Y]$.

\end{proof}

\begin{lemma}\label{lm_DSC_distribution}
Suppose a solution concept $S$ permits bracketing and is a refinement of ordinal-QRE. Then there is a monotone additive statistic $\Phi$ such that for all $G = (A,u), p \in S(G),$  and $a \in A_1$, $$p_1(a) \propto \exp(\lambda \Phi[u_1(a, p_{-1})])$$ for some $\lambda \geq 0$.
\end{lemma}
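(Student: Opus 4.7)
The plan is to adapt the argument of Lemma~\ref{lm_DSC_number}, replacing the deterministic test games $G_x$ with the lottery test games $H_{r,X}$ from Appendix~\ref{ap_test}, and using distribution-neutrality---which follows from the FOSD-QRE hypothesis, since weak FOSD in both directions yields weak probability inequalities in both directions---in place of expectation-neutrality. By Lemma~\ref{lm_lotteries_games}, for each $X \in \Delta_{\Q}$ and $r \in \R$, player~2 mixes uniformly in any solution of $H_{r,X}$, so player~1 effectively compares the lottery $X$ (action $b_X$) with the sure payoff $r$ (action $b_r$). Fixing $p^{r,X} \in S(H_{r,X})$, I would set $f(r,X) = \log\bigl(p^{r,X}_1(b_X)/p^{r,X}_1(b_r)\bigr)$, which is well-defined by interiority. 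This quantity is in fact independent of the chosen solution: for any $p, p' \in S(H_{r,X})$, applying distribution-neutrality to the composite $H_{r,X} \otimes H_{r,X}$ (where $(b_r, b_X)$ and $(b_X, b_r)$ yield the same lottery under $p\times p'$) forces $p_1(b_X)/p_1(b_r) = p'_1(b_X)/p'_1(b_r)$.

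Next I would derive a functional equation relating $f$ across test games. Applying bracketing to $H_{r,X} \otimes H_{s,Y} \otimes H_{r+s, X+Y}$, the profiles $(b_X, b_Y, b_{r+s})$ and $(b_r, b_s, b_{X+Y})$ both induce the lottery $r+s+X+Y$ for player~1, so distribution-neutrality equates their probabilities in the product solution, yielding after rearrangement
\[
   f(r, X) + f(s, Y) = f(r+s, X+Y) \qquad \text{for all } r,s \in \R \text{ and } X,Y \in \Delta_{\Q}.
\]
Specializing to $X=Y=0$ gives a Cauchy equation; monotonicity of $f(\cdot, 0)$ in $r$, obtained by composing $H_{r,0}\otimes H_{r',0}$ and applying distribution-monotonicity, forces $f(r,0)=-\lambda r$ for some $\lambda \geq 0$. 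Assuming $\lambda>0$ (the case $\lambda=0$ forces $f(0,X)=0$ for all $X \in \Delta_{\Q}$ by a monotonicity sandwich between constant lotteries, so player~1 mixes uniformly in every game and the conclusion holds with any $\Phi$), the functional equation gives $f(r,X)=f(0,X)-\lambda r$, and $f(0,\cdot)$ is additive on independent rational lotteries. Setting $\Phi(X)=f(0,X)/\lambda$, FOSD-monotonicity of $\Phi$ follows by comparing $(b_X,b_r)$ and $(b_r,b_Y)$ in $H_{r,X}\otimes H_{r,Y}$ when $X\geq_{\mathrm{FOSD}} Y$; the statistic property $\Phi(c)=c$ follows from distribution-neutrality in $H_{0,c}\otimes H_{c,0}$, where $(b_c, b_0)$ and $(b_0, b_c)$ both yield $c$. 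Thus $\Phi$ is a monotone additive statistic on $\Delta_{\Q}$, and Lemma~\ref{lm_rational_MAS} extends it uniquely to $\Delta_{\mathrm{fin}}$.

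To establish the logit formula in a general game $G=(A,u)$ with $p\in S(G)$ and actions $a,b\in A_1$ inducing lotteries $X_a=u_1(a,p_{-1})$ and $X_b=u_1(b,p_{-1})$, I aim for $p_1(a)/p_1(b) = \exp\!\bigl(\lambda(\Phi[X_a]-\Phi[X_b])\bigr)$. When $X_a,X_b\in\Delta_{\Q}$, this follows immediately by applying distribution-neutrality to the composite $G\otimes H_{0,X_a}\otimes H_{0,X_b}$: the profiles $(a,b_0,b_{X_b})$ and $(b,b_{X_a},b_0)$ both induce the lottery $X_a+X_b$, so their probabilities in the product solution coincide, and rearranging the equation using $q^{(a)}_1(b_{X_a})/q^{(a)}_1(b_0)=\exp(\lambda\Phi[X_a])$ yields the display.

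The main obstacle is the general case, where $X_a$ or $X_b$ may have irrational weights and the test games $H_{0,X_a}$, $H_{0,X_b}$ are not defined. I would handle this by a sandwich approximation: choose $X_a^\pm, X_b^\pm \in \Delta_{\Q}$ with $X_a^- \leq_{\mathrm{FOSD}} X_a \leq_{\mathrm{FOSD}} X_a^+$ and similarly for $X_b$, converging monotonically to $X_a$ and $X_b$ (via the $\ubar{X}_n$ and $\bar X_n$ construction of Appendix~\ref{ap_MAS}). Substituting into $G \otimes H_{0, X_a^+} \otimes H_{0, X_b^-}$ and $G \otimes H_{0, X_a^-} \otimes H_{0, X_b^+}$, and using distribution-monotonicity in place of distribution-neutrality, yields the two-sided bounds
\[
   \exp\!\bigl(\lambda(\Phi[X_a^-]-\Phi[X_b^+])\bigr) \leq \frac{p_1(a)}{p_1(b)} \leq \exp\!\bigl(\lambda(\Phi[X_a^+]-\Phi[X_b^-])\bigr).
\]
The convergence $\Phi[\ubar{X}_n], \Phi[\bar X_n] \to \Phi[X]$ established in Lemma~\ref{lm_lim_Phi_converge} then squeezes these bounds to the desired equality, completing the proof.
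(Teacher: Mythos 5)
Your proposal is correct and follows essentially the same route as the paper's proof: define $f$ from player 1's odds ratio in the test games $H_{\cdot,X}$ (the paper uses only $H_{0,X}$, which suffices; your extra parameter $r$ is a cosmetic generalization), derive additivity and monotonicity of $f$ via bracketing and distribution-neutrality, pin down $\lambda$ on deterministic lotteries via the Cauchy equation, extend to $\Delta_{\mathrm{fin}}$ by Lemma~\ref{lm_rational_MAS}, and transfer to arbitrary games by the $\ubar{X}_n,\bar{X}_n$ sandwich with distribution-monotonicity and Lemma~\ref{lm_lim_Phi_converge}. The only presentational difference is that the paper establishes that $S(H_{0,X})$ is a singleton where you instead verify that the odds ratio is solution-independent; these are the same observation.
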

\begin{proof}[Proof of Lemma~\ref{lm_DSC_distribution}]

For each $X \in \Delta_\Q$, let $H_{0,X}=(B,v)$ be the game defined in \eqref{eq_HrX}, where $r=0$. We claim that since $S$ permits bracketing, $S(H_{0,X})$ is a singleton. 
Indeed, let $p,q \in S(H_{0,X})$, so $p\times q\in S(H_{0,X}\otimes H_{0,X})$ by  bracketing. Since \[v_1(b_X,p_{2})+v_1(b_0,q_{2})=X=v_1(b_0,p_{2})+v_1(b_X,q_{2}),\] it must be that $p_1(b_X)\cdot q_1(b_0)=p_1(b_0)\cdot q_1(b_X)$. Equivalently, $p_1(b_X)\cdot (1-q_1(b_X))=(1-p_1(b_X))\cdot q_1(b_X)$, so $p_1=q_1$. By distribution-neutrality, $p_i=q_i$ for $i\neq 1$, since they must both be the uniform distribution over $B_i$. Hence $p=q$, and $S(H_{0,X})$ is a singleton.

Define $f\colon \Delta_\Q\to \R$ by \[f(X) := \log \left(\frac{p_1(b_X)}{1-{p}_1(b_X)}\right),\] 
where $p$ is the unique element of $S(H_{0,X})$. This is finite by interiority. 

Let $X,Y \in \Delta_\Q$ be independent lotteries and let $o\in S(H_{0,X}), p \in S(H_{0,Y})$, and $q \in S(H_{0,X+Y})$. By bracketing, $o\times p\times q\in S(H_{0,X}\otimes H_{0,Y}\otimes H_{0,X+Y})$. By distribution-neutrality, $(1-o_1(b_{X}))\cdot(1-{p}_1(b_{Y}))\cdot q_1(b_{X+Y})=o_1(b_X)\cdot{p}_1(b_Y)\cdot(1-q_1(b_{X+Y}))$. Rearranging gives
\[
\frac{q_1(b_{X+Y})}{1-{q}_1(b_{X+Y})} =\frac{o_1(b_X)}{1-{o}_1(b_X)} \cdot \frac{p_1(b_Y)}{1-{p}_1(b_Y) }.
\]
Taking logs, we have $f(X + Y)=f(X)+f(Y)$; i.e., $f$ is additive for independent lotteries.

Since $S$ permits bracketing and satisfies distribution-monotonicity, for $X>_{\text{FOSD}} Y$, $p \in S(H_{0,X})$, $q \in S(H_{0,Y})$, we must have ${p}_1(b_X)\cdot {q_1}(b_0)\geq q_1(b_{Y})\cdot p_1(b_0)$; i.e., ${p}_1(b_X) \geq {p}_1(b_Y)$. Hence, $f$ is nondecreasing in first-order dominance. Finally, define $g \colon \R\to \R$ by  $g(x)=f(x)$ for deterministic lotteries yielding $x$ for sure. Then $g(x+y)=f(x+y)=f(x)+f(y)=g(x)+g(y)$. Since $f$ is monotone, $g$ must be monotone, so there is a $\lambda \in [0,\infty)$ such that $f(x)=g(x)=\lambda x$ for all $x \in \R$. Hence, $f$ is a monotone additive statistic on $\Delta_{\Q}$ scaled by $\lambda$. By Lemma~\ref{lm_rational_MAS}, $f(X)=\lambda \Phi[X]$, for some monotone additive statistic $\Phi$ on $\Delta_\fin$.

Fix any $G=(A,u), o\in S(G)$ with $a,a' \in A_1$. Let $X=u_1(a,o_{-1})$ and $Y=u_1(a',o_{-1})$. Fix $n\geq 1$, and let $(B,v) = H_{0,\bar X_n}$, $ (C,w) = H_{0,\ubar Y_n}$, and let $p \in S(H_{0,\bar X_n})$ and $q \in S(H_{0,\ubar Y_n})$. By bracketing, $o\times p\times q\in S(G \otimes H_{0,\bar X_n} \otimes H_{0,\ubar Y_n})$. By \eqref{eq_X_n_increasing}, 
\begin{align*}
u_1(a,o_{-1})+v_1(b_0,p_2)+w_1(c_{\ubar Y_n},q_{2}) 
&= X+\ubar Y_n\\
&<_\text{FOSD} \bar X_n+ Y\\
&=   u_1(a',o_{-1})+v_1(b_{\bar X_n},p_{2})+w_1(c_0,q_{2}).
\end{align*}
Thus, by distribution-monotonicity, $$o_1(a)\cdot(1-p_1(b_{\bar X_n}))\cdot q_1(c_{\ubar Y_n})\leq o_1(a')\cdot p_1(b_{\bar X_n})\cdot (1-q_1(c_{\ubar Y_n})).$$ By interiority, we have $$\frac{o_1(a)}{o_1(a')}\leq \frac{p_1(b_{\bar X_n})}{1-p_1(b_{\bar X_n})}\cdot\frac{1-q_1(c_{\ubar Y_n})}{q_1(c_{\ubar Y_n})} = \frac{\exp{f(\bar X_n)}}{\exp{f(\ubar Y_n)}}=\frac{\exp(\lambda \Phi[\bar X_n])}{\exp(\lambda \Phi[\ubar Y_n])}.$$ 

 By a symmetric argument, $$\frac{\exp(\lambda \Phi[\ubar X_n])}{\exp(\lambda \Phi[\bar Y_n])} \leq \frac{o_1(a)}{o_1(a')}.$$ Since these inequalities hold for all $n\geq 1$, by Lemma~\ref{lm_lim_Phi_converge}, $$\frac{o_1(a)}{o_1(a')}=\frac{\exp(\lambda \Phi[X])}{\exp(\lambda \Phi[Y])}.$$

 Since $a'$ was an arbitrary element of $A_1$, $$o_1(a) \propto \exp (\lambda \Phi[u_1(a,o_{-1})]). $$ 

\end{proof}
\begin{proof}[Proof of Theorem~\ref{th_sre}]
Since $S$ satisfies the assumptions of Proposition~\ref{lm_Ghl} proved above, $S$ is either a refinement of ordinal-Nash, or a refinement of ordinal-QRE. In the latter case, $S$ is a refinement of an $\LP$ by Lemma~\ref{lm_DSC_distribution}, and anonymity, which ensures that all players use the same $\lambda$ and $\Phi$. 

It remains to consider the case that $S$ is a refinement of ordinal-Nash. We show that in this case $S$ is a refinement of a $\Nash_\Phi$ equilibrium, concluding the proof of Theorem~\ref{th_sre}. Define the monotone additive statistic $\Phi \colon \Delta_\fin \to \R$ using Lemma~\ref{lm_phiep} to first define it on $\Delta_\Q$ and then applying the representation of Lemma~\ref{lm_rational_MAS} to obtain a monotone additive statistic on $\Delta_\fin$.

We will show that in any game only actions maximizing  $\Phi$ can be played with positive probability. Let $G=(A,u)$, with $a,b \in A_1, o\in S(G)$, and $\Phi[u_1(a,o_{-1})]>\Phi[u_1(b,o_{-1})]$. Let $X=u_1(a,o_{-1}),Y=u_1(b,o_{-1})$ so that $\Phi[Y] < \Phi[X]$, and denote $a_X=a$ and $a_Y=b$. 

By Lemma~\ref{lm_lim_Phi_converge}, there is an $n\geq 1$, such that $\Phi[\bar Y_n]<\Phi[\ubar X_n]$. Fix $r,s$ with $\Phi[\bar Y_n]<r<s < \Phi[\ubar X_n]$. 

Define 
\begin{align*}
    \Phi_\varepsilon[X]=\sup \{r\in \R \mid \exists p \in S(G_{r,X,\varepsilon}), \exists \pi \colon A_2 \to A_2 \text{ with } p_1(a_X,\pi)>0\},
\end{align*}
as in Lemma~\ref{lm_phiep}. By that lemma, there exists $\delta > 0$ such that for all $\varepsilon<\delta$, $\Phi_\varepsilon[\bar Y_n]<r<s < \Phi_\varepsilon[\ubar X_n]$. Fix $\varepsilon<\delta$, so that by the definition of $\Phi_\varepsilon$ there is a $t\in [s,\Phi_\varepsilon[\ubar X_n]]$, a $p \in S(G_{t,\ubar X_n,\varepsilon})$ and a $\pi$ such that $p_1(b_{\ubar X_n},\pi)>0$. Let $(B,v)=G_{t,\ubar X_n,\varepsilon}$ and $(C,w)=G_{r,\bar Y_n,\varepsilon}$. Let $q \in S(G_{r,\bar Y_n,\varepsilon})$, and note that for all $\sigma$, $q_1(c_{\bar Y_n},\sigma)=0$. For all $\varepsilon$ small enough,
\begin{align*}
    u_1&(a_Y,o_{-1})+v_1((b_{\ubar X_n},\pi),p_{-1})+w_1((c_r,\sigma),q_{-1})\\
    &=Y+\ubar X_n+r+\varepsilon\cdot \bar Y_n \\
    &<_\text{FOSD} X+t+\varepsilon\cdot \ubar X_n+\bar Y_n\\ &=u_1(a_X,o_{-1})+v_1((b_t,\pi),p_{-1})+w_1((c_{\bar Y_n},\sigma),q_{-1})),
\end{align*}
by \eqref{eq_X_n_increasing} and the fact that $r<t$. By bracketing, $o \times p \times q \in S((A,u)\otimes (B,v)\otimes (C,w))$. Thus, by distribution-monotonicity, \[o_1(a_Y) \cdot p_1(b_{\ubar X_n},\pi) \cdot q_1(c_r,\sigma) \leq o_1(a_X) \cdot p_1(b_t,\pi) \cdot q_1(c_{\bar Y_n},\sigma).\]

The right-hand side is zero, since $q_1(c_{\bar Y_n},\sigma)=0$ for all $\sigma$. There is therefore a $\sigma_0$ with $q_1(c_r,\sigma_0)>0$. Since $p_1(b_{\ubar X_n},\pi)>0$, it must be that $o_1(a_Y)=0$, demonstrating that only maximizers of $\Phi$ can be played with positive probability.

Since $S$ satisfies anonymity, we have now shown that if this case holds, there is a monotone additive statistic $\Phi$ such that for all games $G$, $p \in S(G)$, players~$i$, and $a_i \in A_i,$ \[\supp p_i\subseteq \argmax_{a_i} \Phi[u_i(a_i,p_{-i})].\]
\end{proof}

\section{Proof of Theorem \ref{th_complex}}\label{ap_C}
Since $S$ permits bracketing and satisfies distribution-monotonicity, anonymity and interiority, by Theorem~\ref{th_sre}, $S$ is a refinement of some $\LQRE_{\lambda \Phi}$. Scale invariance ensures that $\Phi$ belongs to the class of positively homogeneous monotone additive statistics, which we characterize in the following lemma. We use $\Delta_\fin$ to refer to the set of all lotteries with finite outcomes.

\begin{lemma}\label{lm_min_max_ex}
    Suppose that $\Phi\colon \Delta \to \R$ is a monotone additive statistic such that $\Phi[\beta X] = \beta\Phi[X]$ for all $X\in \Delta_\fin$ and some $\beta > 0$, $\beta \neq 1$. Then $\Phi$ is a convex combination of the  minimum, the  maximum, and the expectation.
\end{lemma}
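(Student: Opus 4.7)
The plan is to apply the representation theorem for monotone additive statistics (Lemma~\ref{lm_rational_MAS}) to write $\Phi[X]=\int_{\overline\R}K_a[X]\,\dd\mu(a)$ for a unique Borel probability measure~$\mu$ on $\overline\R$, and then to show that the scaling hypothesis forces $\mu$ to be supported on $\{-\infty,0,+\infty\}$. Since $K_{-\infty}$, $K_0$, and $K_{+\infty}$ are exactly the minimum, expectation, and maximum, the conclusion follows.

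The first step is a direct computation of how $K_a$ behaves under the dilation $X\mapsto \beta X$. For $a\in\R\setminus\{0\}$,
\[
K_a[\beta X]=\tfrac{1}{a}\log\E[\ee^{a\beta X}]=\beta\cdot\tfrac{1}{a\beta}\log\E[\ee^{(a\beta)X}]=\beta K_{a\beta}[X],
\]
and this identity extends to $a\in\{-\infty,0,+\infty\}$ with the convention $(\pm\infty)\cdot\beta=\pm\infty$ for $\beta>0$ (since $\min$, $\E$, and $\max$ are positively homogeneous). Therefore
\[
\Phi[\beta X]=\int_{\overline\R}\beta K_{a\beta}[X]\,\dd\mu(a)=\beta\int_{\overline\R}K_b[X]\,\dd(T_*\mu)(b),
\]
where $T\colon\overline\R\to\overline\R$ is the map $T(a)=a\beta$ and $T_*\mu$ is the pushforward of $\mu$ under~$T$. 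Combining this with $\Phi[\beta X]=\beta\Phi[X]$, the two representations $\int K_b\,\dd(T_*\mu)$ and $\int K_b\,\dd\mu$ give the same statistic. By the uniqueness part of the Mu--Pomatto--Strack--Tamuz representation (applied on $\Delta_\fin$ via the extension in Lemma~\ref{lm_rational_MAS}), this yields $T_*\mu=\mu$, i.e., $\mu$ is $T$-invariant.

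The final step is an orbit argument. Without loss of generality $\beta>1$ (otherwise work with $T^{-1}$, which $\mu$ is also invariant under). The only fixed points of $T$ in $\overline\R$ are $-\infty$, $0$, and $+\infty$. For any fixed $c>0$, the intervals $I_n=[c\beta^n,c\beta^{n+1})$ for $n\in\Z$ are pairwise disjoint, their union is $(0,\infty)$, and $T$-invariance gives $\mu(I_n)=\mu(I_0)$ for all~$n$. Since $\mu$ is a probability measure, summability forces $\mu(I_0)=0$, whence $\mu((0,\infty))=0$; the identical argument yields $\mu((-\infty,0))=0$. Hence $\mu$ is concentrated on $\{-\infty,0,+\infty\}$, and
\[
\Phi[X]=\mu(\{-\infty\})\min X+\mu(\{0\})\E[X]+\mu(\{+\infty\})\max X,
\]
a convex combination of min, expectation, and max.

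The only nontrivial ingredient is the uniqueness of the representing measure~$\mu$, which I would invoke from the Mu--Pomatto--Strack--Tamuz theorem; everything else is the dilation identity for $K_a$ and a short ergodic-type orbit argument. The remaining routine verification is that the identity $K_a[\beta X]=\beta K_{a\beta}[X]$ is continuous in $a$ at the endpoints $\pm\infty$, which is immediate from the fact that $K_a[X]\to \max X$ as $a\to+\infty$ and $K_a[X]\to\min X$ as $a\to-\infty$ uniformly over the finite support of $X$.
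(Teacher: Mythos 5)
Your proposal is correct and follows essentially the same route as the paper: represent $\Phi=\int K_a\,\dd\mu$ via Lemma~\ref{lm_rational_MAS}, use the dilation identity $K_a[\beta X]=\beta K_{a\beta}[X]$ together with uniqueness of the representing measure to conclude $\mu$ is invariant under $a\mapsto a\beta$, and deduce that $\mu$ is concentrated on $\{-\infty,0,+\infty\}$. The only difference is that you spell out the final step with an explicit orbit decomposition of $(0,\infty)$ into the intervals $[c\beta^n,c\beta^{n+1})$, where the paper simply asserts that a scaling-invariant probability measure on $\R$ must be the point mass at $0$.
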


\begin{proof}[Proof of Lemma~\ref{lm_min_max_ex}]
Let $\beta > 0$ and let $\Phi$ be a monotone additive statistic. By Lemma~\ref{lm_rational_MAS},  $\Phi[X] = \int K_a[X]\,\dd\mu(a)$. Then
\begin{align*}
  \Phi[\beta X]
  &= \int \frac{1}{a}\log\E[\ee^{a \beta X}]\,\dd\mu(a) \\
  &= \int \frac{\beta}{a \beta}\log\E[\ee^{a \beta X}]\,\dd\mu(a) \\
  &= \beta \int K_{a\beta}[X]\,\dd\mu(a) \\
  &=\beta\int K_{a}[X]\,\dd(\beta_*\mu)(a).
\end{align*}
Denote $\Psi[X] = \int K_a[X]\,\dd(\beta_*\mu)(a)$, and note that this is also a monotone additive statistic. Then $ \Phi[\beta X] = \beta\Psi[X]$.

Suppose $\Phi[\beta X] = \beta \Phi[X]$ for all $X$ and some $\beta > 0$. Hence $\beta\Phi[X] = \beta\Psi[X]$ for all $X$, and so $\Phi = \Psi$. By Lemma 5 of \cite*{mu2024monotone}  it follows that $\mu = \beta_*\mu$. Since a probability measure on $\R$ can only be invariant to rescaling by $\beta \neq 1$ if it is the point mass at $0$, it follows that $\mu(\{-\infty,+\infty,0\})=1$.
\end{proof}
It is straightforward to see that if $\mu$ is supported on $\{-\infty,+\infty,0\}$, then $\Phi$ satisfies $\Phi[\beta X]=\beta \Phi[X]$ for all $\beta > 0$. We proceed with the proof of Theorem~\ref{th_complex}.

\begin{proof}[Proof of Theorem~\ref{th_complex}]
    Since $S$ permits bracketing and satisfies distribution-monotonicity, anonymity and interiority, by Theorem~\ref{th_sre}, $S$ is a refinement of some $\LQRE_{\lambda \Phi}$. If $\lambda=0$, then $S$ is a refinement of $\text{Min-Max-Mean}_{(0,0,0)}$ and we are done, so consider $\lambda>0$. Let $X \in \Delta_\Q$ and consider the game $H_{r,X}=(B,v)$ where $r=\Phi[X]$ (see \eqref{eq_HrX}). Let $p \in S(H_{r,X})$ and note that $p_1(b_r)=p_1(b_X)$, so all players play uniform strategies. Hence, by scale invariance, $p \in S(B,\frac{1}{2}\cdot v)$.
    Since $S$ is a refinement of $\LQRE_{\lambda\Phi}$ with $\lambda>0$, it follows that $\Phi[\frac{1}{2} X]=\Phi[\frac{1}{2} r]=\frac{1}{2} r=\frac{1}{2} \Phi[X]$. By Lemma~\ref{lm_min_max_ex}, $\Phi$ is a convex combination of the minimum, the maximum and the expectation. The desired representation follows by setting $\lambda_1 = \lambda \mu(\{-\infty\})$, $\lambda_2 = \lambda \mu(\{0\})$ and $\lambda_3 = \lambda\mu(\{+\infty\})$.
\end{proof}

\section{Connections to \cite*{brandl2024axiomatic}}\label{sec_bnb}
\cite*{brandl2024axiomatic} characterize Nash equilibrium as the unique solution concept satisfying consistency, consequentialism, and rationality.
Their work focuses on how negligible changes in the strategic environment impact behavior, while our approach emphasizes how players frame multiple unrelated decisions. To further explore the relationship between the two perspectives, we now outline their axioms.
\begin{definition}\label{def_consistency} 
A solution concept $S$ is \emph{consistent} if for any $(A,u)$, $(A,v)$, and $\alpha \in [0,1]$
$$ S\big(A,u\big)\cap S\big(A,v\big)\subset S\big(A,\,\alpha u+(1-\alpha)v\big).$$
\end{definition}
Consistency requires that if a strategy profile is a solution to two games, it must 
also be a solution to any convex combination of them. Differently stated, given a mixed strategy profile $p$, the set of games $\{G \,:\, p \in S(G)\}$ that it solves must be convex for consistent~$S$.

Among SREs, only Nash equilibrium and $\LQRE_{\lambda}$ satisfy consistency.
While our Theorem~\ref{th_Nash QRE} characterizes these two solution concepts, consistency is not implied by the hypotheses of the theorem, as the theorem also allows for refinements. For example, trembling hand perfect equilibrium is a refinement of Nash equilibrium that permits   bracketing, satisfies expectation-monotonicity and anonymity, but violates consistency.\footnote{See $\S5$ of \cite*{brandl2024axiomatic} for an illustration of a consistency violation by trembling hand perfect equilibrium.}

To formulate the next axiom, we say that a game $G=(A,u)$ is a \emph{blow-up} of $H=(B,v)$ if there exist $f_i \colon A_i \to B_i$ such that $u_i(a)=v_i(f(a))$ for all players~$i$ and $a \in A$, where $f\colon A \to B$ is the map that applies $f_i$ to the $i$-th component of an action profile. Given a mixed strategy profile $p$ in $S(G)$, we denote by $q=f(p)$ the profile in $H$ given by $q_i(b_i) = p_i(f_i^{-1}(b_i))$.  

\begin{definition}\label{def_consequentialism}
A solution concept $S$ satisfies \emph{consequentialism} if for any games $H$ and $G$ such that $G$ is a blow-up of $H$, it holds that $p \in S(G)$ if and only if $f(p) \in S(H)$ for the witnessing $f$.
\end{definition}
Equivalently, consequentialism means that if two games $G,H$ are identical, except that $G$ contains an additional action $a_i'$ that is indistinguishable from $a_i$, then the solutions of the two games are the same, except that the probabilities assigned to $a_i$ in $H$ can be divided in any way between $a_i$ and $a_i'$ in $G$. In other words, 
duplicating actions should not affect behavior beyond splitting probabilities. Consequentialism is satisfied by every $\Nash_\Phi$ but is violated by every $\LP$.

\begin{definition}\label{def_rationality}
A solution concept $S$ satisfies \emph{rationality} if for any game $G = (A,u)$, player~$i$, and a strictly dominant action $a_i \in A_i$,  it holds that   $p_i(a_i) > 0$ for all $p \in S(G)$.
\end{definition}
That is, players play dominant strategies with positive probability. Distribution-monotonicity implies rationality and, in particular,
rationality is satisfied by all $\Nash_\Phi$ and~$\LP$.

\subsection{Implications of Consistency and Consequentialism}

The rationality assumption of Brandl and Brandt is straightforward, and so we focus on the connection between our axioms and their consequentialism and consistency. In particular, we investigate the connection between these properties and our scale-invariance and  bracketing.

First, we point out that consequentialism and consistency imply scale-invariance. In fact, they imply a much stronger property:  $S(A,u)\subset S(A,\alpha \cdot u )$ for any game $(A,u)$ and  all $\alpha \in (0,1)$.\footnote{This is in contrast with our scale-invariance axiom which only imposes that solutions which are uniform distributions be invariant to scaling down the payoff map.} Indeed, consequentialism implies that every mixed-strategy profile is a solution to the game $(A,0)$ whose payoff map vanishes for every action. The result then follows from consistency, since any scaled down game $(A,\alpha \cdot u)$ is the convex combination $(A, \alpha \cdot u + (1-\alpha) \cdot 0)$.

Consequentialism and consistency also imply a property that is a weakening of  bracketing: for any games $G$ and $H$ \textit{there exist} solutions $p\in S(G)$ and $ q\in S(H)$ such that $p\times q \in S(G\otimes H)$.  To see this, we first show that consequentialism and consistency imply that if $p \in S(G)$ and $q \in S(H)$ then $p\times q$ solves the 
game with the same action space as $G\otimes H$ but with payoffs halved. We write 
$\alpha \cdot W$ for a game with the same action set as $W$ and payoffs scaled by 
$\alpha$, so our goal is to show $p\times q\in S\left(\frac{1}{2}(G\otimes H)\right)$. Let $G=(A,u)$ and $ H=(B,v)$. Denote   $C=A\times B$ and 
define two auxiliary games $\hat{G}=(C, \hat u)$ and $\hat{H}=(C, \hat v)$ by
$$\hat u_i((a_1,b_1),\dots,(a_n,b_n))=u_i(a)\qquad\text{and}\qquad\hat v_i((a_1,b_1),\dots,(a_n,b_n))=v_i(b).$$
The game $\hat{G}$ is a blow-up of $G$ under the map $f$  that projects $A\times B$ to $A$; similarly, $\hat{H}$ is a blow-up of $H$ under the projection to $B$. 
Consequentialism implies that $p\times q \in S(\hat G)\cap S(\hat H)$. By consistency, $p\times q$ is also a solution to the game $\left(C,\frac{1}{2}\hat u + \frac{1}{2}\hat v\right)$ which is precisely $\frac{1}{2}(G\otimes H)$.
Finally, let $p\in S(2G)$ and $q \in S(2H)$, so that by consequentialism and consistency $p \in S(G)$, $q\in S(H)$, and $p \times q \in S(G\otimes H)$, as $G\otimes H= \frac{1}{2}\big((2G)\otimes (2H)\big)$.

Curiously,  bracketing is not implied by consequentialism and consistency: there are solution concepts $S$ satisfying these properties, with $p \in S(G)$ and $q \in S(H)$ such that $p \times q \not \in S(G \otimes H)$. For example, consider $\varepsilon$-Nash that assigns to a game $G$ all mixed strategy profiles $p$ such that $p_i(a_i) > 0$ implies $\E[u_i(a_i,p_{-i})] \geq \max_{b_i}\E[u_i(b_i,p_{-i})] - \varepsilon$. It is straightforward to check that $\varepsilon$-Nash satisfies consequentialism and consistency. However, $\varepsilon$-Nash violates  bracketing. Indeed, if player~$i$ plays an action that is $\varepsilon$-suboptimal in $G$ then  bracketing would imply that they play a $2\varepsilon$-suboptimal action in $G \otimes G$. Of course, by the main result of \cite*{brandl2024axiomatic},  bracketing is implied if we add their rationality assumption,  highlighting another connection between  bracketing and rationality.

\section{Strategic Invariance and the Emergence of Expected Utility}\label{app_strategic_eq}

Relaxing expectation-monotonicity to distribution-monotonicity yields new families of solution concepts, developed in \S\ref{sec:SRE}. That analysis demonstrates that distribution-monotonicity is much weaker than expectation-monotonicity, even when coupled with bracketing. In this section, we show that the gap between these axioms can be bridged with the additional assumption of \textit{strategic invariance}, which restricts a solution concept's predictions across \textit{strategically equivalent} games.

\begin{definition}\label{def_strat_eq_games}
    Games $(A,v)$ and $(A,u)$ are \emph{strategically equivalent} if for each player~$i$ there exists a function $w_i \colon A_{-i} \to \R$ such that $v_i(a) = u_i(a) + w_i(a_{-i})$.
\end{definition}
That is, strategically equivalent games share the same sets of actions, and while payoffs may differ, they must satisfy $v_i(a_i,a_{-i})-v_i(b_i,a_{-i}) = u_i(a_i,a_{-i})-u_i(b_i,a_{-i})$ for all $a_i,b_i \in A_i$ and $a_{-i} \in A_{-i}$. In other words, player~$i$'s marginal payoff of switching from an action $a_i$ to another action $b_i$ is the same in the two games.

The notion of strategic equivalence is fundamental to the study of solution concepts. For example, strategically equivalent games have identical sets of Nash and correlated equilibria. In mechanism design, strategic equivalence is an important tool. It provides the designer with the flexibility to modify a player~$i$'s transfers without altering their incentives, simply by adding a quantity that is independent of $i$'s report. This flexibility is crucial in mechanisms like VCG, where it helps to achieve the desired normalization of transfers and, in some environments, budget-balancedness.

\begin{definition}
    A solution concept $S$ satisfies \emph{strategic invariance} if  $S(A,u) = S(A,v)$ for strategically equivalent games $(A,u)$ and $(A,v)$.
\end{definition}
Strategic equivalence is respected by $\Nash$ and $\LQRE_\lambda$, as well as many other concepts that do not have a rational expectations component, such as rationalizability and level-$k$ reasoning.\footnote{One could imagine a more restrictive invariance notion that also includes invariance to rescaling of payoffs. We will not follow that route, since it would rule out $\LQRE$, which are not scale-invariant. \label{footnote:scale-invariance}}

Our next theorem shows that strategic invariance---when coupled with  bracketing---becomes a powerful assumption that elevates distribution-monotonicity to expectation-monotonicity. 
\begin{theorem}\label{th_strategic_equiv}
Suppose $S$ permits bracketing and satisfies strategic invariance, distribution-monotonicity, and anonymity. Then it satisfies expectation-monotonicity.
\end{theorem}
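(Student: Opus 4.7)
My plan is to apply Theorem~\ref{th_sre} to characterize $S$ as a refinement of a statistic response equilibrium with some monotone additive statistic $\Phi$, and then use strategic invariance to force $\Phi = \E$. Once $\Phi = \E$, expectation-monotonicity is immediate, since $S$ refines either $\Nash$ or $\LQRE_\lambda$, both of which are known to satisfy it.

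By Theorem~\ref{th_sre}, $S$ is a refinement of $\Nash_\Phi$ or $\LQRE_{\lambda\Phi}$ for some monotone additive statistic $\Phi$ and $\lambda \geq 0$. The heart of the argument will be to show that strategic invariance forces $\Phi[X+C] = \Phi[X] + \Phi[C]$ for arbitrary (possibly correlated) rational lotteries $X$ and $C$; Lemma~\ref{lm_additive_MAS} then yields $\Phi = \E$ on $\Delta_{\mathbb{Q}}$, and the extension to $\Delta_\fin$ follows as in Lemma~\ref{lm_rational_MAS}. In the LQRE case, I would consider the test game with $A_1 = \{a,b\}$, $A_2 = \{1,\ldots,m\}$, other players trivial, and payoffs $u_1(a,j)=X(j)$, $u_1(b,j)=Y(j)$, $u_2 \equiv 0$. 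Distribution-neutrality (implied by $\LQRE_{\lambda\Phi}$ through interiority and the logit form) forces $p_2$ to be uniform, so $p_1(a)/p_1(b) = \exp(\lambda(\Phi[X]-\Phi[Y]))$. The strategically equivalent game $G'$ obtained by adding $w_1(a_{-1}) = C(j)$ to player~1's payoffs yields the ratio $\exp(\lambda(\Phi[X+C]-\Phi[Y+C]))$. Because $\LQRE_{\lambda\Phi}$ assigns a unique solution to each such game, $S(G) = S(G')$ nonempty forces (for $\lambda > 0$) $\Phi[X]-\Phi[Y] = \Phi[X+C]-\Phi[Y+C]$; setting $Y \equiv 0$ gives the required additivity. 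The case $\lambda = 0$ is trivial, since $S$ then refines uniform mixing.

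The Nash case is parallel in spirit but more delicate, because distribution-neutrality is unavailable and player 2's uniform mixing must instead be pinned down using the shuffling games of Lemma~\ref{lm_lotteries_games_FOSD_undominated}, suitably adapted to compare two lotteries $X$ and $Y$ instead of a lottery against a sure thing. The analysis then yields only the ordinal condition $\Phi[X] > \Phi[Y] \Rightarrow \Phi[X+C] \geq \Phi[Y+C]$, because a strict reversal would make $\Nash_\Phi(G)$ and $\Nash_\Phi(G')$ assign disjoint supports to player 1 under the forced-uniform $p_2$, and the nonemptiness of $S(G)=S(G')$ would fail. This ordinal condition still delivers full additivity: if $\Phi[X]=\Phi[Y]$, then shifting $X$ by a constant $\epsilon > 0$ gives $\Phi[X_\epsilon]=\Phi[X]+\epsilon>\Phi[Y]$, so the ordinal condition yields $\Phi[X+C]+\epsilon \geq \Phi[Y+C]$; letting $\epsilon \to 0$ and symmetrizing gives $\Phi[X+C] = \Phi[Y+C]$. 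Hence $\Phi[\cdot+C]$ is a function of $\Phi[\cdot]$, and evaluating at deterministic $X = t$ (using the MAS constant-shift identity $\Phi[t+C] = t + \Phi[C]$) pins it down as $\Phi[X+C]=\Phi[X]+\Phi[C]$. The main obstacle will be the Nash case: extending the shuffling construction to a two-lottery comparison while ensuring that player 2's uniform mixing is preserved in both $G$ and $G'$ requires more careful bookkeeping than the clean LQRE argument, since the shift $C$ changes the joint distribution of player 1's lottery with player 2's actions in a $\pi$-dependent way.
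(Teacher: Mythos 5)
Your overall strategy---invoke Theorem~\ref{th_sre} to get a refinement of $\Nash_\Phi$ or $\LQRE_{\lambda\Phi}$, then use strategic invariance on test games to show $\Phi$ is additive for \emph{correlated} lotteries and conclude $\Phi=\E$ via Lemma~\ref{lm_additive_MAS}---is exactly the paper's, and your $\LQRE$ half is essentially the paper's argument (the paper sets $r=\Phi[X]$ in $H_{r,X}$ so the two actions get equal probability, rather than working with the ratio identity, but the content is the same). The gap is in the Nash case, and it is precisely the one you flag yourself: your argument requires a test game in which player~1 chooses between \emph{two} nondegenerate lotteries $X$ and $Y$ while player~2 is forced to mix uniformly in every FOSD-Nash, and you never construct it. This is not mere bookkeeping. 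Lemma~\ref{lm_lotteries_games_FOSD_undominated} pins down player~2's uniform mixing through an argument in which player~1's shuffling incentives are governed by a single vector $x$; with two lotteries the permutation(s) must align with $p_2$ for both $x$ and $y$ simultaneously, and player~2's payoff must be redesigned so that non-uniform mixing still creates a dominated action. Worse, as you note, the quantity ``$X+C$'' is not even well-defined in such a game: the joint distribution of $x_{\pi(a_2)}+C(a_2)$ under uniform $a_2$ depends on the permutation $\pi$, which is chosen endogenously by player~1, so the ordinal condition $\Phi[X]>\Phi[Y]\Rightarrow\Phi[X+C]\geq\Phi[Y+C]$ is not yet a statement about fixed couplings.

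The paper sidesteps all of this by never comparing two lotteries. It keeps the original game $G_{r,X,\varepsilon}$ (lottery versus near-sure thing), takes $r<\Phi[X]$ so that some $(a_X,\sigma)$ is in the support and every $(a_r,\pi)$ is not, and then applies the strategic shift $Y(\sigma(a_2))$ for that \emph{specific} $\sigma$. Since strategic invariance preserves the same solution $p$, player~2's uniform mixing carries over for free, and the support comparison in the shifted game reads $\Phi[X+Y]\geq\Phi[r+Y+\varepsilon X]=r+\Phi[Y+\varepsilon X]$; letting $\varepsilon\to0$ and $r\to\Phi[X]$ gives super-additivity, with sub-additivity by the symmetric argument. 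The constant-shift identity $\Phi[r+\,\cdot\,]=r+\Phi[\,\cdot\,]$ does the work that your perturbation-and-symmetrization step was meant to do, but without requiring the unconstructed two-lottery game. If you want to salvage your route, the honest fix is to abandon the two-lottery comparison and adopt this one-lottery-versus-constant shift; as written, the Nash half of your proof rests on a construction whose feasibility you have not established.
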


Recall that distribution-monotonicity is a rational expectations and monotonicity axiom, and does not have an expected utility or risk neutrality component. In contrast, expectation-monotonicity is a stronger axiom that implies distribution-monotonicity, and furthermore also has expected utility or risk neutrality components. Theorem~\ref{th_strategic_equiv} thus shows that strategic invariance is a potent assumption that highly constrains behavior to resemble risk neutrality. 

Combining Theorems~\ref{th_Nash QRE} and \ref{th_strategic_equiv}, we obtain the following corollary, which offers a foundation for Nash and $\LQRE_\lambda$ without directly assuming that behavior is driven by expectations.
\begin{corollary}
    Suppose $S$ permits bracketing and satisfies strategic invariance, distribution-monotonicity, and anonymity, then $S$ is either a refinement of $\Nash$ or of $\LQRE_\lambda$ for some $\lambda\geq 0$.
\end{corollary}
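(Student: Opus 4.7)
The plan is to apply Theorem~\ref{th_sre} to obtain $S\subseteq \Nash_\Phi$ or $S\subseteq \LQRE_{\lambda\Phi}$ for some monotone additive statistic $\Phi=\int K_a\,\dd\mu(a)$ and some $\lambda\geq 0$, and then use strategic invariance to force $\Phi=\E$. Since $\Nash$ and $\LQRE_\lambda$ both satisfy expectation-monotonicity by definition, this suffices.

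In the $\LQRE_{\lambda\Phi}$ case with $\lambda>0$, I would use the test games $H_{0,X}$ from Appendix~\ref{ap_test}: player 2 has null payoffs and hence mixes uniformly by distribution-neutrality, so the logit formula pins down the ratio $p_1(b_X)/p_1(b_0)=\exp(\lambda\Phi[X])$. Applying strategic invariance with a shift $w(a_2)$ to player 1's payoffs yields a strategically equivalent game where $b_X$ now yields $X+W$ and $b_0$ yields $W$, with $W=w(a_2)$ jointly distributed with $X$ through the uniform $a_2$. Since the solution must be unchanged, $\Phi[X+W]=\Phi[X]+\Phi[W]$ for all jointly-distributed $X,W\in\Delta_{\mathbb Q}$. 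Every rational joint distribution can be realized this way for large enough $m$, so Lemma~\ref{lm_additive_MAS} yields $\Phi=\E$.

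In the $\Nash_\Phi$ case, Proposition~\ref{pr_Nash_Phi_Exist} forces $\mu$ to be supported on $\R$. I would use the games $G_{r,X,\varepsilon}$, where Lemma~\ref{lm_lotteries_games_FOSD_undominated} forces player 2 to mix uniformly. A strategic shift $w(a_2)$ produces the modified game $G_{r,X,\varepsilon}^w$ with the same $S$-solutions. For $r$ slightly below $(1-\varepsilon)\Phi[X]$, any $p\in S(G_{r,X,\varepsilon})$ has $p_1(a_X,\pi_0)>0$ for some $\pi_0$; strategic invariance places this $(a_X,\pi_0)$ into the $\Phi$-argmax of $G_{r,X,\varepsilon}^w$, yielding $\max_\pi\Phi[X_\pi+W]\geq r+\max_\pi\Phi[\varepsilon X_\pi+W]$. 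The symmetric argument with $r$ slightly above $(1-\varepsilon)\Phi[X]$ gives the reverse inequality, so taking $r\to(1-\varepsilon)\Phi[X]$ and then $\varepsilon\to 0$ produces the identity $\max_\pi\Phi[X_\pi+W]=\Phi[X]+\Phi[W]$ for arbitrary jointly-distributed $X,W$.

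The principal obstacle is then extracting $\Phi=\E$ from this weaker ``max identity.'' For $\mu\neq \delta_0$ supported on $\R$, I would exhibit test lotteries $X,W$ satisfying $\max_\pi\Phi[X_\pi+W]>\Phi[X]+\Phi[W]$ strictly, contradicting the identity. A covariance-positivity (FKG-type) computation shows that $K_a[X+W]-K_a[X]-K_a[W]$ has the sign of $a$ whenever $X,W$ are positively correlated and non-degenerate. Choosing $X=W$ supported on uniform $\{1,\ldots,m\}$ with asymmetric weights (e.g., a single large entry and zeros elsewhere) and comparing the identity and swap permutations prevents the integral over $\mu$ from cancelling for any $\mu\neq\delta_0$---such asymmetric tests break the symmetries that make certain symmetric $\mu$ indistinguishable from $\E$ on symmetric lotteries. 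Hence $\mu=\delta_0$, $\Phi=\E$, and $S\subseteq \Nash$; combined with the $\LQRE_{\lambda\Phi}$ case, $S$ satisfies expectation-monotonicity.
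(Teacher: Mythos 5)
Your overall architecture matches the paper's: the corollary is obtained by composing Theorem~\ref{th_sre} with an argument that strategic invariance forces the statistic $\Phi$ to be the expectation (the paper packages the latter as Theorem~\ref{th_strategic_equiv}, which yields expectation-monotonicity and then invokes Theorem~\ref{th_Nash QRE}). Your $\LQRE_{\lambda\Phi}$ branch is essentially the paper's proof: uniform mixing by player~2 via distribution-neutrality, a strategic shift $w(a_2)$ that turns the comparison into one between $X+W$ and $W$ for an arbitrary rational joint distribution, additivity of $\Phi$ for dependent lotteries, and Lemma~\ref{lm_additive_MAS}. That branch is correct.

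The gap is in the $\Nash_\Phi$ branch. Because you fix the shift $w(a_2)$ before knowing which permutation $\pi_0$ ends up in the support of $p_1$, you only extract the identity $\max_\pi \Phi[X_\pi+W]=\Phi[X]+\Phi[W]$, and your route from this ``max identity'' to $\mu=\delta_0$ does not go through as sketched. The FKG computation you invoke cuts both ways: for $a>0$ the permutation maximizing $K_a[X_\pi+W]$ is the comonotone arrangement, while for $a<0$ it is the \emph{anti}-comonotone one, so when $\mu$ charges both signs no single $\pi$ witnesses a simultaneous strict gain across the support of $\mu$, and $\max_\pi\int K_a[X_\pi+W]\,\dd\mu(a)$ can a priori equal $\int (K_a[X]+K_a[W])\,\dd\mu(a)$ by cancellation. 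Your proposed fix (asymmetric test lotteries, comparing the identity and swap permutations) is exactly the hard part and is not carried out; as written it is an unproven claim, not a proof. The paper avoids this entirely with a small but essential trick: it first fixes a permutation $\sigma$ with $p_1(a_X,\sigma)>0$ in $G_{r,X,\varepsilon}$ and only then applies strategic invariance with the shift $Y(\sigma(a_2))$, i.e., the shift is composed with the realized $\sigma$. This pins the payoff of the supported action to be distributed exactly as $X+Y$ on the underlying uniform space, so the argmax condition of $\Nash_\Phi$ directly yields $\Phi[X+Y]\geq \Phi[X]+\Phi[Y]$ (and the mirror construction with $r>\Phi[X]$ gives the reverse), after which Lemma~\ref{lm_additive_MAS} applies with no detour through a max identity. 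If you adopt that composition-with-$\sigma$ device, your argument closes; without it, the key step is missing.
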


The proof of Theorem~\ref{th_strategic_equiv} is provided below. To build intuition, it is helpful to interpret payoffs as monetary. Under this interpretation, our aim is to understand how strategic invariance, together with the other axioms, rules out non-trivial risk attitudes.

Strategic invariance immediately implies that players display no wealth effects, since adding a constant to all payoffs does not change their behavior. To see why strategic invariance furthermore rules out any non-trivial risk attitudes, consider the following example of a two-player game. Player $2$ has two actions, $a_2$ and $b_2$, and gets payoff $0$ regardless of the action profile. For now, assume that this player mixes evenly between these two actions. Player $1$ has two actions, $a_1$ and $b_1$, and gets the payoffs presented on the left side of Table~\ref{tab:game_1}.
\begin{table}[h]
    \centering
    \begin{tabular}{c|c|c}
            & $a_2$&$b_2$ \\ \hline
         $a_1$&  $0$& $2$\\
         $b_1$&  $1$& $1$\\
    \end{tabular}
    \qquad 
    \qquad
    \begin{tabular}{c|c|c}
            & $a_2$&$b_2$ \\ \hline
         $a_1$&  $0$& $1$\\
         $b_1$&  $1$& $0$\\
    \end{tabular}
    \caption{Player $1$'s payoffs in two strategically equivalent games.}
    \label{tab:game_1}
\end{table}\\
In this game, both actions yield the same expected payoff, but action $a_1$ has variance $1$, whereas action $b_1$ has variance $0$, and so would be preferred by any risk-averse player. Consider now the strategically equivalent game described on the right side of Table~\ref{tab:game_1}.
Here, both actions yield the same distribution of payoffs to player~$1$, and hence risk attitudes should not influence the choice between $a_1$ and $b_1$. Since this game is strategically equivalent to the previous, we conclude that under strategic invariance, players would be indifferent between the two actions in the previous game and so are effectively risk-neutral.

The assumption that player~2 mixes evenly between the two actions is crucial for this argument and turns out to be non-trivial: if we cannot guarantee mixing by player~2, we cannot conclude anything about the risk attitudes of player~1. But in these games, player~2 has no particular reason to mix, and so the actual proof of Theorem~\ref{th_strategic_equiv} relies on the test games that we developed for the proof of Theorem~\ref{th_sre} (see Appendix~\ref{ap_test}).

 In these games, player~2 mixes, generating a choice between a sure thing and a lottery for player~1. By applying strategic invariance and the idea behind the simple games in Table~\ref{tab:game_1}, we show that player~1 will choose the sure thing if it is higher than the expectation of the lottery. To prove Theorem~\ref{th_strategic_equiv}, we need to extend this conclusion to all games. This step relies on Theorem~\ref{th_sre}.

\bigskip
While strategic invariance is a strong assumption that is suggestive of expected-utility maximization, it does not imply expected-utility maximization without the additional assumption of bracketing. To see this, we construct a solution concept for two players that satisfies strategic invariance as well as distribution-monotonicity, but does not satisfy expectation-monotonicity. This shows that bracketing is an important component in achieving expectation-monotonicity.

Consider the following family of games parameterized by $w,x,y,z\in \R$. Note that these games form an equivalence class under strategic equivalence. We suppose that for this class of games, 
the solution concept $S$ is the singleton solution in which player~1 chooses $a_1$ with probability $2/3$ and $b_1$ with probability $1/3$, and player~2 chooses $a_2$ with probability $1/3$ and $b_2$ with probability $2/3$. Thus, $S$ violates expectation-monotonicity, as  player~1's expected payoff is higher for $b_1$ than $a_1$.

 \begin{table}[h!]
    \centering
  \begin{tabular}{c|c|c}
            & $a_2$&$b_2$ \\ \hline
         $a_1$&  $(1.5+w,y)$& $(x,1+y)$\\
         $b_1$&  $(w,1.5+z)$& $(1+x,z)$\\
    \end{tabular}
    \caption{Equivalence class of games}
    \label{tab:counterexample}
\end{table}

It is easy to see that neither player faces a choice between strictly FOSD-ordered actions in any of these games. 
 Hence neither strategic invariance nor distribution-monotonicity is violated here. Finally, one can set $S$ to be (say) the Nash solution on all other strategic-equivalence classes; since we are not assuming bracketing, there is no interaction across classes. Thus $S$ violates expectation-monotonicity, while satisfying distribution-monotonicity and strategic invariance.

This example highlights the importance of bracketing in connecting strategic invariance to expectation-monotonicity. Intuitively, there is a link between strategic invariance and expectation-monotonicity. However, this intuition requires agents to have a preference over lotteries, which is not assumed, and is achieved in our setting by the addition of bracketing.

 \begin{proof}[Proof of Theorem~\ref{th_strategic_equiv}]

Let $X,Y \in \Delta_{\mathbb Q}$. As previously, we represent $X$ and $Y$ as random variables on $(\Omega=\{1,\dots,m\},2^\Omega,\nu)$, where $\nu$ is uniform. We consider the two families of SREs characterized by Theorem~\ref{th_sre}. First, we consider $S$ that is a refinement of a $\Nash_\Phi$ equilibrium. 

Let $r<\Phi[X]$ and $\varepsilon\in \left(0,\frac{\Phi[X]-r}{\max|X|+1}\right)$, and let $G_{r,X,\varepsilon}=(A,u)$ be the game defined in~\eqref{eq:G_rXeps}.\footnote{As in Appendix~\ref{ap_test}, we let $G_{r,X,\varepsilon}$ be equal to some $G_{r,x,\varepsilon}$ for $x = (x_1,\ldots,x_m)$ such that the uniform distribution over $\{x_1,\ldots,x_m\}$ is $X$.} We will consider the probability that player~$1$ chooses the (almost) sure things $(a_r,\,\cdot\,)$. Note that Lemma~\ref{lm_lotteries_games_FOSD_undominated} applies since players never play first-order dominated actions in any $\Nash_{\Phi}$ equilibrium. We will show that $\Phi$ is additive for all lotteries (rather than only independent ones).

We first show super-additivity. Let $p \in S(G_{r,X,\varepsilon})$. Since $p$ is a $\Nash_\Phi$ equilibrium and \[\Phi[r+\varepsilon X] \leq r + \Phi[\varepsilon \max|X|] \leq r+\Phi\left[(\Phi[X]-r)\frac{\max|X|}{\max|X|+1}\right]<r+\Phi[X]-r=\Phi[X],\] $p_1(a_r,\pi)=0$ for all $\pi$, and $p_1(a_X,\sigma)>0$ for some $\sigma$. Consider the game $(A,v),$ where $v_1(a_1,a_2)=u_1(a_1,a_2) + Y(\sigma(a_2))$ for each $(a_1,a_2) \in A_1 \times A_2$, and $v_i=u_i$ for $i\neq 1$. By strategic invariance $p\in S(A,v)$, and by Lemma~\ref{lm_lotteries_games_FOSD_undominated}, $p_2$ is the uniform distribution. 

Note that $v_1((a_X,\sigma),p_2)$ is distributed as $X+Y$, and $v_1((a_r,\sigma),p_2)$ is distributed as $r + Y + \varepsilon X$. Since $p_1((a_r,\sigma))=0$ and $p_1((a_X,\sigma)) > 0$, it must be that $\Phi[r+Y+\varepsilon X]=r+\Phi[Y+\varepsilon X] \leq \Phi[X+Y]$. Taking $\varepsilon \to 0$, we see that $r + \Phi[Y] \leq \Phi[X+Y]$.\footnote{Note that $\Phi[Y]+\varepsilon \min X = \Phi[Y+\varepsilon \min X] \leq \Phi[Y+\varepsilon X] \leq \Phi[Y+\varepsilon \max X]=\Phi[Y]+\varepsilon \max X$, so $\lim_{\varepsilon\to 0}\Phi[Y+\varepsilon X]=\Phi[Y]$.} Since $r<\Phi[X]$ was arbitrary, it follows that $\Phi[X]+\Phi[Y]\leq \Phi[X+Y]$.

We next show sub-additivity. Let $r>\Phi[X]$ and $\varepsilon\in \left(0,\frac{r-\Phi[X]}{\max|X|+1}\right)$, and let $G_{r,X,\varepsilon}=(A,u)$ be the game defined in \eqref{eq:G_rXeps}. Let $p \in S(G_{r,X,\varepsilon})$. Since $p$ is a $\Nash_\Phi$ equilibrium and
\begin{align*}
\Phi[r+\varepsilon X] = r + \Phi[\varepsilon X] & \geq r+\Phi[-\varepsilon \max|X|] \\  & \geq r+\Phi\left[-(r-\Phi[X])\frac{\max|X|}{\max|X|+1}\right]>r-r+\Phi[X]=\Phi[X],    
\end{align*}
$p_1(a_X,\pi)=0$ for all $\pi$, and $p_1(a_r,\sigma)>0$ for some $\sigma$. Consider the game $(A,v),$ where $v_1(a_1,a_2)=u_1(a_1,a_2) + Y(\sigma(a_2))$ for each $(a_1,a_2) \in A_1 \times A_2$, and $v_i=u_i$ for $i\neq 1$. By strategic invariance, $p \in S(A,v)$ and by Lemma~\ref{lm_lotteries_games_FOSD_undominated}, $p_2$ is the uniform distribution. 

Note that $v_1((a_X,\sigma),p_2)$ is distributed as $X+Y$, and $v_1((a_r,\sigma),p_2)$ is distributed as $r + Y + \varepsilon X$. Since $p_1((a_X,\sigma))=0$ and $p_1(a_r,\sigma) > 0$, it must be that $\Phi[r+Y+\varepsilon X]=r+\Phi[Y+\varepsilon X] \geq \Phi[X+Y]$. Then by an analogous argument for sub-additivity, we have
$\Phi[X]+\Phi[Y] \geq \Phi[X+Y]$. Thus $\Phi[X+Y]=\Phi[X]+\Phi[Y]$.
By Lemma~\ref{lm_additive_MAS}, $\Phi$ is the expectation on $\Delta_{\mathbb Q}$, so by Lemma~\ref{lm_rational_MAS}, $\Phi$ is the expectation.

\medskip
We next consider the case where $S$ is a refinement of some $\LP$ equilibrium. If $\lambda=0$, the result holds trivially. Consider then $\lambda > 0$. Note we can apply Lemma~\ref{lm_lotteries_games} since $\LP$ equilibrium satisfies distribution-neutrality. Given the game $H_{r,X}=(B,v)$ defined in \eqref{eq_HrX} with $r=\Phi[X]$, we define a new game $(B,w)$ by $w_1(b_1,b_2)=v_1(b_1,b_2)+Y(b_2)$, and $w_i=v_i$ for $i\neq 1$. Let $p \in S(B,w)$. By strategic invariance $p \in S(B,v)$, so $p_2$ is uniform by Lemma~\ref{lm_lotteries_games} and $w_1(b_r,p_2)$ is distributed as $r+Y$, while $w_1(b_X,p_2)$ is distributed as $X+Y$. 

Since $r=
\Phi[X]$ and $p$ is an $\LP$ equilibrium, we have \[p_1(b_r) = p_1(b_X) \ra \Phi[X+Y]=\Phi[r+Y]=\Phi[Y]+r=\Phi[X]+\Phi[Y].\]

Thus $\Phi$ is additive for all lotteries and is therefore the expectation by Lemma~\ref{lm_additive_MAS}. 

\end{proof}

\section{Non-Expected-Utility Properties of Statistic Response  Equilibria}\label{app_CARA}

In this section we show that SREs are incompatible with expected-utility maximization, except for the particular case in which the statistic $\Phi$ is of the form $K_a[X]=\frac{1}{a}\log \E\left[\ee^{aX}\right]$ for some $a \in \R$.

\begin{definition}
   A solution concept $S$ \emph{is compatible with expected utility} if 
   there exists a strictly increasing continuous function $f\colon \R\to \R$ such that $$\E \big[f\big(u_i(a_i,p_{-i})\big)\big]>\E \big[f\big(u_i(b_i,p_{-i})\big)\big]\qquad \text{implies}\qquad p_i(a_i)\geq  p_i(b_i)$$ for every game  $G=(A,u)$, solution $p\in S(G)$, player~$i$, and actions $a_i,b_i\in A_i$.
\end{definition}
That is, $S$ is compatible with expected utility if it satisfies expectation-monotonicity after a monotone reparameterization of payoffs.
\begin{proposition}\label{th_reparam}
An SRE  is compatible with expected utility if and only if it is either $\Nash_{K_a}$ or $\LQRE_{\lambda K_a}$ for some $a \in \R$ and~$\lambda \geq 0$.    
\end{proposition}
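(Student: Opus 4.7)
The plan is to treat the two directions separately: the ``if'' direction is a direct verification, while the ``only if'' direction reduces to the classical characterization of CARA preferences. For the ``if'' direction, observe that for each $a \in \R$ there is a strictly increasing continuous utility whose expected-utility preference agrees with $K_a$: take $f_a(x) = \ee^{ax}$ for $a > 0$, $f_a(x) = -\ee^{ax}$ for $a < 0$, and $f_a(x) = x$ for $a = 0$. In each case $K_a[X] > K_a[Y] \Leftrightarrow \E[f_a(X)] > \E[f_a(Y)]$, so compatibility follows directly from the defining formulas of $\Nash_{K_a}$ and $\LQRE_{\lambda K_a}$ (in the Nash case, if $\E[f_a(X)]>\E[f_a(Y)]$ then $K_a[X]>K_a[Y]$, so $Y$ is not a best response whenever $X$ is not, and we get the required weak inequality on mixing probabilities).

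For the ``only if'' direction, fix the SRE together with its statistic $\Phi$ and a witnessing strictly increasing continuous $f$. The first step is to show that $\Phi[X] = V_f[X] := f^{-1}(\E[f(X)])$ for every $X \in \Delta_\fin$. In the $\LP$ case with $\lambda > 0$, I would plug the test game $H_{r,X}$ of Appendix~\ref{ap_test} into compatibility: by Lemma~\ref{lm_lotteries_games} player~2 mixes uniformly, so the actions $b_X$ and $b_r$ face lotteries $X$ and $r$; the $\LP$ formula gives $p_1(b_X) \geq p_1(b_r) \Leftrightarrow \Phi[X] \geq r$, while compatibility gives the analogous inequality with $V_f[X]$ in place of $\Phi[X]$, so sweeping $r$ across $\R$ forces $\Phi[X] = V_f[X]$. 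In the $\Nash_\Phi$ case I would run the analogous argument on $G_{r,X,\varepsilon}$, using Lemma~\ref{lm_lotteries_games_FOSD_undominated} to guarantee uniform mixing by player~2 and passing to the $\varepsilon \to 0$ limit, which squeezes $\Phi[X]$ and $V_f[X]$ together from both sides. The degenerate case $\lambda = 0$ of $\LP$ is trivial since every profile is a solution.

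The second step exploits the additivity of $\Phi$ on independent sums, applied with a deterministic lottery $c$, to get $V_f[X+c] = \Phi[X+c] = \Phi[X] + c = V_f[X] + c$ for every $X$ and $c \in \R$. This is the shift-equivariance of the certainty equivalent, i.e., constant absolute risk aversion for $f$. A standard functional-equation argument (specializing to two-point lotteries and using only continuity of $f$) then forces $f$ to be affine or an affine transformation of an exponential, so $V_f$ equals $K_0 = \E$ or $K_\gamma$ for some $\gamma \neq 0$, and hence $\Phi = K_a$ for some $a \in \R$; the SRE is then $\Nash_{K_a}$ or $\LQRE_{\lambda K_a}$. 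The main obstacle is carrying out the CARA characterization without assuming differentiability of $f$, which requires a functional-equation argument from first principles rather than the textbook Arrow--Pratt derivation; a secondary point is that the conclusion automatically excludes $a = \pm\infty$ and non-trivial mixtures over $\overline{\R}$, since for any such $\Phi$ one can exhibit two-point lotteries with equal $\Phi$-value but different $\E[f(\cdot)]$, contradicting $\Phi = V_f$.
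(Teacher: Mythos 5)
Your proof is correct, and its first half coincides with the paper's: the ``if'' direction via explicit CARA utilities, and the identification $\Phi = V_f$ (the paper's $\Psi[X]=f^{-1}(\E[f(X)])$) via the test games $H_{r,X}$ and $G_{r,X,\varepsilon}$ with Lemmas~\ref{lm_lotteries_games} and~\ref{lm_lotteries_games_FOSD_undominated}, sweeping $r$ and taking $\varepsilon\to 0$, exactly as in Appendix~\ref{app_CARA}. Where you genuinely diverge is the final step that forces $\Phi=K_a$. The paper observes that $\Psi$ inherits the von Neumann--Morgenstern independence property for probability mixtures ($\Psi[X]\geq\Psi[Y]$ iff $\Psi[X_\beta Z]\geq\Psi[Y_\beta Z]$) and invokes Proposition~F.1 of \cite{mu2024monotone}, with a footnote explaining why the two-sided version of independence excludes $K_{\pm\infty}$. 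You instead use the additivity of $\Phi$ against deterministic lotteries to get translation-equivariance $V_f[X+c]=V_f[X]+c$, i.e., the translativity of the quasi-arithmetic mean $M_f$, and then appeal to the classical Nagumo--Kolmogorov/Hardy--Littlewood--P\'olya characterization: a continuous strictly monotone $f$ whose quasi-arithmetic mean is translative is affine or affine-exponential, so $V_f=K_0$ or $K_\gamma$. Both routes are valid. Yours is more self-contained (it bypasses the independence machinery of \cite{mu2024monotone} in favor of a classical functional-equation result that indeed needs no differentiability, via $f(x+t)=\alpha(t)f(x)+\beta(t)$), and it makes transparent that it is the \emph{additivity} of monotone additive statistics, not independence, that is doing the work; the exclusion of $a=\pm\infty$ then comes for free. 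The paper's route is shorter given the imported proposition. Two small points you should still make explicit: the test games only pin down $\Phi=V_f$ on $\Delta_{\Q}$, so you need the monotone-extension step (the paper's Lemmas~\ref{lm_lim_Phi_converge} and~\ref{lm_rational_MAS}) to pass to $\Delta_\fin$; and your functional equation is obtained a priori only for rational mixture weights, which suffices by density and the continuity of $f$.
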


\begin{proof}[Proof of Proposition~\ref{th_reparam}]
An SRE $S$ is either $\Nash_{\Phi}$ or $\LQRE_{\lambda\Phi}$ for some monotone additive statistic $\Phi$ and $\lambda \geq 0$. For $\Phi = K_a$ and $a \in \R$, it is immediate that $S$ is compatible with expected utility, since we can take $f$ to be the CARA utility. 
 
It is left to show that if $S$ is compatible with expected utility, then $S$ is either $\Nash_{K_a}$ or $\LQRE_{\lambda K_a}$. For $\LQRE_{\lambda\Phi}$ where $\lambda = 0$, the result is trivial. Otherwise, suppose $f \colon \R \to \R$ witnesses that $S$ is compatible with expected utility. Let $S$ have statistic $\Phi=\int K_a\, \dd \mu(a)$. We define the statistic $\Psi$ by $\Psi[X]=f^{-1}(\E[f(X)])$. For any game $G$ and $p\in S(G)$, $\argmax_a p_1(a) \cap \argmax_a \Psi[u_1(a,p_{-1})] \neq \emptyset$. Moreover, since $S$ is an SRE with statistic $\Phi$, $\argmax_a p_i(a) \subset \argmax_a \Phi[u_1(a,p_{-1})]$. Thus there is an action $a \in A_1$ that maximizes both $\Phi$ and $\Psi$ under $p_{-1}$. 

Let $X \in \Delta_\Q$ be the uniform distribution over the coordinates of $x \in \R^m$ and fix $r \in \R$ and $\varepsilon>0$. Suppose first that $S=\Nash_\Phi$ and let $p\in S(G_{r,x,\varepsilon})$. By Lemma~\ref{lm_lotteries_games_FOSD_undominated}, $p_2$ is the uniform distribution, and so an action of the form $(a_x,\pi)$ maximizes $\Phi$ under $p_2$ if and only if $\Phi[X] \ge \Phi[r+\varepsilon X]$, and likewise for $\Psi$. Since there is an action that maximizes both $\Phi$ and $\Psi$, by taking $\varepsilon\to 0$, we see that $\Phi[X] \ge r$ if and only if $\Psi[X] \ge r$, i.e. $\Phi|_{\Delta_\Q}=\Psi|_{\Delta_\Q}$.

Likewise, if $S=\LP$ for some $\lambda>0$, we consider any $p\in S(H_{r,x})$ so that $u_1(b_x,p_2)=X$ by Lemma~\ref{lm_lotteries_games} . Thus $b_x$ maximizes $\Phi$ if and only if $\Phi[X] \ge r$, and likewise for $\Psi$. Since there is an action that maximizes both $\Phi$ and $\Psi$, we again conclude that $\Phi|_{\Delta_\Q}=\Psi|_{\Delta_\Q}$.

By Lemma~\ref{lm_rational_MAS}, there is a statistic on the set of compactly supported lotteries that is monotone with respect to first-order dominance and coincides with $\Phi$ on $\Delta_\Q$. Moreover, by \eqref{eq_X_n_increasing} and Lemma~\ref{lm_lim_Phi_converge} this extension is unique. We thus conclude that $\Psi[X]=\Phi[X]$ for all compactly supported $X$. 

Since $f \circ \Psi(X) = \E[f(X)]$, $\Psi$ satisfies independence, i.e., for all compactly supported lotteries $X,Y,Z$ and all $\beta \in (0,1)$, $\Psi[X] \ge \Psi[Y]$ if and only if  $\Psi[X_\beta Z] \ge \Psi[Y_\beta Z]$, where $X_\beta Z$ denotes the compound lottery that equals $X$ with probability $\beta$ and $Z$ with probability $1-\beta$, and likewise for $Y_\beta Z$. Hence, by Proposition~F.1 of \cite*{mu2024monotone}, $\Psi[X]=K_a[X]$ for $a \in \R$.\footnote{\cite*{mu2024monotone} consider a weakening of the independence axiom: $\Phi[X] \ge \Phi[Y] \implies \Phi[X_{\beta} Z] \ge \Phi[Y_{\beta} Z]$. They show that if $\Phi$ is a monotone additive statistic that satisfies this property then  $\Phi=K_a$ for $a\in \overline \R$. The stronger independence axiom rules out $K_{-\infty}$ and $K_{+\infty}$ since if $a\in \{-\infty,\infty\}$ and $K_a(Z)>K_a(Y)>K_a(X)$ then $K_a(X_{\beta}Z)=K_a(Y_{\beta}Z)$.}
\end{proof}

By this result, in $\LQRE_{\lambda K_a}$ players exhibit behavior that is compatible with expected utility maximization. To see this directly, we note that their choice probabilities are given by
\begin{align*}
    p_i(a_i) \propto  
\ee^{\frac{\lambda}{a}\log\E\big[\exp(a \cdot u_i(a_i,p_{-i}))\big]}.
\end{align*}
Note that this is not the same as  logit responding to the transformed payoffs, as in \cite*{goeree2003risk}, which would require
\begin{align*}
    p_i(a_i) \propto \ee^{\frac{\lambda}{a}\E\big[\exp(a \cdot u_i(a_i,p_{-i}))\big]}.
\end{align*}
Their goal was to introduce risk averse behavior to $\LQRE_\lambda$. We conclude that this cannot be achieved via payoff transformations without giving up on bracketing, while $\LQRE_{\lambda K_a}$ can incorporate risk attitudes while maintaining bracketing.  

\section{Additivity as a Notion of Separability}\label{app_generalized_product_games}
 
This appendix makes precise the utility interpretation of payoffs outlined in \S\ref{sec_model}. In this interpretation, payoffs are utilities representing players' ordinal preferences over an underlying space of physical outcomes that games can produce. The challenge is to understand why these utilities are added in composite games, as addition is not an obviously meaningful operation to apply to utilities. 

In our model, we define games using payoffs associated with strategy profiles. We now discuss the origin of these payoffs. Consider, for example, an experimenter who designs a game. In reality, the experimenter specifies the physical outcomes of players' actions, and has no direct control over their preferences. Accordingly, in the background of our model there is a space $\cO$ of \emph{physical outcomes} (for instance,  prizes, allocations, experiences). We assume that $\cO$ is a connected topological space that admits a dense countable subset. For each player~$i$, there is a continuous preference relation $\succsim_i$ over $\cO$; we assume that $\succ_i$ is non-trivial, meaning that $s \succ_i t$ for some $s,t \in \cO$.
Each game is induced by a \emph{game form}: a finite set of action profiles $A = \prod_i A_i$ together with an outcome map $g\colon A \to \cO$ \cite[see, e.g.,][]{osborne1998games}. The game form captures how players' actions translate into physical outcomes.

Separability of strategic interactions can be modeled at the level of physical outcomes. For this purpose, we endow $\cO$  with a continuous, associative \emph{juxtaposition} operation~$\oplus$. We interpret  $o \oplus o'$ as the outcome of receiving $o$ in one interaction and $o'$ in a separate, unrelated one. This interpretation is formally captured by the  following \emph{separability} assumption: \begin{align}\label{eq_separability}
o_1 \succsim_i o_2 \ \iff \  o_1 \oplus o \succsim_i o_2\oplus o,\ \ \  \text{ and }\ \ \ o_1\succsim_i o_2 \ \iff \  o\oplus o_1\succsim_i o\oplus o_2
\end{align}
for all  outcomes $o_1,o_2,o\in\cO$ and players $i$. Separability excludes wealth effects across interactions, under which the ranking of $o_1$ and $o_2$ could change after both are juxtaposed with the same additional outcome.

Two game forms $(A,g)$ and $(B,h)$ played simultaneously can be combined into the composite game form $(A\times B,\, g\oplus h)$, where $(g \oplus h)(a,b) = g(a)\oplus h(b)$: each action profile of the composite interaction produces the juxtaposition of the outcomes of its components. Assumption~\eqref{eq_separability}  implies that the composite game captures the separability of the interactions. We require juxtaposition to be associative so that a three-fold composite is unambiguous: composing three game forms in either grouping yields the same game form.

To get a game with numerical payoffs from a game form, one needs to choose a cardinal representation $u_i \colon \cO \to \R$ of each preference $\succsim_i$. We show that the separability assumption~\eqref{eq_separability} implies that a preference $\succsim_i$ admits a distinguished representation in which the payoffs have cardinal significance: adding them corresponds to combining outcomes via juxtaposition. 
\begin{proposition}
\label{prop:additive}
There is a continuous representation $u_i\colon\cO\to\R$ of $\succsim_i$ such that
\begin{align*}
    u_i(o\oplus o') = u_i(o)+u_i(o') \qquad \text{for all } o,o' \in \cO.
\end{align*}
This representation is unique up to multiplication by a positive constant.
\end{proposition}
We call the representation $u_i$ the \emph{additive cardinalization} of $\succsim_i$. In contrast to von Neumann-Morgenstern utilities, this cardinalization is not driven by preferences over lotteries. Our representation is pinned down, up to scale, by how outcomes combine across unrelated interactions. The connection between separability and additivity established by this proposition is in line with the results of \cite*{debreu1959topological} and \cite*{gorman1968structure}, where separability across the dimensions of a product space plays the role of separability across juxtapositions.
 
This proposition justifies our definition of a composite game. Together with a profile of additive cardinalizations $(u_i)_i$, a game form $(A,g)$ becomes the game $(A,w)$ with payoffs $w_i(a)=u_i(g(a))$, and the composite game form $(A\times B,\,g\oplus h)$ becomes precisely the additive composite game $(A,u\circ g)\otimes(B,u\circ h)$. Additivity of payoffs is therefore not arbitrary but the expression, in the additive cardinalization, of the separability of the underlying preferences over juxtaposed outcomes.\footnote{The proposition does not imply that every payoff vector in $\R^n$ is induced by some outcome in $\cO$. Indeed, if all players share the same preference, their additive cardinalizations are proportional, so $u(\cO)$ lies on a line. Recovering the unrestricted payoff domain requires further assumptions, and the following pair is one example. The first is a cross-player richness condition: for every player $i$ and all $s,t\in\cO$, there exists $o\in\cO$ with $o\sim_i s$ and $o\sim_j t$ for every $j\ne i$; iterating it across players yields $u(\cO)=\prod_{i}u_i(\cO)$. The second is compensability: for every $i$ and all $s,t\in\cO$, there exists $r\in\cO$ with $r\oplus s\sim_i t$. It yields $u_i(\cO)=\R$ for every $i$ and hence $u(\cO)=\R^n$.}

Proposition~\ref{prop:additive} provides an ordinal foundation for the bracketing axiom, by interpreting payoffs as additive cardinalizations of the underlying preferences. 
Theorem~\ref{th_sre} shows that together with another ordinal axiom---distribution-monotonicity---this same cardinalization  organizes players' attitudes toward the endogenous randomness they face in equilibrium: the statistic $\Phi$ is computed from payoffs measured on precisely this scale.

\begin{proof}[Proof of Proposition~\ref{prop:additive}]
Consider a continuous representation $v_i$ which exists by \cite{debreu1953representation}.
By continuity, connectedness of $\cO$, and non-triviality of $\succsim_i$, the image $J=v_i(\cO)$ is an interval. Define a binary operation $\boxplus$ on $J$ by setting, for $s,t\in J$,
\begin{align*}
    s\boxplus t := v_i(o\oplus o'),\qquad\text{for any }o,o'\in\cO\text{ with }v_i(o)=s\text{ and }v_i(o')=t.
\end{align*}
We argue that this is well defined: if $v_i(o)=v_i(\tilde o)$ and $v_i(o')=v_i(\tilde o')$, that is, $o\sim_i\tilde o$ and $o'\sim_i\tilde o'$, then the first equivalence in~\eqref{eq_separability} gives $o\oplus o'\sim_i\tilde o\oplus o'$ and the second gives $\tilde o\oplus o'\sim_i\tilde o\oplus\tilde o'$, so by transitivity $v_i(o\oplus o')=v_i(\tilde o\oplus\tilde o')$, a value in $J$ since $o\oplus o'\in\cO$. The same two equivalences imply that $\boxplus$ is strictly increasing in each argument; and the associativity of $\oplus$ makes $\boxplus$ associative, since with $s=v_i(o)$, $t=v_i(o')$, $r=v_i(o'')$ one has $(s\boxplus t)\boxplus r=v_i\big((o\oplus o')\oplus o''\big)=v_i\big(o\oplus(o'\oplus o'')\big)=s\boxplus(t\boxplus r)$.
Since $v_i$ and $\oplus$ are continuous, so is $\boxplus$.
 
Hence $\boxplus$ is an associative continuous operation on the interval $J$ that is strictly increasing in each argument. By the theorem of \cite*{aczel1948operations} on such operations, there is a continuous and strictly increasing $\varphi\colon J\to\R$ with $s\boxplus t=\varphi^{-1}\big(\varphi(s)+\varphi(t)\big)$ for all $s,t\in J$. Set $u_i:=\varphi\circ v_i$. Then $u_i$ is continuous, represents $\succsim_i$, and is additive:
\begin{align*}
    u_i(o\oplus o')=\varphi\big(v_i(o)\boxplus v_i(o')\big)=\varphi\big(\varphi^{-1}(\varphi(v_i(o))+\varphi(v_i(o')))\big)=u_i(o)+u_i(o').
\end{align*}
Finally, we argue that $u_i$ is unique, up to multiplication by a positive scalar. Indeed, let $\hat u_i$ be another additive cardinalization of $\succsim_i$. Denote the interval $u_i(\cO)$ by $I$ and define $\psi\colon I\to \R$ by $\psi(t)=\hat u_i(o)$ for any $o$ with $u_i(o)=t$. This mapping is well defined, since $u_i(o)=u_i(o')$ implies $o\sim_i o'$ and hence $\hat u_i(o)=\hat u_i(o')$. A similar argument shows that $\psi$ is strictly increasing: $u_i(o)<u_i(o')$ implies $o\prec_i o'$ and hence $\hat u_i(o)<\hat u_i(o')$. Additivity of $u_i$ and $\hat u_i$ gives the Cauchy equation $\psi(x+y)=\psi(x)+\psi(y)$, whose monotone solutions are linear. Thus $\psi(x)=c\,x$ with $c>0$, and so $\hat u_i=c\,u_i$.
\end{proof}

In our definition of a composite game, we add the payoffs of the component games. Proposition~\ref{prop:additive} provides a foundation for this modeling choice through the presence of a physical outcome space endowed with a  juxtaposition operation. We complement this foundation by showing that any alternative operation of combining games, defined directly on payoffs, can be reduced to addition after some reparameterization.  
 
Consider the following alternative definition of the composite game $(C,w)=(A,u) \otimes (B,v)$ in which $C=A\times B$ and
\begin{align*}
    w_i(a,b) = f\Big(u_i(a),\,v_i(b)\Big),
\end{align*}
for some $f \colon \R^2 \to \R$. That is, player~$i$'s payoff in a composite game is determined by payoffs in component games but the dependence can be general. To make this a reasonable model, suppose $f$ is continuous, associative and strictly increasing.

By \cite{aczel1948operations} (which is the main ingredient in the proof of Proposition~\ref{prop:additive}), all such $f$ are of the form
$$f(x,y)=\varphi^{-1}(\varphi(x)+\varphi(y))$$
for some strictly increasing continuous $\varphi \colon \R \to \R$. In other words, up to an order-preserving reparameterization of the reals, $f$ is just addition.

\end{document}